\newtheorem{theorem}{Theorem}[section]
\newtheorem{lemma}[theorem]{Lemma}
\newtheorem{proposition}[theorem]{Proposition}
\newtheorem{corollary}[theorem]{Corollary}
\theoremstyle{definition}
\newtheorem{definition}[theorem]{Definition}
\newtheorem{assumption}[theorem]{Assumption}
\theoremstyle{remark}
\newtheorem{remark}[theorem]{Remark}
\numberwithin{equation}{section}
\newcommand{\abs}[1]{\lvert#1\rvert}
\renewcommand{\abs}[1]{\ensuremath \left|#1\right|}         % absolute value
\newcommand{\B}{\ensuremath \mathsf{B}}                    % Black--Scholes operator
\newcommand{\Call}{\ensuremath \mathcal{C}}                    % general call option price
\newcommand{\call}{\ensuremath C}                    % call option price of MMM
\newcommand{\callg}{\ensuremath \mathsf{C}}                    % general call option price
\newcommand{\Cbs}{\ensuremath \mathcal{C}_{\mathrm{BS}}}                    % Black--Scholes call option price in reduced variables
\newcommand{\cbs}{\ensuremath C_{\mathrm{BS}}}                    % Black--Scholes call option price in MMM
\newcommand{\cbsg}{\ensuremath \mathsf{C}_{\mathrm{BS}}}          % general Black--Scholes call option price
\newcommand{\const}{\ensuremath \mathrm{const}}                    % Black--Scholes operator
\renewcommand{\d}{\ensuremath \mathrm{d}}                % differentiation operator
\newcommand{\EE}{\ensuremath \mathbb{E}}                 % expectation operator
\newcommand{\e}{\ensuremath \mathrm{e}}                     % Euler's e
\DeclareMathOperator{\erfc}{\mathrm{Erfc}}                  % complementary error function
\newcommand{\iv}{\ensuremath \phi}                     % implied volatility
\newcommand{\IV}{\ensuremath \psi}                     % implied volatility
\renewcommand{\O}{\ensuremath \mathrm{O}}           % \big O
\newcommand{\PP}{\ensuremath \mathbb{P}}                 % expectation operator
\newcommand{\p}{\ensuremath P}                    % put option price of MMM
\newcommand{\RR}{\ensuremath \mathbb{R}}
\newcommand{\rcall}{\ensuremath \mathcal{R}}       % remainder term in the MMM call price (for large time asymptotics)
\newcommand{\rcalll}{\ensuremath \underline{\mathcal{R}}}       % lower bound of remainder term in the MMM call price (for large time asymptotics)
\newcommand{\rcallu}{\ensuremath \overline{\mathcal{R}}}       % upper bound of remainder term in the MMM call price (for large time asymptotics)
\newcommand{\rcbs}{\ensuremath \mathcal{R}_{\mathrm{BS}}}       % remainder term in the Black-Scholes call price
\newcommand{\vhi}{\ensuremath v^\epsilon}         % upper bound for implied volatility
\newcommand{\z}{\ensuremath Z}                    % zero coupon bond option price of MMM
\newcommand{\zg}{\ensuremath \mathsf{Z}}                    % zero coupon bond option price general
\begin{document}

\title[Small/Large Time Implied Volatilities in the MMM]{The Small and Large Time Implied Volatilities in the Minimal Market Model}

%    Information for first author
\author[Z. Guo]{Zhi Guo}
\address[]{Zhi Guo, School of Computing and Mathematics\\
University of Western Sydney (Parramatta campus) \\
Locked Bag 1797 \\
Penrith, NSW 2751, Australia.
}
\email[]{z.guo@uws.edu.au}

\author[E. Platen]{Eckhard Platen}
\address[]{Eckhard Platen, School of Finance \& Economics and Department of Mathematical Sciences\\
University of Technology, Sydney \\
}
\email[]{eckhard.platen@uts.edu.au}

%    Address of record for the research reported here
%\address{School of Computing and Mathematics, University of Western Sydney, Australia}
%\address{School of Finance and Economics, University of Technology, Sydney, NSW 2000, Australia}
%    Current address
\curraddr{} \email{}
%    \thanks will become a 1st page footnote.
\thanks{We thank Hardy Hulley for his assistance with the calibration of the MMM parameters.}

%    General info
\subjclass[2010]{Primary 91G99; Secondary 62P05}

\date{\today}

%\dedicatory{This paper is dedicated to our advisors.}

\keywords{Small and large time implied volatility, benchmark approach, square-root process, the minimal market
model.}

\begin{abstract}
This paper derives explicit formulas for both the small and large time limits of the
implied volatility in the minimal market model. It is shown that interest rates do impact on the implied volatility in the long run even though they are negligible in the short time limit.
\end{abstract}

\maketitle

\section{Introduction}

Proposed by Platen \cite{platen-01, platen-02} to model well diversified stock indices, the minimal market model (\textbf{MMM}) is a flexible one-factor model for capturing real-world price dynamics.
As a local volatility model underpinned by the square-root process, the MMM is not only complete with respect to hedging but also mathematically tractable, with closed form formulas available for forward rates, zero coupon bonds, digital and European options \cite{platen-heath-06}. Further, the MMM has its own volatility feedback mechanism, so unlike stochastic volatility models it does not need an extra volatility process to generate negative correlation between the local volatility and the index, the so-called leverage effect. In addition, the MMM can be extended to model volatility swaps \cite{chan-platen-2011}, commodities and exchange rates in multicurrency markets; and random scaling and jumps can also be embedded in the model to reflect realistic randomness of the market activity \cite[Chapters 13 and 14]{platen-heath-06}.

More importantly, what sets the MMM apart from the other local/stochastic volatility models is its adoption of the benchmark approach \cite{platen-heath-06}, instead of the usual risk-neutral method for derivatives pricing. The benchmark approach does not assume or rely on the
existence of risk-neutral equivalent martingale measures to exclude arbitrage.
Rather, it achieves the elimination of, so called, strong arbitrage by utilizing the growth optimal portfolio of the market as a benchmark for securities and portfolios. Indeed, despite the nonexistence of equivalent risk-neutral measures in the MMM, the benchmark approach accommodates direct arbitrage-free pricing under the original probability measure associated with the underlying asset \cite[Chapters 10, 13]{platen-heath-06}.

In this article we derive both the small and the large time limits of the
implied volatility in the MMM. The derivation of the small time limit takes advantage of an extended Roper--Rutkowski formula \cite{roper-rutkowski-09} for small time implied volatilities. As explained in Section \ref{sec:rr-generalization}, applying a forward price transform can easily extend the model-free Roper--Rutkowski formula to regimes with nonzero interest rates and dividend yields. In contrast, the derivation of the large time limit is based on direct comparisons with the lower and upper bounds of the implied volatility, where the bounds are established by appealing to the asymptotics of the noncentral chi-square distributions.

Following the breakthrough by Berestycki et al. \cite{ber02, ber04}, small time implied volatility asymptotics have been investigated in \cite{alos-et-al-07, henry-labordere-2008, guo-2009, forde-jacquier-2009a, roper-rutkowski-09, forde-et-al-2010, gatheral-et-al, gao-lee-2011}, to name a few studies in the still expanding literature. Whilst covering a diverse range of models, these studies typically assumed zero interest rates, martingale asset prices, or risk neutral regimes. The exception appears to be the paper of Gao and Lee \cite{gao-lee-2011}, of which we learnt after the completion of our work. In \cite{gao-lee-2011}, nonzero interest rates were explicitly allowed and absorbed into forward prices --- a well-known tool that we also use in \eqref{def:xi} below --- and implied volatilities were expanded in terms of option prices in a model-free manner, like that in \cite{roper-rutkowski-09}.  Yet, it does not appear that their zeroth order expansion \cite[Remark 7.4]{gao-lee-2011} implies the Roper--Rutkowski formula or our small time limit. Separately, the article of Gatheral et al. \cite{gatheral-et-al} had also come to our attention. Our small time limit agrees with theirs \cite[(3.21)]{gatheral-et-al}, although we arrived at our result by using a different pricing approach and different techniques.

Comparing to the studies of the small time asymptotics, research in large time implied volatilities has been a more recent event. Rogers and Tehranchi \cite{rogers-tehranchi-09} and Tehranchi \cite{tehranchi-2009-b} examined martingale models. Forde and his coworkers \cite{forde-jacquier-2009b, forde-et-al-2010, forde-2011a, forde-2011b, forde-2011c, figueroa-lopez-et-al-2011} looked at various stochastic volatility models under the assumption of large-time-large-strike, large-time-large-moneyness, and zero interest rate with fixed strike. Besides the aforementioned small time expansion, Gao and Lee \cite{gao-lee-2011} in the same work obtained formulas for large time and extreme strike expansions of the implied volatility in arbitrary order. However, our large time limit complements as much as it is independent of these works. In particular, our explicit formulas for the benchmark approach based limits have demonstrated that interest rates do impact on implied volatilities in the long run, even though they are negligible in the short time limit, see Theorem \ref{thm:small-iv} and \ref{thm:large-iv}. So far, this characterization of the influence of interest rates on implied volatility has not appeared elsewhere.

The organization of this article is as follows. In Section \ref{sec:results} we set up the model and state the main theorems. In Section \ref{sec:calibration} we present a calibrated implied volatility surface and the corresponding small and large time limits.  The extension of the Roper--Rutkowski formula is given in Section \ref{sec:rr-generalization},
and the proofs of the main theorems are in Sections \ref{sec:proof-thm-short-iv} and \ref{sec:proof-large-iv}. Lastly, some auxiliary results and remarks are collected in Section \ref{sec:appendix-a}--\ref{sec:appendix-e}.

\section{Model and main results}
\label{sec:results}

In the stylized MMM \cite[Chapter 13]{platen-heath-06} there exist a savings account and a diversified
accumulation index approximating the growth optimal portfolio of the market. The value of the savings
account $A_t$ grows according to the function
\begin{equation*} %\label{eq:savings-account}
  A_t = \e^{rt},
  \quad r,t \in [0,\infty),
\end{equation*}
where respectively $r$ and $t$ are the risk-free interest rate and time.
The index price $S_t$ is a square-root process satisfying the equation
\begin{equation} \label{eq:mmm-sde}
  \d S_t =[(r + \sigma^2(S_t,t)] S_t \d t
            + S_t \sigma(S_t,t) \d W_t,
\end{equation}
where $W_t$ is a standard Wiener process on a complete filtered
probability space $(\Omega, \mathcal{A},
\underline{\mathcal{A}},\mathbb{P})$. Dividends for the accumulation index are assumed to be continuously reinvested in the index. The deterministic function $\sigma:
(0,\infty) \times [0,\infty) \to (0,\infty)$ is called the local
volatility; it is defined by
\begin{equation*} %\label{eq:sigma}
%  \sigma(S,t) = \sqrt{\frac{\alpha \e^{(r+\eta)t}}{s}},
%  \quad
  \sigma(S,t) = \sqrt{\alpha \e^{(r+\eta)t}/S},
  \quad (S,t) \in (0,\infty)\times [0,\infty).
\end{equation*}
The strictly positive constants $\alpha$ and $\eta$ are, respectively, the initial value and the net growth rate of the growth optimal portfolio of the market. The local volatility $\sigma$ provides volatility feedback to the index price and produces the often observed leverage effects: relatively high (low) asset price leads to relatively low (high) volatility. Figure \ref{fig:sp500-data-call-implied-vol-1} below displays the leverage effect in the SP500 index.

\begin{center}
\begin{figure}[H]
  % Requires \usepackage{graphicx}
  \includegraphics[width=13cm]{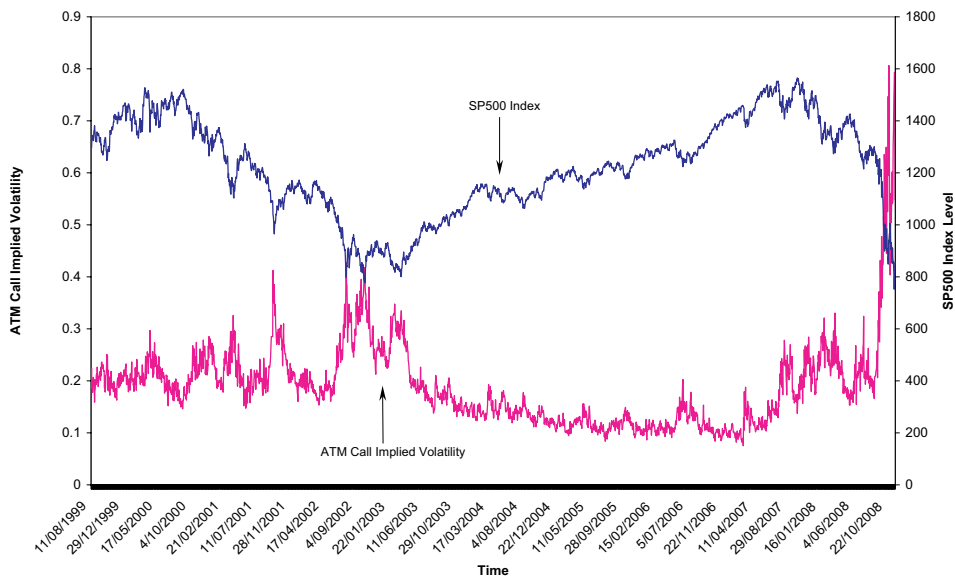} \\
  \caption{Leverage effect: negative correlation between the SP500 Index and its at the money call implied volatility, 11/08/1999--22/10/2008. Data source: Datastream.}\label{fig:sp500-data-call-implied-vol-1}
\end{figure}
\end{center}

%\mbox{}

\subsection{European option prices under the MMM}

Without loss of generality we will consider European option prices at time $t=0$. In \cite[(10.4.1), (13.3.16)]{platen-heath-06}, it is shown that under the MMM, the European call option price $\call$, denominated in units of the domestic currency, is given by
\begin{equation} \label{eq:call-price-expectation}
  \call(K,T) = S \EE \left[ \left. \frac{(S_T - K)_+}{S_T} \right| S_0 =S \right],
  \qquad 0<S, K < \infty, \quad 0 \le T < \infty,
\end{equation}
where $X_+ = \max(X,0)$, $S$ is the current index price, $K$ the strike, and $T$ the time to expiry. Note that no equivalent risk neutral measure exists in the MMM and $\EE$ is taken directly under the measure $\PP$, see \cite[Chapters 10, 13]{platen-heath-06}.  More explicitly, the MMM call price can be written as
\begin{equation} \label{eq:c-1}
    \call(K,T)
        = S \tilde{\chi}^2(y;4,x)
           - K \e^{-rT} \tilde{\chi}^2(y;0,x),
\end{equation}
where
\begin{equation} \label{eq:c-2}
\left\{
\begin{split}
%    \delta
%        & = 4, \\
    x
        & = \frac{S}{\varphi(T)},
            \quad  y = \frac{K \e^{-rT}}{\varphi(T)},
            \quad \varphi(T) = \frac{\alpha}{4\eta} \left(\e^{\eta T}-1\right),
            \quad \alpha, \eta >0, \\
    \tilde{\chi}^2(y;\delta,x)
        & = 1 - \chi^2(y;\delta ,x), \quad \;\; \delta \ge 0, \\
%            \quad \chi^2(y;\delta,x) = \int_0^y p(z;\delta,x) \,\d z, \\
    \chi^2(y;\delta,x)
        & = \int_0^y p(z;\delta,x) \,\d z, \quad \delta>0,
         \quad \chi^2(y;0,x) = \e^{-x/2} + \int_0^y p(z;0,x) \,\d z, \\
    p(y;\delta,x)
        & = \frac{1}{2}
            \left(\frac{y}{x}\right)^{(\delta-2)/4}
            \exp\left(-\frac{x+y}{2}\right)
            I_{(\delta-2)/2}\left(\sqrt{xy}\right), \\
\end{split}
\right.
\end{equation}
and $I_\nu(\cdot)$ is the modified Bessel function of the first
kind with index $\nu$; see \cite[(13.3.17)--(13.3.19)]{platen-heath-06}. For nonnegative $y$, $\delta$, and $x$, the function $\chi^2(y;\delta,x)$ denotes the cumulative distribution function,
evaluated at $y$, of a noncentral chi-square random variable with
$\delta$ degrees of freedom and noncentrality parameter $x$. See e.g. \cite[Chapter
29]{johnson-et-al-95} for details of the distribution with $\delta >0$; see \cite{siegel-79} for the distribution with zero degrees of freedom.

In \cite[(13.3.5), (13.3.20), (13.3.21)]{platen-heath-06}, it is also shown that the European put price $\p$ and zero coupon bond price $\z$ are respectively given by
\begin{eqnarray}
% \nonumber to remove numbering (before each equation)
  \p(K,T)
    &=& K \e^{-rT}
            \left(
                    \chi^2(y;0,x)
                - \e^{-x/2}
            \right)
            - S \chi^2(y;4,x), \label{eq:put} \\
  \z(T) &\equiv & \z(0,T) = \e^{-rT} \left(1-\e^{-x/2}\right),  \label{eq:zero-coupond-bond}
\end{eqnarray}
and the following put-call parity relation holds:
\begin{equation} \label{eq:put-call-parity}
  \call(K,T) + K\z(T) = \p(K,T) + S.
\end{equation}

\subsection{The Black--Scholes price and the implied volatility in the MMM}

Assuming a constant dividend yield $\kappa \in \RR$ and a nonincreasing risk-free zero coupon bond price function $T \mapsto \zg(T)\equiv \zg(0,T)$, a general Black--Scholes call price at time $t=0$ can be represented by the formula
\begin{equation} \label{eq:bs-1}
  \cbsg(K,T;v)
    = S\e^{-\kappa T} N(d_1) - K \zg(T) N(d_2),
\end{equation}
where $v$ is the volatility parameter,
\begin{equation} \label{eq:bs-N-d}
\left\{
  \begin{split}
    N(d) & = \int_{-\infty}^d n(\vartheta)\, \d \vartheta,
        \quad n(\vartheta) = \frac{1}{\sqrt{2\pi}} \e^{-\vartheta^2/2}, \\
    d_1(K,T;v)
        & = \frac{\ln (S/K)
        -\ln \zg(T) - \kappa T + v^2 T/2}{v \sqrt{T}},\\
    d_2(K,T;v)
        & = d_1(K,T;v) - v\sqrt{T}.\\
%    d_2(K,T;v)
%        & = \frac{\ln (S/K)
%        -\ln \zg(T) - \kappa T - v^2 T/2}{v\sqrt{T}}.\\
  \end{split}
\right.
\end{equation}
See e.g. \cite[(17.9)]{epps-09}. In Section \ref{sec:rr-generalization} we will use this call price to extend the Roper--Rutkowski formula \cite{roper-rutkowski-09}.
In the MMM, this general formula can be simplified. Firstly, the MMM is concerned with an accumulation index, so $\kappa=0$ in the model.
Secondly, at time $t=0$ the MMM zero coupon bond price $\z$ and the yield-to-maturity of the bond $\hat{r}$ are related by the identities
\begin{equation} \label{eq:rhat-1}
    \hat{r}(T) \equiv \hat{r}(0,T)
        = - \frac{1}{T}
            \ln \bigl(\z(T)\bigr)
        = r - \frac{1}{T} \ln \left(1- \e^{-x/2}\right),
\end{equation}
where $x$ is defined in \eqref{eq:c-2} and the last equality results from \eqref{eq:zero-coupond-bond}; see \cite[(12.2.57) and (13.3.4)]{platen-heath-06}.
(Brigo and Mercurio \cite[p. 6]{brigo-mercurio-01} call $\hat{r}$ the continuously compounded spot interest rate.
Here we follow the terminology of Musiela and Rutkowski \cite[p. 266]{musiela-rutkowski-98} and call it the yield-to-maturity.)
Therefore, in the MMM the Black--Scholes price \eqref{eq:bs-1} can be simplified to
\begin{equation} \label{eq:bs-2}
  \cbs(K,T;v)
    = S N(d_1) - K\e^{-\hat{r}T} N(d_2),
\end{equation}
where $N(\cdot)$ is the same as before and
\begin{equation*}
\left\{
  \begin{split}
    d_1(K,T;v)
        & = \frac{\ln(S/K) +
        (\hat{r}+v^2/2)T}{v\sqrt{T}},\\
    d_2(K,T;v)
        & = \frac{\ln(S/K) +
        (\hat{r}- v^2/2)T}{v\sqrt{T}}.\\
  \end{split}
\right.
\end{equation*}
We are now ready to define implied volatility.

\begin{definition}[Implied volatility in the MMM] \label{def:implied-vol}
Under the MMM, the implied volatility is defined as the unique
nonnegative function $(K,T) \mapsto
\iv(K,T)$ satisfying the equation
\begin{equation} \label{eq:iv-def}
  \call(K,T)
    = \cbs\bigl(K,T;\iv(K,T)\bigr)
\end{equation}
for all $K, T \in (0,\infty)$.
\end{definition}
For $0< T <\infty$, the existence and uniqueness of the
implied volatility $\iv$ is guaranteed by the implicit function
theorem. To see this, let $J = \call(K,T)- \cbs(K,T;v)$. Then the Jacobian
determinant
\begin{equation*}
 \abs{J_v}= \partial_v \cbs(K,T;v) = S n\bigl(d_1(K,T;v)\bigr)\sqrt{T}
\end{equation*}
is strictly positive for all $0 <T <\infty$. However, as $T \to 0$ or $T \to \infty$, the Jacobian
determinant becomes zero. So it is not apparent that the
implied volatility possesses a limit in small time, by which we
mean $\lim_{T \to 0} \iv$, or a limit in large time, by which we mean $\lim_{T \to \infty} \iv$.

\begin{remark} \label{remark:iv-def}
For any finite $T>0$, the existence and uniqueness of the implied volatility can also be deduced by using the general arbitrage bounds for call price and the monotonicity of $\cbsg(K,T;v)$ in $v$; see Section \ref{sec:rr-generalization} below. We omit arbitrage bounds in the definition of the implied volatility because they are automatically satisfied by the MMM call price $\call$; see Step (i) of the proof in Section \ref{sec:proof-thm-short-iv}.
\end{remark}

\subsection{Main results}

For small time asymtotics we have the following theorem:
\begin{theorem} \label{thm:small-iv}
Under the MMM, the implied volatility has the small time limit
\begin{equation} \label{eq:small-iv}
  \lim_{T \to 0} \iv(K,T)
    = \frac{\sqrt{\alpha}\ln(S/K)}{2(\sqrt{S} - \sqrt{K})},
     \qquad \mbox{$K \in (0,\infty)$.}
\end{equation}
\end{theorem}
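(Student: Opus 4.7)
My strategy combines the extended Roper--Rutkowski formula (to be developed in Section \ref{sec:rr-generalization}) with the explicit noncentral chi-square representation \eqref{eq:c-1}--\eqref{eq:c-2} of the MMM call price. The extended formula is model-free: once $r$ has been absorbed into the forward price $F_T=S/\z(T)$, the small-time implied volatility for $K\neq S$ is determined by the exponential decay rate of the option's time value. Since \eqref{eq:zero-coupond-bond} yields $\z(T)\to 1$ and hence $F_T\to S$ as $T\to 0$, the task reduces to estimating $\call(K,T)-(S-K)_+$ in the MMM.

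First I would dispose of the preliminary Step (i) anticipated in Remark \ref{remark:iv-def}, namely the arbitrage bounds $(S-K\z(T))_+\le\call(K,T)\le S$; both follow immediately from \eqref{eq:call-price-expectation} (the upper bound since $(S_T-K)_+/S_T\le 1$, the lower via Jensen's inequality applied to $x\mapsto (x-K)_+/x$). These bounds ensure that $\iv(K,T)$ is well-defined for every $T>0$ and that the Roper--Rutkowski inversion applies. The heart of the proof is then an accurate small-$T$ expansion of the time value. From \eqref{eq:c-2}, $x=S/\varphi(T)$ and $y=Ke^{-rT}/\varphi(T)$, with $\varphi(T)\sim\alpha T/4$, so both parameters of the noncentral chi-square distributions diverge at rate $1/T$ while $y/x\to K/S$. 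Inserting the Bessel asymptotic $I_\nu(z)\sim e^z/\sqrt{2\pi z}$ into the density $p(y;\delta,x)$ and estimating the resulting integral by Laplace's method yields
\begin{equation*}
\call(K,T)-(S-K)_+ \;=\;\exp\!\left(-\frac{2(\sqrt S-\sqrt K)^2}{\alpha T}+o(1/T)\right),
\end{equation*}
the leading rate arising from $(\sqrt x-\sqrt y)^2/2\sim 2(\sqrt S-\sqrt K)^2/(\alpha T)$. Because $e^{-rT}=1+O(T)$ and $e^{\eta T}-1=\eta T+O(T^2)$, the parameters $r$ and $\eta$ affect only subleading terms, which is the structural reason the risk-free rate disappears from the small-time limit.

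Combining this decay rate with the Roper--Rutkowski inversion, which equates it in the limit with $(\ln(S/K))^2/(2\iv^2 T)$, yields $\lim_{T\to 0}\iv(K,T)^2=\alpha(\ln(S/K))^2/\bigl(4(\sqrt S-\sqrt K)^2\bigr)$ and hence the stated formula. The principal obstacle is obtaining the exponential rate in the noncentral chi-square asymptotics uniformly in the two regimes $K<S$ (left tail, $y<x$) and $K>S$ (right tail, $y>x$), and absorbing the atomic term $e^{-x/2}$ that appears in the $\delta=0$ case; once these are in hand, the inversion and the algebraic simplification to $\sqrt{\alpha}\ln(S/K)/\bigl(2(\sqrt S-\sqrt K)\bigr)$ are routine.
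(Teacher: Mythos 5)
Your overall strategy---verify the hypotheses of the extended Roper--Rutkowski formula, extract the exponential decay rate of the time value from the noncentral chi-square tails, and invert---is the same as the paper's, and the rate $2(\sqrt S-\sqrt K)^2/(\alpha T)$ you identify is the correct one. However, there is a genuine error in the in-the-money case. The extended formula \eqref{eq:rr-formula-extended} requires the decay rate of $\call(K,T)-(S\e^{-\kappa T}-K\z(T))_+$, not of $\call(K,T)-(S-K)_+$, and for $S>K$ with $r>0$ these two quantities have completely different orders of magnitude: since
\begin{equation*}
  \call(K,T)-(S-K) \;=\; \bigl[\call(K,T)-S+K\z(T)\bigr] \;+\; K\bigl(1-\z(T)\bigr),
\end{equation*}
and $1-\z(T)=1-\e^{-rT}(1-\e^{-x/2})\sim rT$, your quantity $\call(K,T)-(S-K)_+$ is of order $KrT$, which swamps the exponentially small bracketed term. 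Your displayed expansion $\call(K,T)-(S-K)_+=\exp(-2(\sqrt S-\sqrt K)^2/(\alpha T)+o(1/T))$ is therefore false in the money; feeding the true order $KrT$ into the inversion gives $-2T\ln(KrT)\to 0$ and hence an infinite limit, not \eqref{eq:small-iv}. The sentence ``$r$ and $\eta$ affect only subleading terms'' is precisely where the argument breaks: a term of order $T$ is not subleading relative to $\exp(-c/T)$. The fix, which is what the paper does in Step (iv), is to apply put--call parity \eqref{eq:put-call-parity} so that $\call(K,T)-(S-K\z(T))=\p(K,T)$, and then show $-2T\ln\p(K,T)\to 4(\sqrt S-\sqrt K)^2/\alpha$. (For $S<K$ your reduction is sound, since $(S-K\z(T))_+=0=(S-K)_+$ for small $T$.)

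Two smaller omissions: the at-the-money case $K=S$ uses a different branch of \eqref{eq:rr-formula-extended}, namely $\iv(S,T)\sim\sqrt{2\pi}\,\call(S,T)/(S\sqrt T)$, and the limit $\sqrt{\alpha/S}$ requires its own computation (the paper's Step (ii)); your proposal does not address it, and it does not follow formally from letting $K\to S$ in the off-the-money formula. Also, applying Theorem \ref{thm:roper-rutkowski} requires checking not just the arbitrage bounds but also conditions \eqref{eq:cond-callg-2} and \eqref{eq:cond-callg-3}; the monotonicity of $T\mapsto\call(K,T)$ in particular is not automatic and is verified in the paper via the sign of $\call_T$ in \eqref{eq:cT}. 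Your lower-bound argument, by the way, works via the pointwise inequality $(s-K)_+/s\ge 1-K/s$ together with $S\,\EE[1/S_T]=\z(T)$ (this is not Jensen's inequality---the map $s\mapsto(s-K)_+/s$ is neither convex nor concave), and is arguably cleaner than the paper's direct comparison of the chi-square integrals.
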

This theorem makes clear that the risk-free rate does not affect the implied volatility in the small time limit. It confirms the intuition that the time value of money diminishes in infinitesimal time spans and thus has negligible bearing on the option price. The theorem is proved in Section \ref{sec:proof-thm-short-iv}.

For large time asymtotics we have the following theorem:
\begin{theorem} \label{thm:large-iv}
Under the MMM, the implied volatility has the large time limit
\begin{equation} \label{eq:large-iv}
  \lim_{T \to \infty} \iv(K,T)
    = \sqrt{2(3-2\sqrt{2})(r+\eta)}, \qquad \mbox{$K \in (0,\infty)$.}
\end{equation}
\end{theorem}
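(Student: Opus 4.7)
The plan is to characterize the large-$T$ implied volatility by matching the exponential decay rate of the MMM put price against that of the Black--Scholes put price. Since the Black--Scholes put is strictly increasing in $v$, this monotonicity, combined with a comparison of decay rates, is enough to pin down the limit without solving for $\iv$ exactly at each finite $T$. I would work with the put rather than the call because both sides vanish as $T\to\infty$, whereas the two call prices both tend to $S$ and matching them would require a second-order expansion.

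First I would extract the leading large-$T$ behavior of $\p(K,T)$ from \eqref{eq:put}. Since $\varphi(T)=(\alpha/(4\eta))(e^{\eta T}-1)$, both $x=S/\varphi(T)\sim (4\eta S/\alpha)e^{-\eta T}$ and $y=Ke^{-rT}/\varphi(T)\sim (4\eta K/\alpha)e^{-(r+\eta)T}$ tend to $0$. Expanding the modified Bessel factor $I_{(\delta-2)/2}(\sqrt{xz})$ in $p(z;\delta,x)$ for small argument and integrating in $z\in(0,y)$ produces the expansions $\chi^2(y;0,x)-e^{-x/2}=xy/4+O((xy)^2)$ and $\chi^2(y;4,x)=y^2/8+O(xy^2+y^3)$. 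Substituting the exponential $T$-dependence of $x$ and $y$, the two pieces of $\p(K,T)$ decay at the same rate $e^{-2(r+\eta)T}$ with coefficients $4\eta^2SK^2/\alpha^2$ and $2\eta^2SK^2/\alpha^2$ respectively, leaving
\begin{equation*}
 \p(K,T) = \frac{2\eta^2 SK^2}{\alpha^2}\, e^{-2(r+\eta)T}\bigl(1+o(1)\bigr).
\end{equation*}

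Next I would read off $\hat{r}(T)\to r+\eta$ from \eqref{eq:rhat-1} and, using the Mill's-ratio estimate $N(-z)=n(z)z^{-1}(1+O(z^{-2}))$ together with the algebraic identity $Ke^{-\hat{r}T}n(d_2)=(K^2/S)n(d_1)$, establish that for every fixed $v$ with $v^2/2<r+\eta$ the Black--Scholes put satisfies
\begin{equation*}
 P_{\mathrm{BS}}(K,T;v) \asymp T^{-1/2}\, e^{-\lambda(v)T}, \qquad \lambda(v) := \frac{(r+\eta+v^2/2)^2}{2v^2}.
\end{equation*}
The equation $\lambda(v)=2(r+\eta)$ reduces, with $a=r+\eta$ and $b=v^2/2$, to $b^2-6ab+a^2=0$ with roots $b=(3\pm 2\sqrt{2})a$; only the smaller root is compatible with the regime $v^2/2<r+\eta$, and it gives $v_\ast := \sqrt{2(3-2\sqrt{2})(r+\eta)}$. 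On $(0,\sqrt{2(r+\eta)})$ the function $\lambda$ is strictly decreasing through the value $2(r+\eta)$ at $v=v_\ast$.

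Finally, for any $v_-<v_\ast<v_+$ chosen inside $(0,\sqrt{2(r+\eta)})$, the decay-rate comparison forces $P_{\mathrm{BS}}(K,T;v_-)<\p(K,T)<P_{\mathrm{BS}}(K,T;v_+)$ for all sufficiently large $T$, and the strict monotonicity of $P_{\mathrm{BS}}$ in $v$ then sandwiches $\iv(K,T)$ between $v_-$ and $v_+$; letting $v_\pm\to v_\ast$ proves \eqref{eq:large-iv}. The main obstacle will be the first step: since $\p(K,T)$ is the difference of two noncentral chi-square terms of identical exponential order, the estimates on $\chi^2(y;4,x)$ and on $\chi^2(y;0,x)-e^{-x/2}$ must be made two-sided with explicit remainders so that the leading coefficient survives the subtraction. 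Once these noncentral chi-square asymptotics are in hand, the rest of the argument reduces to Gaussian tail analysis and the monotonicity of the Black--Scholes price in the volatility parameter.
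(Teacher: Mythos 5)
Your proposal is correct and follows essentially the same route as the paper: two-sided small-argument bounds on the noncentral chi-square terms to extract the surviving $e^{-2(r+\eta)T}$ coefficient after cancellation, Mill's-ratio asymptotics with $Sn(d_1)=K\e^{-\hat{r}T}n(d_2)$ on the Black--Scholes side, the quadratic $b^2-6ab+a^2=0$ for the rate-matching, and monotonicity in $v$ to sandwich $\iv$. Working with puts rather than with $S-\call$ is only a cosmetic difference, since by put--call parity $S-\call = K\z-\p$ and $S-\cbs = K\z-P_{\mathrm{BS}}$ share the same bond term, so your comparison is literally the paper's comparison of $\rcall$ with $\rcbs$.
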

As a result of this large time limit, the MMM implied volatility in the long run is determined by the risk-free rate $r$ and the net growth rate $\eta$ of the growth optimal portfolio of the market. This is not surprising given that the (long term) increases in the index and option prices are dictated by these two rates.  This theorem is proved in Section
\ref{sec:proof-large-iv}.

\subsection{Notation} \label{sec:notation}
If $F(\cdot)$ is a probability distribution, then $\tilde{F}(\cdot) = 1 - F(\cdot)$ is the complementary distribution function of $F$. Except in the introduction, subscript letters generally denote
partial derivatives, e.g. $x_T = \partial x/\partial T$. Limits and asymptotics have the following meanings:
\begin{equation*}
\begin{split}
    f(T) \xrightarrow{\; T \to l \;} h
    & \quad \Longleftrightarrow  \quad
        \lim_{T \to l} f(T) = h. \\
    f(T) \sim g(T)
    \quad (T \to l)
    & \quad \Longleftrightarrow \quad
        \lim_{T \to l} [f(T)/g(T)] = 1,
        \quad l \in [-\infty,\infty]. \\
    f(T) = \O(g(T))
    \quad (T \to \infty)
    & \qquad \mbox{if } \qquad
       \mbox{$\abs{f(T)/g(T)}$ is bounded in the limit}.
\end{split}
\end{equation*}
Given strike $K$ and expiry $T$, the MMM call price is $\call(K,T)$, and the corresponding Black--Scholes price with implied volatility $\phi$ is $\cbs(K,T,\iv(K,T))$. $\cbs(K,T;v)$ stands for a generic Black--Scholes price with volatility $v \in [0,\infty]$.

\section{Implied volatility calibration} \label{sec:calibration}

To estimate the model parameters we calibrated the MMM on the SP500 total return index (SPX) using data obtained from Datastream for the period 04/01/1988--27/01/2009. A similar calibration procedure has been performed in \cite{hulley-platen-2008}.

On 27/01/2009, the SP500 index value had a value of $1362.18$ and used as a proxy for the (annualized) risk-free rate $r$, the effective 3-month U.S. T-bill rate on the same day was $0.0011154$ per annum. The calibration returned the estimates $\alpha = 43.307$ and $\eta = 0.089896$. These calibrated parameters were then fed into the MMM formula to produce call prices on the SPX. Figure \ref{fig:iv_surface_2011_05_16} shows an implied volatility surface generated from the MMM call prices on 27/01/2009. Also plotted in the graph are the theoretical small and large time implied volatility limits.

From Figures \ref{fig:iv_surface_2011_05_16} and
\ref{fig:iv_vs_strike_2011_05_16} it can be seen that as the maturity shortens, the implied volatility decreases
to the theoretical limit and the skew becomes more pronounced. In comparison, Figures
\ref{fig:iv_surface_2011_05_16}--\ref{fig:iv_vs_time_2011_05_16} illustrate that as the maturity lengthens, the implied volatility
converges to the theoretical large time limit, and the skew flattens at a
decreasing speed. Our observation of the large time asymptotics is consistent with the findings of Rogers and
Tehranchi \cite{rogers-tehranchi-09} and Forde and Jacquier
\cite{forde-jacquier-2009b}, even though these authors work in risk-neutral regimes.

\begin{center}
\begin{figure}[H]
  % Requires \usepackage{graphicx}
%  \psfrag{IV}[][]{$\phi$}
  \includegraphics[width=10cm]{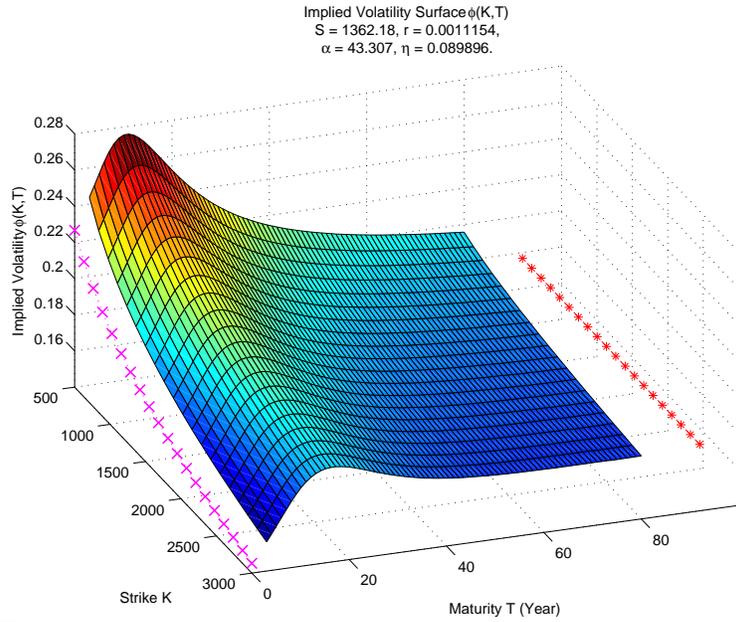} \\
  \caption{SP500 index implied volatility under the MMM and the theoretical small and large time limits on 27/01/2009.}\label{fig:iv_surface_2011_05_16}
\end{figure}
\end{center}

\begin{center}
\begin{figure}[H]
  % Requires \usepackage{graphicx}
  \includegraphics[width=10cm]{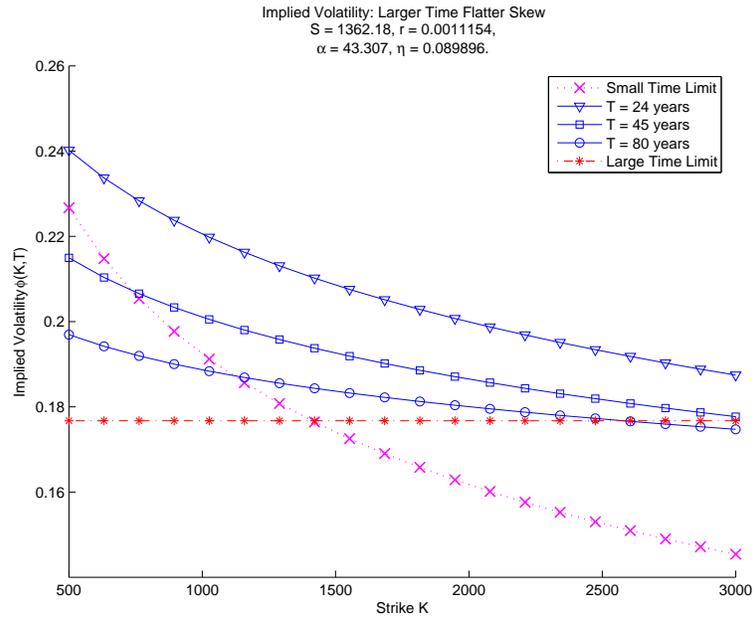} \\
  \caption{The small and large time behavior of the SP500 index implied volatility skew under the MMM on 27/01/2009.}\label{fig:iv_vs_strike_2011_05_16}
\end{figure}
\end{center}

\begin{center}
\begin{figure}[H]
  % Requires \usepackage{graphicx}
  \includegraphics[width=10cm]{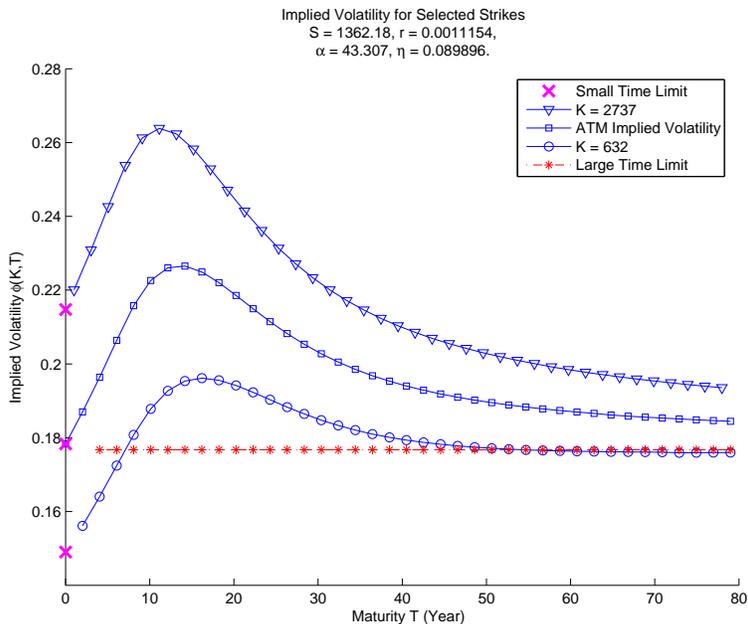} \\
  \caption{The small and large time limits of the SP500 index implied volatility under the MMM on 27/01/2009 for different strikes.}\label{fig:iv_vs_time_2011_05_16}
\end{figure}
\end{center}

\section{An extended Roper--Rutkowski formula} \label{sec:rr-generalization}

Under the assumption of zero risk-free interest rate and some
minimal conditions on the call option prices, Roper and Rutkowski
\cite[Theorem 5.1]{roper-rutkowski-09} derived a model-free zeroth order asymptotic formula
for the implied volatility in small time.

In this section we extend their formula to markets with nonzero dividend yields and interest rates. Since bond prices can be parametrized
by risk-free interest rates, we will, instead of specifying a
risk-free rate, introduce a risk-free zero coupon bond into the
Roper--Rutkowski setup. We will derive the extended formula by applying a well-known forward price transform. After the variable change, it will become clear that the Roper--Rutkowski proof can be repeated here almost line by line. For this reason we will only sketch our proof of the result.

\subsection{The general market model and the extended Roper--Rutkowski formula}

For ease of referencing we shall call our setup a general market model (\textbf{GMM}). Consider a market that has a continuum of zero coupon bond prices and call option
prices for an asset. Without loss of generality we study the market at time $t=0$. Let the constant dividend yield be $\kappa \in \RR$ and current asset price $S>0$.  For the bond price function $T \mapsto \zg(T)$ we have the following assumptions.
\begin{assumption} \label{ass:general-zero-bond}
The bond price $\zg:[0,\infty) \to (0,1]$ satisfies the following
conditions.
\begin{enumerate}
\item[($\zg1$)] No arbitrage bounds:
\begin{equation}
    0<\zg(T) \le 1, \quad \forall \; T \in [0,\infty).
\end{equation}
\item[($\zg2$)] Convergence to payoff:
\begin{equation}
    \lim_{T \to 0} \zg(T) =\zg(0) = 1.
\end{equation}
\item[($\zg3$)] Time value of money:
\begin{equation}
    T \mapsto \zg(T) \quad \mbox{is nonincreasing}.
\end{equation}
\end{enumerate}
\end{assumption}
For the call prices $(K,T) \mapsto \callg(K,T)$ the following conditions are also assumed.
\begin{assumption} \label{ass:general-call}
The call price $\callg: (0,\infty)\times [0,\infty) \to [0,\infty)$ fulfils the
following conditions:
\begin{enumerate}
\item[($\callg 1$)] No arbitrage bounds:
\begin{equation} \label{eq:cond-callg-1}
    (S\e^{-\kappa T} - K \zg(T))_+ \le \callg(K,T) \le S \e^{-\kappa T}, \quad \forall \; S, K >0, \;T
    \ge 0.
\end{equation}
\item[($\callg 2$)] Convergence to payoff:
\begin{equation} \label{eq:cond-callg-2}
    \lim_{T \to 0} \callg(K,T) = \callg(K,0) = (S-K)_+.
\end{equation}
\item[($\callg 3$)] Time value of the option:
\begin{equation} \label{eq:cond-callg-3}
    T \mapsto \callg(K,T) \quad \mbox{is nondecreasing.}
\end{equation}
\end{enumerate}
\end{assumption}
If we set $\kappa = 0$ and $\zg(T)\equiv 1$ in the setup above, then we recover the zero dividend yield and zero interest rate setup of Roper and Rutkowski \cite[Section 2]{roper-rutkowski-09}. With some abuse of notation we can now define implied volatility for the GMM.
\begin{definition}[Implied volatility in the GMM] \label{def:general-implied-vol}
Under the GMM, the implied volatility is defined as the unique
nonnegative function $(K,T) \mapsto
\iv(K,T)$ satisfying the equation
\begin{equation*} %\label{eq:general-iv-def}
  \callg(K,T)
    = \cbsg\bigl(K,T;\iv(K,T)\bigr) \qquad \forall \; K, T \in (0,\infty),
\end{equation*}
where $\cbsg$ is defined in \eqref{eq:bs-1}.
\end{definition}
As mentioned earlier in Remark \ref{remark:iv-def}, the existence and uniqueness of the implied volatility is guaranteed by the arbitrage bounds for the call price in \eqref{eq:cond-callg-1} and the monotonicity of $\cbsg(K,T;v)$ in $v$. Here is our result.
\begin{theorem}[Extended Roper--Rutkowski formula] \label{thm:roper-rutkowski}
Under the GMM, if there exists a constant $T_1>0$ such that $\callg(K,T) > (S\e^{-\kappa T}-K\zg(T))_+$ for every
fixed $K>0$ and $T \in (0,T_1)$, then as $T \to 0$,
\begin{equation} \label{eq:rr-formula-extended}
    \lim_{T\to 0} \iv(K,T)
        =
        \left\{%
\begin{array}{ll}
    \displaystyle \lim_{T \to 0} \sqrt{2\pi} \dfrac{\callg(K,T)}{K\sqrt{T}}, & \hbox{$K=S$,} \\
    \displaystyle \lim_{T \to 0}
        \dfrac{\abs{\ln(S/K)}
                }{
                    \left\{
                        -2T\ln[\callg(K,T)-(S\e^{-\kappa T} -K\zg(T))_+]
                    \right\}^{1/2}
                  }, & \hbox{$K \ne S $,} \\
\end{array}
\right.
\end{equation}
where the equality of the limits is understood in the sense that the left-hand side limit exists (is infinite) if the right-hand side limit exists (is infinite).
\end{theorem}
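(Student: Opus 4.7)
The plan is to reduce the theorem to the zero-rate, zero-dividend setting of Roper--Rutkowski \cite{roper-rutkowski-09} via a forward-strike transformation, then translate the resulting small-time asymptotic back. Concretely, I would define
\begin{equation*}
    \tilde K(K,T) := K\zg(T)\e^{\kappa T},
    \qquad
    \tilde \callg(K,T) := \e^{\kappa T} \callg(K,T),
\end{equation*}
and rewrite \eqref{eq:bs-1}--\eqref{eq:bs-N-d} by substituting $K\zg(T) = \tilde K\e^{-\kappa T}$ to collapse the discount and dividend factors. This gives
\begin{equation*}
    \cbsg(K,T;v)
        = \e^{-\kappa T}\bigl[ S\,N(\tilde d_1) - \tilde K\, N(\tilde d_2) \bigr],
    \qquad
    \tilde d_{1,2}
        = \frac{\ln(S/\tilde K) \pm v^2 T/2}{v\sqrt T},
\end{equation*}
so the implicit equation $\callg = \cbsg(\cdot;\iv)$ becomes $\tilde \callg = S\,N(\tilde d_1) - \tilde K\,N(\tilde d_2)$. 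Hence $\iv(K,T)$ is simultaneously the GMM implied volatility of $\callg$ at strike $K$ and the zero-rate implied volatility of the rescaled call $\tilde \callg$ at strike $\tilde K$.

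The next step is to check that $(\tilde K, T) \mapsto \tilde \callg$ satisfies the hypotheses of the original Roper--Rutkowski theorem and then rerun their argument line by line. Multiplying \eqref{eq:cond-callg-1} by $\e^{\kappa T}$ yields the zero-rate arbitrage bound $(S-\tilde K)_+ \le \tilde \callg \le S$; $(\zg 2)$ gives $\tilde K(K,T) \to K$ as $T \to 0$, so \eqref{eq:cond-callg-2} produces the convergence to payoff; $(\callg 3)$ and $(\zg 3)$ together yield the time-value inequality. The R--R proof then inverts the Black--Scholes formula asymptotically in two regimes. When $K=S$ (so $\tilde K \to S$), a Taylor expansion about $\tilde d_{1,2}=0$ gives $\cbsg(S,T;v) \sim Sv\sqrt T/\sqrt{2\pi}$, inverting to $\iv \sim \sqrt{2\pi}\,\tilde \callg/(\tilde K\sqrt T)$. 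When $K \ne S$, a Mills-ratio estimate on $N(-|\tilde d_1|)$ shows that $\tilde \callg - (S-\tilde K)_+$ is of order $\exp\{-(\ln(S/\tilde K))^2/(2\iv^2 T)\}$ up to polynomial factors in $\iv\sqrt T$; taking logarithms and inverting yields $\iv^2 T \sim (\ln(S/\tilde K))^2/\{-2\ln[\tilde \callg - (S-\tilde K)_+]\}$. The hypothesis $\callg(K,T) > (S\e^{-\kappa T} - K\zg(T))_+$ on $(0,T_1)$ forces $\tilde \callg > (S-\tilde K)_+$, so this logarithm is well defined.

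Finally, I would translate these limits back to the original variables. By $(\zg 2)$, $\ln(S/\tilde K) = \ln(S/K) - \ln\zg(T) - \kappa T \to \ln(S/K)$, and the identity
\begin{equation*}
    \tilde \callg(K,T) - (S-\tilde K)_+
        = \e^{\kappa T}\bigl[\callg(K,T) - (S\e^{-\kappa T} - K\zg(T))_+\bigr]
\end{equation*}
gives $-2T\ln[\tilde \callg - (S-\tilde K)_+] = -2\kappa T^2 - 2T\ln[\callg - (S\e^{-\kappa T} - K\zg(T))_+]$, in which the $\O(T^2)$ correction vanishes in the limit. The at-the-money expression simplifies to $\sqrt{2\pi}\,\e^{\kappa T}\callg/(\tilde K\sqrt T) \to \sqrt{2\pi}\,\callg/(K\sqrt T)$. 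These two identities are exactly \eqref{eq:rr-formula-extended}. The main obstacle is the uniform transfer of R--R's Mills-ratio asymptotics through the $T$-dependent rescalings $\e^{\kappa T}$ and $\zg(T)$: one must argue that these smooth perturbations of the strike, both equal to $1$ at $T=0$, are absorbed into the leading-order $\ln(S/K)$ and do not spoil the exponential estimate. Everything else is either routine algebra or a line-by-line reproduction of the R--R proof, as the authors announce.
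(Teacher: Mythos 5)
Your proposal is correct and follows essentially the same route as the paper: a forward-price change of variables (your $\tilde K = K\zg(T)\e^{\kappa T}$ is the paper's $\xi=\ln(S/K)-\ln\zg(T)-\kappa T$ in a different parametrization) that collapses the GMM to the zero-rate Roper--Rutkowski setting, verification that the transformed call inherits their hypotheses, a rerun of their asymptotic inversion, and a back-transformation in which the $T$-dependent strike perturbation vanishes in the limit. The only cosmetic difference is that the paper normalizes prices by $K\zg(T)$ and invokes the Carr--Jarrow/RR integral representation of the Black--Scholes price together with RR's Lemmas 5.1--5.2, whereas you rescale by $\e^{\kappa T}$ and cite their Taylor/Mills-ratio estimates directly; both sketches leave the same details to the original reference.
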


\begin{remark}
Similar to the case discussed in \cite[Section 5.2]{roper-rutkowski-09}, if there exists a $T_0$ such that $\callg(K,T) = (S\e^{-\kappa T}-K\zg(T))_+$ for every
fixed $K>0$ and $T \in (0,T_0)$, then obviously $\iv(K,T)=0$ for every $T \in (0,T_0)$.
\end{remark}

\begin{remark}
If $\kappa =0$ and $\zg(T) \equiv 1$ for all $T \in [0,\infty)$, then \eqref{eq:rr-formula-extended} is reduced to the Roper--Rutkowski formula \cite[Corollary 5.1]{roper-rutkowski-09}.
\end{remark}

\subsection{Proof of the extended Roper--Rutkowski formula}

To prove their formula Roper and Rutkowski rely on a representation formula
for the Black--Scholes call price, which states that
\begin{equation} \label{eq:bs-rep-rr}
    \B(K,T;v)
        = (S-K)_+
            + S \int_0^{v\sqrt{T}}
                N'\left( \frac{\ln(S/K)}{\tau} + \frac{\tau}{2}\right) \, \d \tau,
\end{equation}
where $N'(\cdot) \equiv n(\cdot)$ is the standard normal density in \eqref{eq:bs-N-d}; see \cite[Lemma
3.1]{roper-rutkowski-09}. A variant of this formula had earlier appeared in Carr and Jarrow \cite{carr-jarrow-92}.

However, \eqref{eq:bs-rep-rr} does not hold when the risk-free interest rate is not zero, or equivalently when there is a nontrivial zero coupon bond. Indeed, if in the GMM, $\kappa =0$ and $\zg(T)= \e^{-rT}$ for $T \ge 0$ and some $r>0$, then $\cbsg(K,T;v) \ne \B(K,T;v)$ for $K,T >0$. Yet, using the forward price we can derive a representation formula similar to \eqref{eq:bs-rep-rr} for the Black--Scholes price $\cbsg$ in the GMM. For $S,K > 0$ and $T\ge 0$, the forward price $\xi$ in the MMM is
\begin{equation} \label{def:xi}
    \xi= \ln(S/K) - \ln\bigl(\zg(T)\bigr)-\kappa T.
\end{equation}
In the $(\xi,T)$ coordinates, let $\Call$ be the transformed call price
\begin{equation}
  \Call(\xi,T) = \frac{\callg\bigl(K(\xi,T),T\bigr)}{K(\xi,T)\zg(T)}.
\end{equation}
Then in $(\xi,T)$ the conditions \eqref{eq:cond-callg-1}--\eqref{eq:cond-callg-3} can be written as
\begin{equation} \label{eq:call-conditions-in-xi-T}
\left\{
\begin{split}
    & (\e^{\xi}-1)_+ \le \Call(\xi,T) \le \e^{\xi}, \quad \xi
        \in \RR, \quad T \in [0,\infty), \\
    & \lim_{T \to 0} \Call(\xi,T) = (\e^{\xi} -1)_+,
        \quad \xi \in \RR, \\
    & T \mapsto \Call(\xi,T) \quad \mbox{is nondecreasing}.
\end{split}
\right.
\end{equation}
Note that the corresponding Black--Scholes price in $(\xi,T)$ is
\begin{equation*}
\begin{split}
    \Cbs(\xi,T;v)
        & = \frac{\cbsg\bigl(K(\xi,T),T;v\bigr)}{K(\xi,T)\zg(T)} %\\
         = \e^{\xi}N\bigl(d_1(\xi,T;v)\bigr) - N\bigl(d_2(\xi,T;v)\bigr), \\
\end{split}
\end{equation*}
where $v$ is the volatility parameter, $d_1(\xi,T;v) = (\xi + v^2T/2)/(v\sqrt{T})$, and $d_2(\xi,T;v) = d_1(\xi,T;v)-v\sqrt{T}$.

Let $\IV$ be the implied volatility in the $(\xi,T)$ coordinates,
i.e.,
\begin{equation} \label{eq:implied-vol-eqaulities}
\begin{split}
    \IV(\xi,T) & = \iv(K(\xi,T),T). \\
%    \Call(\xi,T) & = \Cbs(\xi,T;\IV(\xi,T)),
%    \quad \xi \in \RR, \quad T \in [0,\infty).
\end{split}
\end{equation}
Then by the definition of the implied volatility, in either the $(K,T)$ or $(\xi,T)$ coordinates, we have, for all $\xi \in \RR$ and $T \in (0,\infty)$,
\begin{equation} \label{eq:implied-vol-price-eqaulities}
    \Call(\xi,T)
    = \frac{\callg\bigl(K(\xi,T),T\bigr)}{K(\xi,T)\zg(T)}
    =\frac{\cbsg\bigl(K(\xi,T),T;\iv(K(\xi,T),T)\bigr)}{K(\xi,T)\zg(T)}
    =\Cbs(\xi,T;\IV(\xi,T)).
\end{equation}
Moreover, by following \cite[Lemma 3.1]{roper-rutkowski-09} we deduce the representation formula
\begin{equation} \label{eq:bs-rep}
    \Cbs(\xi,T;v)
        = (\e^{\xi}-1)_+ + \e^{\xi}
            \int_0^{v \sqrt{T}}
                N^\prime
                \left(\frac{\xi}{\tau}+\frac{\tau}{2}\right)
                \, \d \tau,
\end{equation}
where again $N^\prime(\cdot) = n(\cdot)$ is
the standard normal density; c.f. \eqref{eq:bs-rep-rr}.

Now we are ready to present the proof of Theorem \ref{thm:roper-rutkowski}. Since we will largely make use of the results in \cite{roper-rutkowski-09}, our proof will be brief.

\begin{proof}[Proof of Theorem \ref{thm:roper-rutkowski}]
 Let
\begin{equation}
    F(\xi,\theta) =\int_0^\theta
                N^\prime
                \left(\frac{\xi}{\tau}+\frac{\tau}{2}\right)
                \, \d \tau,
                \qquad \xi \in \RR, \quad \theta \ge 0.
\end{equation}
Then by \cite[Lemmas 5.1 and
5.2]{roper-rutkowski-09}, we get, as $\theta \to 0$,
\begin{equation} \label{eq:F-asym}
    F(\xi,\theta)
    \sim
    \left\{
      \begin{array}{ll}
        \displaystyle \frac{\theta}{\sqrt{2\pi}}, & \hbox{$\xi=0$,} \\
        \mbox{} \\
        \displaystyle \frac{\theta^3}{\sqrt{2\pi}\xi^2}
            \exp\left(-\frac{\theta^2 \xi + \xi^2}{2\theta^2}\right), & \hbox{$\xi\ne 0$.}
      \end{array}
    \right.
\end{equation}
By \eqref{eq:bs-rep}, we have
\begin{equation} \label{eq:bs-rep-1}
    \Cbs(\xi,T;\IV(\xi,T))
        = (\e^{\xi}-1)_+ + \e^{\xi}
            \int_0^{\IV(\xi,T) \sqrt{T}}
                N^\prime
                \left(\frac{\xi}{\tau}+\frac{\tau}{2}\right)
                \, \d \tau.
\end{equation}
By Assumption \ref{ass:general-zero-bond}, and Assumption \ref{ass:general-call} in the form expressed by \eqref{eq:call-conditions-in-xi-T}, and by following \cite[Proposition 4.1]{roper-rutkowski-09},
we get
\begin{equation} \label{eq:IV-zero}
    \IV(\xi,T)\sqrt{T} \xrightarrow{\;T \to 0 \;} 0,
    \qquad \xi \in \RR.
\end{equation}
Now from \eqref{eq:implied-vol-price-eqaulities} and \eqref{eq:bs-rep-1} we get
\begin{equation} \label{eq:C-F}
    \frac{\Call(\xi,T) - (\e^{\xi}-1)_+}{\e^{\xi}}=
F\left(\xi,\IV(\xi,T)\sqrt{T}\right).
\end{equation}
Then a combination of
\eqref{eq:C-F}, \eqref{eq:IV-zero}, \eqref{eq:F-asym}, and an application of the
same procedure in \cite[Theorem 5.1, Corollary
5.1]{roper-rutkowski-09} would show that as $T \to 0$,
\begin{equation*} %\label{eq:grrf-2}
    \IV(\xi,T)
        \sim
        \left\{%
\begin{array}{ll}
    \sqrt{2\pi} \dfrac{\Call(\xi,T)}{\sqrt{T}}, & \hbox{$\xi=0$,} \\
    \dfrac{\abs{\xi}}{\Bigl\{-2T\ln[\Call(\xi,T)-(\e^{\xi}-1)_+]\Bigr\}^{1/2}}, & \hbox{$\xi \ne 0$.} \\
\end{array}%
\right.
\end{equation*}
Respectively, $\xi=0$ and $\xi\ne 0$ correspond to the
at the money $(K=S)$ and the not at the money $(K \ne S)$ cases. In fact, by \eqref{eq:implied-vol-eqaulities} and \eqref{eq:implied-vol-price-eqaulities}, a back transformation of the above asymptotic formula to the $(K,T)$ coordinates gives, as $T \to 0$,
\begin{equation*} %\label{eq:RR-formula-asymtoptics}
    \iv(K,T)
        \sim
        \left\{%
\begin{array}{ll}
    \sqrt{2\pi} \dfrac{\callg(K,T)}{K\zg(T)\sqrt{T}}, & \hbox{$K=S$,} \\
    \dfrac{\abs{\ln(S/K) - \ln(\zg(T)) - \kappa T}}{\left\{-2T\ln\left[\dfrac{\callg(K,T)}{K\zg(T)}-\left(\dfrac{S}{K\zg(T)\e^{\kappa T}}-1\right)_+\right]\right\}^{1/2}}, & \hbox{$K \ne S$.} \\
\end{array}%
\right.
\end{equation*}
Taking the limits then gives the desired expressions in \eqref{eq:rr-formula-extended}.
\end{proof}

\section{Proof of the small time limit: Theorem \ref{thm:small-iv}} \label{sec:proof-thm-short-iv}

The proof of the small time limit is an application of Theorem \ref{thm:roper-rutkowski}. It takes the following steps:
\begin{enumerate}
\item[(i)] verification of
Assumptions \ref{ass:general-zero-bond} and \ref{ass:general-call};
\item[(ii)] computation of the at the money limit;
\item[(iii)] computation of the out of the money limit;
\item[(iv)] computation of the in the money limit.
\end{enumerate}
Note that the dividend yield $\kappa =0$ in the MMM; see the discussion following \eqref{eq:bs-N-d}.

\begin{proof}[Proof of Theorem \ref{thm:small-iv}] \mbox{}

\bigskip

\noindent \textbf{Step (i): Verification of Assumptions
\ref{ass:general-zero-bond} and \ref{ass:general-call}.} It is easy to verify
that the bond price $\z$ satisfies Assumption \ref{ass:general-zero-bond}. We omit the details.

To verify Assumption \ref{ass:general-call} we will check \eqref{eq:cond-callg-1} first. By \eqref{eq:zero-coupond-bond}, in the MMM \eqref{eq:cond-callg-1} becomes
\begin{equation*}
  \left(S - K\e^{-rT}(1-\e^{-x/2}) \right)_+ \le \call(K,T) \le S, \quad \forall \; S, K >0, \;T
    \ge 0.
\end{equation*}
Since $\chi^2(y;4,x)$ and $\chi^2(y;0,x)$ are distributions, \eqref{eq:c-1} implies that
$\call(K,T) \le S$ for all $K>0$ and $T\ge 0$. This proves
the upper bound for $\call$. To derive the lower bound we will check two cases. When $S \le
K\e^{-rT}(1-\e^{-x/2})$, we need $\call\ge 0$. This is
obviously true considering that in \eqref{eq:call-price-expectation} the payoff function is nonnegative and $S_T$ is a nonnegative process. When $S > K\e^{-rT}(1-\e^{-x/2})$, to derive the lower
bound for $\call$ is to prove the inequality
\begin{equation} \label{eq:proof:fulfill-lower-bound}
  S - K\e^{-rT}(1-\e^{-x/2})
  \le S [1- \chi^2(y;4,x)]
     - K\e^{-rT}[1-\chi^2(y;0,x)],
\end{equation}
which is the same as to prove
\begin{equation}
\begin{split}
    \frac{x}{y}
    =\frac{S}{K\e^{-rT}}
    & \le \frac{\chi^2(y;0,x)-\e^{-x/2}}{\chi^2(y;4,x)}. \\
\end{split}
\end{equation}
This inequality can be proved by using \eqref{eq:c-2} as
\begin{equation}
\begin{split}
    & \frac{\chi^2(y;0,x)-\e^{-x/2}}{\chi^2(y;4,x)} \\
    & = (x/y)
        \left[
        y \left. \int_0^y \frac{1}{\sqrt{z}} \e^{-z/2}I_1(\sqrt{xz}) \,\d z
        \right/ \int_0^y \sqrt{z} \e^{-z/2}I_1(\sqrt{xz}) \,\d z
        \right] \\
    & \ge x/y.
\end{split}
\end{equation}
Note that the last inequality above is valid because
\begin{equation}
\begin{split}
    \int_0^y \sqrt{z} \e^{-z/2}I_1(\sqrt{xz}) \,\d z
    & = \int_0^y \frac{z}{\sqrt{z}} \e^{-z/2}I_1(\sqrt{xz}) \,\d z    \\
    & \le y \int_0^y \frac{1}{\sqrt{z}} \e^{-z/2}I_1(\sqrt{xz}) \,\d  z.
\end{split}
\end{equation}
Thus \eqref{eq:proof:fulfill-lower-bound} holds for all $S,K>0$ and $T \ge 0$. Consequently, $\call$ satisfies \eqref{eq:cond-callg-1}.

Next, $\call$ also satisfies condition
\eqref{eq:cond-callg-2} by Lemma \ref{lem:c-T-0} below. Moreover, using \eqref{eq:cT}, the fact that $x_T/x = - \eta
\e^{\eta T}/(\e^{\eta T} -1)$, and the property that $p(y;\delta,x)$ and
$\tilde{\chi}^2$ are nonnegative, we get $\call_T \ge 0$ for all
$K,T \in (0,\infty)$. Hence $\call$
satisfies \eqref{eq:cond-callg-3} too.

In sum, $\call$ satisfies
the conditions \eqref{eq:cond-callg-1}--\eqref{eq:cond-callg-3} of Assumption \ref{ass:general-call}.
\bigskip

\noindent \textbf{Step (ii): At the money small time limit.}
%\subsection{At the money case}
When $K=S$, \eqref{eq:small-iv} becomes
\begin{equation}
  \left[\lim_{T \to 0} \iv(K,T)\right]_{K=S}
    = \lim_{K \to S} \frac{\sqrt{\alpha}\ln(S/K)}{2(\sqrt{S} - \sqrt{K})}
    = \sqrt{\frac{\alpha}{S}}, \quad S \in
            (0,\infty).
\end{equation}
To get this limit we shall apply \eqref{eq:rr-formula-extended}, which gives,
for $K=S$,
\begin{equation}
  \iv(S,T) \sim \sqrt{2\pi} \frac{\call(S,T)}{S\sqrt{T}}
  \qquad (T \to 0).
\end{equation}
Since $\call(S,T) \xrightarrow{\; T \to 0 \;} 0$, we apply L'Hopital's rule to get
\begin{equation} \label{eq:atm-proof-1}
\begin{split}
  \lim_{T \to 0}
    \sqrt{2\pi} \frac{\call(S,T)}{S \sqrt{T}}
         = \lim_{T \to 0} \sqrt{2\pi}
            \frac{\call_T(S,T)}{S/(2\sqrt{T})}
        & = \lim_{T \to 0} \frac{2\sqrt{2\pi}}{S}
            \sqrt{T}\call_T(S,T),
\end{split}
\end{equation}
provided the last limit exists. Now by \eqref{eq:cT} below,
\begin{equation} \label{eq:atm-proof-2}
  \call_T(K,T)
    = - \frac{2 S}{x} x_T p(y;4,x)
        + r K \e^{-rT} \tilde{\chi}^2(y;0,x).
\end{equation}
Then, by using firstly the fact that
$\sqrt{T}\tilde{\chi}^2(y;0,x) \xrightarrow{\;T\to 0\;} 0$,
and secondly Lemma \ref{lem:atm-prelim-1} (1), we get, when $K=S$,
\begin{equation}
\begin{split}
 \lim_{T \to 0}
    \sqrt{2\pi} \frac{\call(S,T)}{S \sqrt{T}}
%    &
%    =
%  - \lim_{T \to 0}
%    \left[
%        \frac{2\sqrt{2\pi}}{s} \sqrt{T}
%        \frac{2 s}{x} x_T p(y;\delta,x)
%    \right]
     = - 4 \sqrt{2\pi} \lim_{T \to 0}
        \left[
            \sqrt{T}\frac{x_T}{x} p(y;4,x)
        \right]_{K=S}
     = \sqrt{\frac{\alpha}{S}}.
\end{split}
\end{equation}
And this proves the small time limit for the at the money case.
\bigskip

\noindent \textbf{Step (iii): Out of the money small time limit.}
Recall that in the MMM the dividend yield $\kappa =0$. So when $S<K$, \eqref{eq:rr-formula-extended} gives
\begin{equation} \label{eq:ootm-limit-1}
    \lim_{T\to 0} \iv(K,T)
        =
    \displaystyle \lim_{T \to 0}
        \dfrac{\abs{\ln(S/K)}
                }{
                    \left\{
                        -2T\ln[\call(K,T)-(S -K\z(T))_+]
                    \right\}^{1/2}
                  },
\end{equation}
provided the limit on the right exists. As $S<K$, we have $(S -K\z(T))_+=0$ for all sufficiently small $T$. Consequently
\begin{equation} \label{eq:ootm-limit-2}
  \lim_{T \to 0}
    \left\{-2T\ln[\call(K,T)-(S -K\z(T))_+]\right\}
    =  \lim_{T \to 0}
    \left\{-2T\ln[\call(K,T)]\right\}
\end{equation}
if the second limit exists. Since $T\call_T \xrightarrow{\;T \to 0\;} 0$ by Lemma \ref{lem:atm-prelim-1}, applying L'Hopital's rule gives
\begin{equation} \label{eq:lim-denom-1}
\begin{split}
    \lim_{T \to 0}
        \{-2T\ln \call\}
        & = -2 \lim_{T \to 0}
            \frac{\ln \call}{T^{-1}}
         = 2 \lim_{T \to 0}
            \frac{\call_T/\call}{T^{-2}} \\
        & = 2\lim_{T \to 0}
            \frac{T^2 \call_T}{\call}
         = 2 \lim_{T \to 0}
            \frac{2T \call_T + T^2 \call_{TT}}{\call_T} \\
        & = 2 \lim_{T \to 0}
            \left(2T + \frac{T^2 \call_{TT}}{\call_T}\right)
         = 2 \lim_{T\to 0}
                \frac{T^2 \call_{TT}}{\call_T},
\end{split}
\end{equation}
provided the last limit exists. We claim that
\begin{equation} \label{eq:lim-T2cTT-cT}
  \lim_{T \to 0}
    \frac{T^2 \call_{TT}}{\call_T}
    = 2\left(\sqrt{S}-\sqrt{K}\right)^2/\alpha.
\end{equation}
If true, this would give us the desired limit
\begin{equation}
  \iv(K,T)
    \xrightarrow{\; T \to 0 \;}
    \frac{\sqrt{\alpha}\ln(S/K)}{2(\sqrt{S}-\sqrt{K})}
    \quad \mbox{ for $S<K$}.
\end{equation}
To show \eqref{eq:lim-T2cTT-cT}, we use \eqref{eq:cT} and
\eqref{eq:cTT} to get
\begin{equation} \label{eq:R0-R4}
\begin{split}
  T^2 \frac{\call_{TT}}{\call_T}
    & = T^2\frac{1}{\call_T/[-2Sx_T p(y;4,x)/x]}\times \frac{\call_{TT}}{-2Sx_T p(y;4,x)/x} \\
    & = T^2 \frac{1}{R_1}
         (R_2 + R_3 + R_4 + R_5),
\end{split}
\end{equation}
where, $x_T/x = -\eta \e^{\eta T}/(\e^{\eta T}-1)$, and
\begin{equation} \label{eq:R1-R5}
\begin{split}
    R_1 & = 1-\frac{rK\e^{-rT}x \tilde{\chi}^2(y;0,x)}{2Sx_Tp(y;4,x)}, \\
    R_2 & = - \frac{\eta}{\e^{\eta T} -1}, \\
    R_3 & = \frac{1}{2}
            \left[
                \frac{p(y;2,x)}{p(y;4,x)}
                - 1
            \right]y_T
            + \frac{1}{2}
            \left[
                \frac{p(y;6,x)}{p(y;4,x)}
                - 1
            \right]x_T, \\
    R_4 & = \frac{r^2 K\e^{-rT}x\tilde{\chi}^2(y,0,x)}{2S x_T
            p(y;4,x)}, \\
    R_5 & = \frac{rK\e^{-rT}(\e^{\eta T}-1)}{2S \eta}
            \left[
                \frac{-p(y;0,x)}{p(y;4,x)}y_T
                + \frac{-p(y;2,x)}{p(y;4,x)}x_T
            \right].
\end{split}
\end{equation}
By \eqref{eq:x-chi-x-T-p}, $R_1 \xrightarrow{\;T \to 0\;} 1$ and
$R_4 \xrightarrow{\;T \to 0\;} 0$. On the other hand, we also have $T^2
R_2 \xrightarrow{\;T \to 0\;} 0$. So it remains to compute $T^2
R_3$ and $T^2 R_5$. By \eqref{eq:c-2} and \cite[Formula 9.7.1]{abramowitz-stegun-70},
\begin{equation} \label{eq:p-d-2-p-d}
\begin{split}
    \frac{p(y;2,x)}{p(y,4,x)}
        & =
            \frac{
                \e^{-(x+y)/2}
                I_{0}\left(\sqrt{xy}\right)
                }{
                (y/x)^{1/2}
                \e^{-(x+y)/2}
                I_{1}\left(\sqrt{xy}\right)
                }
          = \frac{
                (x/y)^{1/2}
                I_{0}\left(\sqrt{xy}\right)
                }{
                I_{1}\left(\sqrt{xy}\right)
                } \\
        & \sim (x/y)^{1/2}
           \xrightarrow{\;T \to 0 \;} \sqrt{S/K}. \\
\end{split}
\end{equation}
Similarly, as $T \to 0$,
\begin{equation} \label{eq:lim-p-d4-pd}
\begin{split}
    \frac{p(y;6,x)}{p(y,4,x)}
         & \sim (y/x)^{1/2}
            \xrightarrow{\;T \to 0 \;} \sqrt{K/S}, \\
    \frac{p(y;0,x)}{p(y,4,x)}
         & \sim (y/x)^{-1}
            \xrightarrow{\;T \to 0 \;} \sqrt{S/K}. \\
\end{split}
\end{equation}
Moreover, since $y_T \sim -4\eta^2 K/[\alpha(\e^{\eta T}-1)^2]$
and $x_T \sim - 4\eta^2 S/[\alpha(\e^{\eta T}-1)^2]$, we get as $T
\to 0$,
\begin{equation}
  T^2 y_T \sim -\frac{4\eta^2 K}{\alpha(\e^{\eta T}-1)^2} T^2
    = - \frac{4\eta^2 K}{\alpha} \left(\frac{T}{\e^{\eta T}-1}\right)
    \sim -\frac{4\eta^2 K}{\alpha} \frac{1}{\eta^2}
    = - \frac{4K}{\alpha},
\end{equation}
and
\begin{equation} \label{eq:T2-xT}
    T^2 x_T \sim - \frac{4\eta^2 S}{\alpha(\e^{\eta T}-1)^2} T^2
        \sim -\frac{4S}{\alpha}.
\end{equation}
Together these limits show that $T^2 R_5 \xrightarrow{\;T \to 0 \;} 0$ and
\begin{equation} \label{eq:T2-R3}
\begin{split}
    T^2 R_3
         \xrightarrow{\;T \to 0 \;}
         & \frac{1}{2}
            \left[
                \sqrt{S/K} -1
            \right] (-4K/\alpha)
            + \frac{1}{2}
                \left[
                    \sqrt{K/S} -1
                \right](-4S/\alpha) \\
        & = 2 \left(S-2\sqrt{SK}+K\right)/\alpha \\
        & = 2 \left(\sqrt{S}-\sqrt{K}\right)^2/\alpha. \\
\end{split}
\end{equation}
Taking into account \eqref{eq:R0-R4} and the limits of $R_1$ and
$T^2R_i$, $i=1, \ldots, 4$, we get $\frac{T^2 \call_{TT}}{\call_T}
\xrightarrow{\;T\to 0\;}2
\left(\sqrt{S}-\sqrt{K}\right)^2/\alpha$. This proves
\eqref{eq:lim-T2cTT-cT}, from which the small time limit for the out of the
money case follows.

\bigskip

\noindent \textbf{Step (iv): In the money small time limit.}
When $S>K$, \eqref{eq:rr-formula-extended} gives
\begin{equation} \label{eq:rr-itm-limit-1}
    \lim_{T\to 0} \iv(K,T)
        =
    \displaystyle \lim_{T \to 0}
        \dfrac{\abs{\ln(S/K)}
                }{
                    \left\{
                        -2T\ln[\call(K,T)-(S -K\z(T))_+]
                    \right\}^{1/2}
                  }.
\end{equation}
Since $S>K$, $(S -K\z(T))_+=S-K\z(T)$ for all sufficiently small $T$. As a result,
\begin{equation} \label{eq:rr-itm-limit-2}
\begin{split}
  \lim_{T \to 0} \iv(K,T)
    & = \lim_{T \to 0} \frac{\abs{\ln(S/K)}}{\sqrt{-2T\ln[\call(K,T)-S + K\z(T)]}} \\
    & = \lim_{T \to 0} \frac{\abs{\ln(S/K)}}{\sqrt{-2T\ln[\p(K,T)]}},
\end{split}
\end{equation}
where the second equality above results from the put-call parity
\eqref{eq:put-call-parity}. Since $\p (K,T) \xrightarrow{\; T \to 0 \;}
(K-S)_+=0$ when $S>K$, and $T \p_T \xrightarrow{\; T \to 0 \;} 0$,  L'Hopital's rule gives
\begin{equation}  \label{eq:lim-T2-pTT-pT}
\begin{split}
    \lim_{T \to 0}
        \{-2T \ln \p \}
        & = -2 \lim_{T \to 0}
            \frac{\ln \p}{T^{-1}}
         = 2 \lim_{T \to 0}
            \frac{\p_T/\p}{T^{-2}} \\
        & = 2\lim_{T \to 0}
            \frac{T^2 \p_T}{\p}
         = 2 \lim_{T \to 0}
            \frac{2T \p_T + T^2 \p_{TT}}{\p_T} \\
        & = 2 \lim_{T \to 0}
            \left(2T + \frac{T^2 \p_{TT}}{\p_T}\right)
         = 2 \lim_{T\to 0}
                \frac{T^2 \p_{TT}}{\p_T},
\end{split}
\end{equation}
provided the last limit exists. By the put-call parity
\eqref{eq:put-call-parity}, we have
\begin{equation} \label{eq:pT-pTT}
  \p_T = \call_T + K \z_T
  \quad \mbox{ and } \quad
    \p_{TT} = \call_{TT} + K \z_{TT},
\end{equation}
where $\call_T$ and $\call_{TT}$ are given by \eqref{eq:cT} and
\eqref{eq:cTT}; and by \eqref{eq:zero-coupond-bond}
\begin{equation} \label{eq:zT-zTT}
\begin{split}
    \z_T
        & = -r\e^{-rT}(1-\e^{-x/2})
            + \frac{x_T}{2} \e^{-rT-x/2}, \\
    \z_{TT}
        & = r^2 \e^{-rT}(1-\e^{-x/2})
            - rx_T \e^{-rT -x/2}
            + \frac{x_{TT}}{2} \e^{-rT-x/2}
            - \frac{x_T^2}{4}\e^{-rT-x/2}.
\end{split}
\end{equation}
We will prove that
\begin{equation} \label{eq:U-V}
\begin{split}
  & U\equiv \frac{\p_T}{-2Sx_T p(y;4,x)/x}
    \xrightarrow{\;T \to 0 \;} 1, \\
  & V: = \frac{T^2\p_{TT}}{-2Sx_T p(y;4,x)/x}
     \xrightarrow{\;T \to 0 \;} 2 \left(\sqrt{S}-\sqrt{K}\right)^2/\alpha. \\
\end{split}
\end{equation}
We will prove the limit of $U$ first. By \eqref{eq:pT-pTT},
\eqref{eq:zT-zTT}, and \eqref{eq:cT}, we can rewrite $U$ as
\begin{equation} \label{eq:U}
    U  = 1 + U_1 + U_2, \\
\end{equation}
where
\begin{equation}
\begin{split}
      U_1 & = \frac{x_T \e^{-rT-x/2}/2}{-2Sx_T p(y;4,x)/x},     \\
      U_2 & = \frac{rK\e^{-rT}\tilde{\chi}^2(y;0,x)-rK\e^{-rT}(1-\e^{-x/2})
                }{-2Sx_T p(y;4,x)/x }. \\
\end{split}
\end{equation}
Simplifying these expressions gives
\begin{equation} \label{eq:U1U2}
\begin{split}
  U_1 & = - \frac{x \e^{-rT-x/2}}{4Sp(y;4,x)}, \\
  U_2 & = rK\e^{-rT}\frac{x}{x_T}
        \left[
            \frac{\e^{-x/2}}{p(y;4,x)}
            - \frac{\chi^2(y;0,x)}{p(y;4,x)}
        \right]. \\
\end{split}
\end{equation}
By the definition of $x,y,$ and $p$, as $T \to 0$,
\begin{equation}
\begin{split}
    \frac{x \e^{-rT-x/2}}{p(y;4,x)}
    & \sim 2(x/y)^{1/2} \times
        \frac{x\e^{-x/2}}{\e^{-(x+y)/2}I_1(\sqrt{xy})}
     = 2(x/y)^{1/2} \times
        \frac{x}{\e^{-y/2}I_1(\sqrt{xy})} \\
    & \sim 2(S/K)^{1/2} \times
        \frac{x(2\pi)^{1/2}(xy)^{1/4}}{\e^{-y/2}\e^{\sqrt{xy}}}
     = 2(S/K)^{1/2} \times
        \frac{x(2\pi)^{1/2}(xy)^{1/4}}{\e^{-y/2}\e^{\sqrt{xy}}} \\
    & = 2(S/K)^{1/2} \times
        \frac{x(2\pi)^{1/2}(xy)^{1/4}}{\exp\left[\sqrt{xy}\left(1-\frac{1}{2}\sqrt{y/x}\right)\right]}. \\
\end{split}
\end{equation}
Since $S>K$, we have $x>y$ and $(1-\sqrt{y/x}/2)>\const>0$ for all
sufficiently small $T$. By Lemma \ref{lem:atm-prelim-1} (1),
$x,y \xrightarrow{\;T \to 0\;} \infty$. This implies that
\begin{equation} \label{eq:xe-sp}
    \frac{x \e^{-rT-x/2}}{2Sp(y;4,x)}
    \xrightarrow{\; T \to 0 \;} 0.
\end{equation}
Consequently, $U_1 \xrightarrow{\; T \to 0\;} 0$. For $U_2$, we have,
by \eqref{eq:xe-sp}, $\e^{-x/2}/p(y;4,x) \xrightarrow{\;T \to
0\;} 0$. Besides, by L'Hopital's rule, \eqref{eq:p-T}, and
\eqref{eq:pde-chiT},
\begin{equation}
\begin{split}
    & \lim_{T \to 0}
        \frac{\chi^2(y;0,x)}{p(y;4,x)} \\
    & = \lim_{T \to 0}
            \frac{\partial_T \chi^2(y;0,x)}{\partial_T
            p(y;4,x)} \\
    & = \lim_{T \to 0}
            \frac{
                p(y;0,x)y_T - p(y;2,x)x_T
                }{
                [p(y;2,x)-p(y;4,x)]y_T/2
                +
                [p(y;6,x)-p(y;4,x)]x_T/2
                } \\
    & = 2 \lim_{T \to 0}
            \frac{
                \frac{p(y;0,x)}{p(y;4,x)}
                    - \frac{p(y;2,x)}{p(y;4,x)}
                        \times \frac{x_T}{y_T}
                }{
                \frac{p(y;2,x)}{p(y;4,x)}
                -1
                +
                \left[
                    \frac{p(y;6,x)}{p(y;4,x)}-1
                \right] \frac{x_T}{y_T}
                }.  \\
\end{split}
\end{equation}
By \eqref{eq:p-d-2-p-d}, \eqref{eq:lim-p-d4-pd}, and the fact that
$x_T/y_T \xrightarrow{\; T \to 0 \;} S/K$, we have
\begin{equation}
\begin{split}
    & \lim_{T \to 0}
        \frac{\chi^2(y;0,x)}{p(y;4,x)} \\
    & = 2
            \frac{
                \sqrt{S/K}
                    - \sqrt{S/K}
                        \times S/K
                }{
                \sqrt{S/K}
                -1
                +
                \left[
                    \sqrt{K/S}-1
                \right] S/K
                }
        = \frac{\sqrt{S/K}(S-K)}{(\sqrt{S} -\sqrt{K})^2}.
\end{split}
\end{equation}
Because $x/x_T = -\varphi/\varphi_T \xrightarrow{\;T \to 0 \;} 0$,
we get from \eqref{eq:U1U2} the limit $U_2 \xrightarrow{\;T \to 0
\;} 0$. Since $U_1$ and $U_2$ both converge to zero, \eqref{eq:U}
implies the convergence $U \xrightarrow{\;T \to 0 \;} 1$.

Now it remains to prove the limit of $V$ in \eqref{eq:U-V}. By
\eqref{eq:pT-pTT}, \eqref{eq:zT-zTT}, and \eqref{eq:cTT}, we can
rewrite $V$ as
\begin{equation}
\begin{split}
    V
        & = \frac{T^2 \call_{TT} + T^2K\z_{TT}}{-2Sx_T
            p(y;4,x)/x}\\
        & = T^2(R_2 + R_3 + R_5)
            + T^2 V_1 + T^2 V_2
\end{split}
\end{equation}
where $R_2$, $R_3$ and $R_5$ are the same as in \eqref{eq:R1-R5} and
\begin{equation}
\begin{split}
    V_1
        & = \frac{x}{-2 x_T} \times
            \frac{
                r^2K\e^{-rT}\tilde{\chi}^2(y,0,x)
                - r^2 K \e^{-rT}(1-\e^{-x/2})
                }{
                    p(y;4,x)
                }, \\
    V_2
        & = \frac{x}{-2Sx_T} \times
            \frac{
                K
                \left(
                    -rx_T + \frac{x_{TT}}{2} - x_T^2
                \right)\e^{-rT - x/2}
                    }{
                    p(y;4,x)
                }.
\end{split}
\end{equation}
Simplifying $V_1$ gives
\begin{equation}
  V_1 = r^2K\e^{-rT} \frac{x}{-2Sx_T}
        \left[
            \frac{\e^{-x/2}}{p(y;4,x)}
            - \frac{\chi^2(y;0,x)}{p(y;4,x)}
        \right].
\end{equation}
Similar to $U_2$ in \eqref{eq:U1U2}, we found that $V_1
\xrightarrow{\; T \to 0 \;} 0$. Next, because
\begin{equation}
    T^2(-rx_T + x_{TT}/2 - x_T^2)
        \xrightarrow{\; T \to 0 \;}  \const(\alpha,\eta,S,K)
    \quad \mbox{ and } \quad
       x/x_T \xrightarrow{\; T \to 0 \;} 0,
\end{equation}
we also get $T^2 V_2 \xrightarrow{\; T \to 0 \;} 0$. Combining
these limits with the limits of $T^2(R_2+R_3 + R_5)$, see
\eqref{eq:R1-R5}--\eqref{eq:T2-R3}, we get
\begin{equation}
  \lim_{T \to 0} V = 2 \left(\sqrt{S}-\sqrt{K}\right)^2/\alpha,
\end{equation}
as is stated in \eqref{eq:pT-pTT}. Combining
\eqref{eq:lim-T2-pTT-pT} with \eqref{eq:pT-pTT} then gives
\begin{equation}
  \lim_{T \to 0} \{-2T \ln \p \}
        = 2 \lim_{T\to 0}
                \frac{T^2 \p_{TT}}{\p_T}
        = 2 \lim_{T\to 0}
                \frac{V}{U}
        = 4 \left(\sqrt{S}-\sqrt{K}\right)^2/\alpha.
\end{equation}
By \eqref{eq:rr-itm-limit-2}, we then have
\begin{equation}
  \lim_{T \to 0} \iv(K,T)
    = \frac{\sqrt{\alpha}\ln(S/K)}{2(\sqrt{S} - \sqrt{K})}, \qquad S,K \in
            (0,\infty), \quad S>K.
\end{equation}
This proves the small time limit for the in the money case and completes
the proof of the theorem.
\end{proof}

\begin{remark} \label{rem:gao-lee-small-time}
After completing our work we learnt of the results of Gao and Lee \cite{gao-lee-2011}.
They \cite[Remark 7.4]{gao-lee-2011} stated that the small time asymptotics of the time-scaled implied volatility $V$, in their notation, were controlled by $k^2/(2L) \sim V^2$; and they argued that this would imply the Roper--Rutkowski formula \cite[Theorem 5.1]{roper-rutkowski-09}. In our notation, their formula becomes $\iv \sim \abs{\ln(K/S)}/\sqrt{-2T\ln C}$, which clearly does not imply the Roper--Rutkowski formula \cite[Theorem 5.1]{roper-rutkowski-09} or our extended version in \eqref{eq:rr-formula-extended}. See also Remark \ref{rem:gao-lee-large-time} regarding their large time result.
\end{remark}

\begin{remark} \label{rem:gather-et-al}
Separately, we also became aware of the paper by Gatheral et al. \cite{gatheral-et-al}. We noted that for the out of the money case the limit in \eqref{eq:ootm-limit-2} could have been computed by using their Theorem A.2. Indeed, for all in, at, and out of the money cases, our small time limit agrees with theirs \cite[(3.2)]{gatheral-et-al}. However, their pricing approach is different from ours; and it is not entirely clear if on a finite interval in $\RR_+$ away from the origin the Yoshida heat kernel expansion can be generalized to diffusions with degenerate coefficients like the CEV or CIR/square-root process, which is what seems to have been suggested in \cite[Remark 3.5, Appendix A]{gatheral-et-al}.
\end{remark}

%\mbox{}
%\vspace{1cm}

\section{Proof of the large time limit: Theorem \ref{thm:large-iv}} \label{sec:proof-large-iv}

We shall prove the large time limit in two steps:
\begin{itemize}
\item[(I)] Proof of the convergence and the upper bound $\limsup_{T \to \infty} \iv(K,T) \le \sqrt{2(r+\eta)}$.
\item[(II)] Proof of $\lim_{T \to \infty} \iv(K,T) =v_*\equiv \sqrt{2(3-2\sqrt{2})(r+\eta)}$.
\end{itemize}
By definition and by the properties of the Black--Scholes formula, the implied volatility is bounded below by zero. Consequently, the upper bound in Step (I) implies that a large time limit exists in the interval $[0,\sqrt{2(r+\eta)}]$. In turn, the existence of the limit allows Step (II) to take place, where we shall show that as $T \to \infty$, the implied volatility $\iv$ is bounded below by any $v \in (0,v_*)$ and above by any $v \in (v_*,\sqrt{2(r+\eta)})$.

\subsection{Step (I): The convergence in large time and the upper bound $\sqrt{2(r+\eta)}$}

We shall prove the following proposition.

\begin{proposition} \label{prop:limsup-bound}
Assume the MMM. Then
\begin{equation*} %\label{eq:limsup-bound}
  \limsup_{T \to \infty} \iv(K,T)
  \le \sqrt{2(r+\eta)}
\end{equation*}
and the implied volatility $\iv$ converges to a limit in $[0,\sqrt{2(r+\eta)}]$ as $T \to \infty$.
\end{proposition}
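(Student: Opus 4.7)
My plan is to compare the rates at which $S - \call(K,T)$ and $S - \cbs(K,T;v)$ vanish as $T \to \infty$. Both tend to zero, but the Black--Scholes rate depends monotonically on $v$, so the implied volatility is pinned down by matching the rates. To obtain the upper bound $\sqrt{2(r+\eta)}$ I will show that for every $v > \sqrt{2(r+\eta)}$ the Black--Scholes price approaches $S$ strictly faster than the MMM call does, so that eventually $\cbs(K,T;v) > \call(K,T)$, forcing $\iv(K,T) < v$ by strict monotonicity of $\cbs(K,T;\cdot)$.

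First, I would extract the leading-order asymptotic of $S - \call(K,T) = S\chi^2(y;4,x) + Ke^{-rT}\tilde\chi^2(y;0,x)$. Since $\varphi(T) \sim \alpha e^{\eta T}/(4\eta)$, the arguments $x \sim 4\eta S e^{-\eta T}/\alpha$ and $y \sim 4\eta K e^{-(r+\eta)T}/\alpha$ are both small. Using $I_1(u)\sim u/2$ for small $u$, the densities satisfy $p(z;4,x)\sim z/4$ and $p(z;0,x)\sim x/4$ uniformly on $[0,y]$; therefore $\chi^2(y;4,x) \sim y^2/8$ (of order $e^{-2(r+\eta)T}$) and $\tilde\chi^2(y;0,x) = 1 - e^{-x/2} - \int_0^y p(z;0,x)\,\d z \sim x/2$, where the atom at the origin dominates over the integral. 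The second piece of $S-\call$ therefore dominates, yielding $S-\call(K,T) \sim 2\eta SK\,e^{-(r+\eta)T}/\alpha$ exactly at the rate $e^{-(r+\eta)T}$.

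Next, I would show that for fixed $v > \sqrt{2(r+\eta)}$ one has $S-\cbs(K,T;v) = o(e^{-(r+\eta)T})$. The identity $\hat r(T)T = rT - \ln(1-e^{-x/2})$ gives $\hat r \to r+\eta$ and $e^{-\hat r T} \sim (2\eta S/\alpha)\,e^{-(r+\eta)T}$. For such $v$ one eventually has $v^2 > 2\hat r$, so $d_2(K,T;v)<0$. Applying Mill's ratio to both $\tilde N(d_1)$ and $N(d_2)$, the two summands of $S-\cbs = S\tilde N(d_1) + Ke^{-\hat r T}N(d_2)$ decay at the common rate $\exp\bigl(-T\,[\hat r + (v^2/2 - \hat r)^2/(2v^2)]\bigr)$, and the bracketed quantity strictly exceeds $\hat r$ whenever $v^2 \neq 2\hat r$ (a completion-of-the-square check). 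Hence $S-\cbs(K,T;v) = o(e^{-\hat r T}) = o(e^{-(r+\eta)T})$.

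Combining these, for every $v>\sqrt{2(r+\eta)}$ we get $\cbs(K,T;v) > \call(K,T) = \cbs(K,T;\iv(K,T))$ for all large $T$, whence $\iv(K,T) < v$; letting $v\downarrow \sqrt{2(r+\eta)}$ delivers the upper bound. Together with $\iv\ge 0$, this places every accumulation point of $\iv(K,\cdot)$ inside $[0,\sqrt{2(r+\eta)}]$, giving the existence-of-a-limit-in-the-interval clause by Bolzano--Weierstrass; the uniqueness of the accumulation point will come from the matching lower bound in Step (II). The main technical obstacle is Step~A: one must justify the small-argument expansions of $I_1$ uniformly as $x,y\to 0$ simultaneously, and identify precisely which piece of $\tilde\chi^2(y;0,x)$ — the atom $1-e^{-x/2}$ rather than the integral $\int_0^y p(z;0,x)\,\d z$ — supplies the dominant $e^{-(r+\eta)T}$ contribution.
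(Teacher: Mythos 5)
Your proposal is correct and follows essentially the same route as the paper: both compare $\cbs(K,T;v)$ for $v$ just above $\sqrt{2(r+\eta)}$ against $\call(K,T)$ by showing that $S-\call(K,T)\sim 2\eta SK\e^{-(r+\eta)T}/\alpha$ (driven, as you identify, by the atom $1-\e^{-x/2}$ of $\tilde{\chi}^2(\,\cdot\,;0,x)$ rather than by the integral or by $\chi^2(y;4,x)$) while $S-\cbs(K,T;v)=o\bigl(\e^{-(r+\eta)T}\bigr)$ via Gaussian tail estimates, and then invoke strict monotonicity of $\cbs(K,T;\cdot)$; the paper merely packages the same estimates as three ratios tending to zero after normalizing by $K\e^{-rT}\tilde{\chi}^2(y;0,x)$. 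Your explicit deferral of the uniqueness of the accumulation point to Step (II) is, if anything, more careful than the paper's own wording of the convergence clause.
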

The proof of this proposition requires some preliminary results. Let $0< \epsilon \ll 1$ and define
\begin{equation} \label{eq:vhi-definition}
  \vhi = \sqrt{2(r+\eta)}/(1-\epsilon).
\end{equation}
Then we have the following lemmas.
\begin{lemma} \label{lem:iv-upper-bound-ratio-1}
Under the MMM, we have
\begin{equation} \label{eq:proof-chi-I-0}
  \lim_{T \to \infty}
    \frac{S \chi^2(y;4,x)}{K \e^{-rT} \tilde{\chi}^2(y;0,x)} = 0.
\end{equation}
\end{lemma}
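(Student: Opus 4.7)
The plan is to derive the leading-order asymptotics of the numerator $\chi^2(y;4,x)$ and the denominator $\tilde{\chi}^2(y;0,x)$ separately in the regime $T \to \infty$, then observe that the ratio of the leading terms decays exponentially fast. Since $\varphi(T) = (\alpha/4\eta)(\e^{\eta T}-1) \sim (\alpha/4\eta)\e^{\eta T}$, both arguments tend to zero:
$$x = S/\varphi(T) \sim (4\eta S/\alpha)\e^{-\eta T}, \qquad y = K\e^{-rT}/\varphi(T) \sim (4\eta K/\alpha)\e^{-(r+\eta)T},$$
so what I need are the small-argument asymptotics of the two noncentral chi-square quantities.

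For the numerator, I would substitute the power series $I_1(w) = w/2 + \mathrm{O}(w^3)$ into the density formula from \eqref{eq:c-2}. Since $\sqrt{z/x}\,\sqrt{xz} = z$, this gives, uniformly for small $x, z \ge 0$,
$$p(z;4,x) = \tfrac{1}{2}\sqrt{z/x}\,\e^{-(x+z)/2}I_1(\sqrt{xz}) = z/4 + \mathrm{O}\bigl(z(x+z)\bigr),$$
and integrating on $[0,y]$ yields $\chi^2(y;4,x) = y^2/8 + \mathrm{O}\bigl(y^2(x+y)\bigr)$. Equivalently, in the Poisson-mixture representation $\chi^2(\cdot\,;4,x) = \sum_{n\ge 0}\e^{-x/2}(x/2)^n F_{4+2n}(\cdot)/n!$ (with $F_k$ the central chi-square CDF), only the $n=0$ term contributes at leading order, because $F_4(y)\sim y^2/8$ dominates the higher terms, each of which carries an extra factor of $x^n y^n$.

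For the denominator I would use the representation $\tilde{\chi}^2(y;0,x) = 1 - \e^{-x/2} - \int_0^y p(z;0,x)\,\d z$ from \eqref{eq:c-2}. Using $I_{-1} = I_1$ and the same series expansion, now with $\sqrt{x/z}\,\sqrt{xz} = x$, I get $p(z;0,x) = x/4 + \mathrm{O}\bigl(x(x+z)\bigr)$, so the integral is $\mathrm{O}(xy)$. Combined with $1 - \e^{-x/2} = x/2 + \mathrm{O}(x^2)$, this gives $\tilde{\chi}^2(y;0,x) = x/2 + \mathrm{O}(xy + x^2)$, whose leading term is $x/2$ since $x,y \to 0$.

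Substitution then yields
$$\frac{S\,\chi^2(y;4,x)}{K\e^{-rT}\tilde{\chi}^2(y;0,x)} \sim \frac{S\,(y^2/8)}{K\e^{-rT}\,(x/2)} = \frac{K\e^{-rT}}{4\varphi(T)} \sim \frac{K\eta}{\alpha}\,\e^{-(r+\eta)T}\longrightarrow 0,$$
which is the claim. The main technical obstacle is the careful bookkeeping of error terms in the Bessel series to guarantee that the two asymptotic equivalences hold uniformly as $(x,y)\to(0,0)$ along the curve traced out by $T\to\infty$; using the Poisson-mixture formulation, together with the elementary bound $F_k(y) \le (y/2)^{k/2}/\Gamma(k/2+1)$, makes this uniform control essentially immediate.
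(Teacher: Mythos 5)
Your proof is correct and follows essentially the same route as the paper's: both reduce the claim to the small-argument ($x,y\to 0$) asymptotics of the noncentral chi-square functions obtained from the Bessel expansion $I_1(w)\sim w/2$. The paper merely packages the computation differently, normalizing numerator and denominator by $p(y;4,x)\sim y/4$ and citing the limits $\chi^2(y;4,x)/p(y;4,x)\to 0$ and $\e^{-rT}\tilde{\chi}^2(y;0,x)/p(y;4,x)\to 2S/K$ from Lemma \ref{lem:chisq-density-at-infty}, whereas you compute the leading orders $\chi^2(y;4,x)\sim y^2/8$ and $\tilde{\chi}^2(y;0,x)\sim x/2$ directly; the two are equivalent.
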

\begin{proof}
See Section \ref{sec:proof:lem:iv-upper-bound-ratio-1} of Appendix D.
\end{proof}

\begin{lemma} \label{lem:iv-upper-bound-ratio-2}
Under the MMM, we have for each $0<\epsilon \ll 1$,
\begin{equation} \label{eq:lem:Nd1-I-0}
  \lim_{T \to \infty}
    \frac{\tilde{N}\bigl(d_1(K,T;\vhi)\bigr)}{ \e^{-rT}\tilde{\chi}^2(y;0,x) } = 0.
\end{equation}
\end{lemma}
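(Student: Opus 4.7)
The plan is to compute sharp exponential decay rates for both the numerator and the denominator of the ratio and to show that the numerator decays strictly faster whenever $\epsilon > 0$. All estimates rely on the large-$T$ behavior $\varphi(T)\sim\frac{\alpha}{4\eta}\e^{\eta T}$, which yields $x\sim\frac{4\eta S}{\alpha}\e^{-\eta T}\to 0$, $y\sim\frac{4\eta K}{\alpha}\e^{-(r+\eta)T}\to 0$, and (via \eqref{eq:rhat-1} together with $1-\e^{-x/2}\sim x/2$) the convergence $\hat{r}\to r+\eta$.

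For the denominator, I would exploit the Poisson-mixture representation of the noncentral chi-square with zero degrees of freedom: under this law there is an atom at $0$ of mass $\e^{-x/2}$, and the remainder is a Poisson-weighted mixture of central chi-squares $\chi^2_{2n}$ for $n\ge 1$. Picking out the leading $n=1$ term (for which the tail of $\chi^2_2$ is simply $\e^{-y/2}$) gives
$$\tilde{\chi}^2(y;0,x)=\tfrac{x}{2}\e^{-(x+y)/2}+O(x^2),$$
so $\tilde{\chi}^2(y;0,x)\sim x/2$ and
$$\e^{-rT}\tilde{\chi}^2(y;0,x)\sim\frac{2\eta S}{\alpha}\e^{-(r+\eta)T}\qquad(T\to\infty).$$
Equivalently, this can be verified by substituting the small-argument expansion $I_{-1}(u)=I_1(u)\sim u/2$ directly into the density in \eqref{eq:c-2} and integrating.

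For the numerator, since $\hat{r}+\vhi^2/2\to(r+\eta)\bigl[1+(1-\epsilon)^{-2}\bigr]$, one obtains $d_1(K,T;\vhi)\sim A\sqrt{T}$ with $A=\sqrt{(r+\eta)/2}\,\bigl[(1-\epsilon)+(1-\epsilon)^{-1}\bigr]>0$. As $d_1\to+\infty$, Mills' ratio $\tilde{N}(d)\sim n(d)/d$ yields
$$\tilde{N}\bigl(d_1(K,T;\vhi)\bigr)\sim\frac{1}{A\sqrt{2\pi T}}\exp\!\left(-\frac{(r+\eta)T}{4}\bigl[(1-\epsilon)+(1-\epsilon)^{-1}\bigr]^2\right).$$

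Forming the ratio, the exponential decay rate is
$$\frac{r+\eta}{4}\bigl[(1-\epsilon)+(1-\epsilon)^{-1}\bigr]^2-(r+\eta),$$
which is strictly positive because $(1-\epsilon)+(1-\epsilon)^{-1}>2$ for $\epsilon\in(0,1)$ by the AM--GM inequality with strict inequality (since $1-\epsilon\ne 1$). The $T^{-1/2}$ algebraic prefactor cannot overturn this strict exponential gap, so the ratio in \eqref{eq:lem:Nd1-I-0} tends to $0$. The only delicate step is the denominator expansion: the density has the singular factor $I_{-1}$ and the distribution carries a point mass at the origin, so some care is needed to isolate the correct leading order $x/2$ rather than a smaller quantity; everything else is a rate comparison.
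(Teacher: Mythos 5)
Your proposal is correct and reaches the same essential mechanism as the paper — the numerator decays at exponential rate $\tfrac{1}{2}A^2=\tfrac{r+\eta}{4}\bigl[(1-\epsilon)+(1-\epsilon)^{-1}\bigr]^2$, which strictly exceeds the denominator's rate $r+\eta$ precisely because $\epsilon>0$ — but the route differs in its bookkeeping. The paper does not compute the two rates separately: it divides numerator and denominator by $p(y;4,x)$, invokes the pre-established limit $\e^{-rT}\tilde{\chi}^2(y;0,x)/p(y;4,x)\to 2S/K$ (Lemma \ref{lem:chisq-density-at-infty}\,(7), obtained from Bessel-function asymptotics rather than your Poisson-mixture representation of the zero-degree noncentral chi-square, though both give the same leading order $\tilde\chi^2(y;0,x)\sim x/2$), and reduces the numerator to a bounded prefactor times $\exp[-(d_1^2/2-\hat{r}_*T)]$, which tends to zero by the factorization argument of Lemma \ref{lem:d1-d2-at-infty}\,(3). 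Your version makes the rate gap fully explicit and closes it with AM--GM, which is arguably more transparent; the paper's version avoids ever writing the rate down and reuses machinery (the $p(y;4,x)$ normalization and the $d_1^2/2-\hat{r}_*T\to\infty$ lemma) that also serves the companion Lemmas \ref{lem:iv-upper-bound-ratio-1} and \ref{lem:iv-upper-bound-ratio-3}. One small imprecision to fix: from $d_1\sim A\sqrt{T}$ alone you may not conclude $\e^{-d_1^2/2}\sim\e^{-A^2T/2}$, since $d_1^2-A^2T$ converges to a nonzero constant here (it involves $\ln(S/K)$ and $(\hat{r}-\hat{r}_*)T\to-\ln(2\eta S/\alpha)$), so your displayed asymptotic for $\tilde{N}(d_1)$ is off by a bounded multiplicative constant. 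This is harmless for the rate comparison — an upper bound of the form $\const\cdot T^{-1/2}\e^{-A^2T/2}$ suffices — but the claim should be stated as a two-sided bound or the constant should be tracked.
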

\begin{proof}
See Section \ref{sec:proof:lem:iv-upper-bound-ratio-2}  of Appendix D.
\end{proof}

\begin{lemma} \label{lem:iv-upper-bound-ratio-3}
Under the MMM, we have for each $0<\epsilon \ll 1$
\begin{equation} \label{eq:prop:Nd2-I}
  \lim_{T \to \infty}
    \frac{\e^{-\hat{r}T}N\bigl(d_2(K,T;\vhi)\bigr)}{\e^{-rT}\tilde{\chi}^2(y;0,x)} =
    0.
\end{equation}
\end{lemma}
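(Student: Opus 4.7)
The plan is to factor the ratio in \eqref{eq:prop:Nd2-I} as a bounded prefactor times a vanishing term. Using the identity $\e^{-\hat r T}=\e^{-rT}(1-\e^{-x/2})$, which is immediate from \eqref{eq:zero-coupond-bond}, we rewrite
\begin{equation*}
    \frac{\e^{-\hat r T}\, N\bigl(d_2(K,T;\vhi)\bigr)}{\e^{-rT}\tilde{\chi}^2(y;0,x)}
    =
    \frac{1-\e^{-x/2}}{\tilde{\chi}^2(y;0,x)}\cdot N\bigl(d_2(K,T;\vhi)\bigr).
\end{equation*}
I would then prove separately that (a) the prefactor tends to $1$ and (b) $N\bigl(d_2(K,T;\vhi)\bigr)\to 0$.

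For (a), observe that $x = S/\varphi(T) \to 0$ and $y = K\e^{-rT}/\varphi(T) \to 0$ as $T\to\infty$, because $\varphi(T)\sim(\alpha/(4\eta))\e^{\eta T}$ dominates both $S$ and $K\e^{-rT}$. Using the Poisson mixture representation of the noncentral chi-square with $0$ degrees of freedom,
\begin{equation*}
    \tilde{\chi}^2(y;0,x)
    =\sum_{k=1}^{\infty}\frac{\e^{-x/2}(x/2)^k}{k!}\,\tilde F_{2k}(y),
\end{equation*}
where $F_{2k}$ denotes the central chi-square CDF with $2k$ degrees of freedom (the $k=0$ term vanishes because $F_0(y)=1$ for $y\ge 0$). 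Since $\tilde F_2(y)=\e^{-y/2}$, the $k=1$ term equals $(x/2)\e^{-x/2}\e^{-y/2}$ and is of order $x/2$, while the remainder is bounded above by $\sum_{k\ge 2}\e^{-x/2}(x/2)^k/k!=O(x^2)$. This gives the two-sided equivalence $\tilde{\chi}^2(y;0,x)\sim x/2\sim 1-\e^{-x/2}$ as $T\to\infty$, so the prefactor tends to $1$.

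For (b), it suffices to show $d_2(K,T;\vhi)\to -\infty$. From \eqref{eq:rhat-1}, $\hat r(T)=r-T^{-1}\ln(1-\e^{-x/2})$; since $1-\e^{-x/2}\sim x/2\sim(2\eta S/\alpha)\e^{-\eta T}$, we get $-T^{-1}\ln(1-\e^{-x/2})\to\eta$, so $\hat r\to r+\eta$. With $\vhi^2/2=(r+\eta)/(1-\epsilon)^2$ we then have
\begin{equation*}
    \hat r -\vhi^2/2 \xrightarrow{\;T\to\infty\;} (r+\eta)\bigl[1-(1-\epsilon)^{-2}\bigr]<0,
\end{equation*}
so $d_2(K,T;\vhi)\sim\bigl[(\hat r-\vhi^2/2)/\vhi\bigr]\sqrt{T}\to-\infty$ and $N\bigl(d_2(K,T;\vhi)\bigr)\to 0$. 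Combining (a) and (b) yields \eqref{eq:prop:Nd2-I}.

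The main obstacle is justifying the equivalence $\tilde{\chi}^2(y;0,x)\sim x/2$ uniformly as both $x$ and $y$ tend to $0$, which requires careful handling of the atom at $0$ inherent in the noncentral chi-square with zero degrees of freedom (Siegel's framework, cited in the paper). Once that asymptotic is in hand, the rest is elementary and parallel to the arguments of Lemmas \ref{lem:iv-upper-bound-ratio-1} and \ref{lem:iv-upper-bound-ratio-2}.
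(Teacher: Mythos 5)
Your proof is correct, and it reaches the conclusion by a slightly different decomposition from the paper's. The paper divides numerator and denominator by $p(y;4,x)$ and invokes two limits with the same value: $\e^{-rT}\tilde{\chi}^2(y;0,x)/p(y;4,x)\to 2S/K$ (Lemma \ref{lem:chisq-density-at-infty} (7)) and $\e^{-\hat{r}T}/p(y;4,x)\to 2S/K$ (proved in situ via $p(y;4,x)\sim y/4$ and Lemma \ref{lem:rhat-at-infty} (5)), combined with $d_2(K,T;\vhi)\to-\infty$ from Lemma \ref{lem:d1-d2-at-infty} (2). You instead cancel the discounting through $\e^{-\hat{r}T}=\e^{-rT}(1-\e^{-x/2})$ and show the prefactor $(1-\e^{-x/2})/\tilde{\chi}^2(y;0,x)$ tends to $1$ via the Poisson-mixture representation of the zero-degree noncentral chi-square; this is exactly the quotient of the paper's two $2S/K$ limits, so the essential content agrees, and your handling of $d_2$ coincides with the paper's Lemma \ref{lem:d1-d2-at-infty} (2). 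Your closing worry about uniformity is unfounded: your own expansion $\tilde{\chi}^2(y;0,x)=\e^{-(x+y)/2}(x/2)+\O(x^2)$ has an error term uniform in $y$ because $\tilde{F}_{2k}(y)\le 1$, which already yields $\tilde{\chi}^2(y;0,x)\sim x/2\sim 1-\e^{-x/2}$ as $x,y\to 0$, and the atom at the origin contributes nothing to the complementary distribution since the $k=0$ term vanishes, as you note. What your route buys is a self-contained argument avoiding the auxiliary density $p(y;4,x)$; what the paper's buys is a single normalization reused across Lemmas \ref{lem:iv-upper-bound-ratio-1}--\ref{lem:iv-upper-bound-ratio-3}.
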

\begin{proof}
See Section \ref{sec:proof:lem:iv-upper-bound-ratio-3} of Appendix D.
\end{proof}

We now present the proof of Proposition \ref{prop:limsup-bound}.

\begin{proof}[Proof of Proposition \ref{prop:limsup-bound}]
Given any arbitrarily small $0<\epsilon\ll 1$, we have
\begin{equation}
\begin{split}
    & \quad \cbs(K,T,\vhi) - \call(K,T) \\
        & = S N\bigl(d_1(K,T;\vhi)\bigr)
            - K\e^{-\hat{r}T}N\bigl(d_2(K,T;\vhi)\bigr)
            - S [1-\chi^2(y;4,x)]
            + K\e^{-rT}\tilde{\chi}^2(y;0,x) \\
        & = S \chi^2(y;4,x)
            - S [1-N(d_1(K,T;\vhi))]
            - K\e^{-\hat{r}T}N\bigl(d_2(K,T;\vhi)\bigr)
            + K\e^{-rT}\tilde{\chi}^2(y;0,x)    \\
        & = S \chi^2(y;4,x)
            - S \tilde{N}(d_1(K,T;\vhi))
            - K\e^{-\hat{r}T}N(d_2(K,T;\vhi))
            + K\e^{-rT}\tilde{\chi}^2(y;0,x) \\
        & = K\e^{-rT}\tilde{\chi}^2(y;0,x) \\
        & \quad \times
          \left[
                \frac{S \chi^2(y;4,x)}{K \e^{-rT}\tilde{\chi}^2(y;0,x)}
                - \frac{S \tilde{N}(d_1(K,T;\vhi))}{K \e^{-rT}\tilde{\chi}^2(y;0,x)}
                - \frac{K\e^{-\hat{r}T}N(d_2(K,T;\vhi))}{K \e^{-rT}\tilde{\chi}^2(y;0,x)}
                + 1
           \right].
\end{split}
\end{equation}
By Lemmas \ref{lem:iv-upper-bound-ratio-1}--\ref{lem:iv-upper-bound-ratio-3}, the square bracketed term is strictly positive for
all sufficiently large $T$; and it tends to $1$ as $T$ tends to
infinity. Since $\call(K,T) = \cbs(K,T,\iv(K,T))$ for all $K,T \in (0,\infty)$,
we get
\begin{equation}
    \cbs(K,T;\vhi) - \cbs(K,T;\iv(K,T))
        = \cbs(K,T;\vhi) - \call(K,T) > 0.
\end{equation}
for all $T$ greater than some sufficiently large constant $T_\epsilon$. Now note that $\cbs(K,T;v)$ is strictly increasing in $v$, other things being equal. So for any $0<\epsilon \ll 1$,
\begin{equation}
  \vhi \ge \iv(K,T), \quad  K \in (0,\infty),
\end{equation}
when $T$ is sufficiently large. Since the implied volatility $\iv$ is bounded below by zero, for each fixed $K \in (0,\infty)$, $\iv(K,T)$ can be considered as a bounded infinite sequence in the time interval $[0,\infty]$. This implies that $\iv(K,T)$ has a convergent subsequence in $T$ as $T \to \infty$. Further, as $\epsilon$ can be made arbitrarily small, we must have
\begin{equation} \label{eq:proof:limsup-bound}
  \limsup_{T \to \infty} \iv(K,T)
  \le \lim_{\epsilon \to 0} \vhi
  = \sqrt{2(r+\eta)}.
\end{equation}
Consequently, $\iv$ has a large time limit in $[0,\sqrt{2(r+\eta)}]$.
\end{proof}
\begin{remark}
In the proof of the large time limit in Section \ref{subsec:main-proof-large-time-limit} below,  we will show that in fact
\begin{equation}
  \limsup_{T \to \infty} \iv(K,T)
  < \sqrt{2(r+\eta)}.
\end{equation}
Our proof will utilize series expansions of the noncentral chi-square distributions.
\end{remark}

\subsection{Step (II): proof of the large time limit} \label{subsec:main-proof-large-time-limit}

We shall present some preliminary lemmas first. Note that $S$, $\alpha$ and $\eta$ are fixed parameters.

\begin{lemma} \label{lem:cbs-call-R}
Assume the MMM and let $v \in [0,\infty]$. Then for any strike $K$ and expiry $T$, the generic Black--Scholes call price with volatility $v$ and the MMM call price can be written respectively as
\begin{equation} \label{eq:cbs-call-R}
\left\{
\begin{split}
  \cbs(K,T;v)
    & = S - S \rcbs(K,T;v), \\
  \call(K,T)
    & = S - S \rcall(K,T), \\
\end{split}
\right.
\end{equation}
where
\begin{equation} \label{eq:rcbs-rcall-def}
\left\{
\begin{split}
  \rcbs(K,T;v)
    & = \tilde{N}(d_1(K,T;v))
        + \frac{K}{S} \e^{-\hat{r} T}
        - \frac{K}{S} \e^{-\hat{r} T} \tilde{N}(d_2(K,T;v)), \\
  \rcall(K,T)
    & = \frac{K}{S} \e^{-r T}
        + \chi^2(y;4,x)
        - \frac{K}{S} \e^{-r T} \chi^2(y;0,x).
\end{split}
\right.
\end{equation}
\end{lemma}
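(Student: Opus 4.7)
The proof is a direct algebraic rearrangement of the pricing formulas, so the main task is simply bookkeeping rather than any substantive analysis. My plan is to start from the closed-form expression for $\cbs$ given in \eqref{eq:bs-2} and from the MMM call price formula \eqref{eq:c-1}, and in each case factor out $S$ after replacing distribution functions by their complements.

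First, for the Black--Scholes side, I would write
\begin{equation*}
    \cbs(K,T;v)
        = S N(d_1(K,T;v)) - K\e^{-\hat{r}T} N(d_2(K,T;v))
\end{equation*}
and substitute $N(d_i) = 1 - \tilde{N}(d_i)$. Collecting the constant terms yields
\begin{equation*}
    \cbs(K,T;v)
        = S - S\tilde{N}(d_1(K,T;v)) - K\e^{-\hat{r}T} + K\e^{-\hat{r}T}\tilde{N}(d_2(K,T;v)),
\end{equation*}
and pulling out an overall $S$ produces exactly the bracketed expression for $\rcbs(K,T;v)$ in \eqref{eq:rcbs-rcall-def}. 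The values $v=0$ and $v=\infty$ are handled by the usual conventions $N(\pm\infty) \in \{0,1\}$, so the identity extends to the endpoints as stated.

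The MMM side is entirely parallel. Starting from \eqref{eq:c-1}, namely
\begin{equation*}
    \call(K,T) = S\tilde{\chi}^2(y;4,x) - K\e^{-rT}\tilde{\chi}^2(y;0,x),
\end{equation*}
I would use $\tilde{\chi}^2(y;\delta,x) = 1 - \chi^2(y;\delta,x)$ from \eqref{eq:c-2} to rewrite
\begin{equation*}
    \call(K,T) = S - S\chi^2(y;4,x) - K\e^{-rT} + K\e^{-rT}\chi^2(y;0,x),
\end{equation*}
and then factor $S$ out to recover $\rcall(K,T)$ in \eqref{eq:rcbs-rcall-def}.

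There is no real obstacle here; the lemma is effectively a definitional reformulation that prepares a common template ``$S$ minus $S$ times a residual'' for comparing $\cbs$ and $\call$ in the subsequent large-time arguments. The only thing worth pointing out in the write-up is that $\hat{r}$ appearing in $\rcbs$ comes from the MMM simplification \eqref{eq:bs-2} of the generic Black--Scholes price, so the formulas for $\rcbs$ and $\rcall$ are genuinely comparable despite using different interest rate exponents.
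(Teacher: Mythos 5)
Your proposal is correct and follows essentially the same route as the paper's own proof: rewrite $N(d_i)=1-\tilde{N}(d_i)$ and $\tilde{\chi}^2 = 1-\chi^2$ in \eqref{eq:bs-2} and \eqref{eq:c-1}, expand, and factor out $S$ to read off $\rcbs$ and $\rcall$. Nothing is missing.
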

\begin{proof}
See Section \ref{subsec:proof:lem:cbs-call-R} of Appendix D.
\end{proof}

We need the following lemma for the chi-square distribution.
\begin{lemma} \label{lem:chisq-bounds}
Under the MMM the follow inequalities are valid the noncentral $\chi^2$ functions for all $T \in (0,\infty)$:
\begin{equation} \label{eq:lem:chisq-bounds-1}
\left\{
\begin{split}
    & \e^{-(x+y)/2}
      \frac{y^2}{4}
      \sum_{j=0}^\infty \frac{(xy/4)^j}{j!(2+j)!}
      \le \chi^2(y;4,x)
      \le \e^{-x/2}
      \frac{y^2}{4}
      \sum_{j=0}^\infty \frac{(xy/4)^j}{j!(2+j)!} \\
    & \e^{-(x+y)/2}
      \sum_{j=1}^\infty \frac{(xy/4)^j}{(j!)^2}
      + \e^{-x/2}
      \le \chi^2(y;0,x)
      \le \e^{-x/2}
      \sum_{j=1}^\infty \frac{(xy/4)^j}{(j!)^2}
      + \e^{-x/2}.
\end{split}
\right.
\end{equation}
Moreover, for every fixed $K$ (and $S, \alpha, \eta$) there exists a positive constant $T_1=T_1(K;S,\alpha,\eta)<\infty$ such that for all $T>T_1$,
\begin{equation} \label{eq:lem:chisq-bounds-2}
\left\{
\begin{split}
    & \e^{-(x+y)/2}
            \frac{y^2}{4}
            \left[
                \frac{1}{2}
                + \frac{xy}{24}
                +  \frac{(xy)^2}{768}
            \right]
      \le \chi^2(y;4,x)
      \le \e^{-x/2}
            \frac{y^2}{4}
            \left[
                \frac{1}{2}
                + \frac{xy}{24}
                +  (xy)^2
            \right], \\
    & \e^{-x/2}
            + \e^{-(x+y)/2}
                \left[
                    \frac{xy}{4}
                    + \frac{(xy)^2}{64}
                \right]
      \le \chi^2(y;0,x)
      \le \e^{-x/2}
            + \e^{-x/2}
                \left[
                    \frac{xy}{4}
                    + (xy)^2
                \right]. \\
\end{split}
\right.
\end{equation}
\end{lemma}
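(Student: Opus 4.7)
The plan is to obtain both inequalities \eqref{eq:lem:chisq-bounds-1} and \eqref{eq:lem:chisq-bounds-2} by expanding $p(z;\delta,x)$ as a power series in $z$ (via the Bessel series for $I_\nu$), bounding the exponential factor $e^{-(x+z)/2}$ monotonically over $z \in [0,y]$, and then integrating term by term. The strengthening to polynomial bounds in \eqref{eq:lem:chisq-bounds-2} will come from the observation that $x$ and $y$, and hence $xy$, tend to $0$ as $T \to \infty$.

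First I would record, using $I_\nu(w)=\sum_{j\ge 0}(w/2)^{\nu+2j}/[j!\,\Gamma(\nu+j+1)]$ together with $I_{-1}=I_1$, the identities
\begin{equation*}
p(z;4,x)=\frac{z}{4}\,e^{-(x+z)/2}\sum_{j=0}^{\infty}\frac{(xz/4)^{j}}{j!(j+1)!},\qquad
p(z;0,x)=\frac{x}{4}\,e^{-(x+z)/2}\sum_{j=0}^{\infty}\frac{(xz/4)^{j}}{j!(j+1)!}.
\end{equation*}
For $z\in[0,y]$, the bounds $e^{-(x+y)/2}\le e^{-(x+z)/2}\le e^{-x/2}$ let one pull the exponential outside the integral. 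Because all terms in the series are nonnegative, Tonelli's theorem justifies termwise integration, which produces factors $\int_{0}^{y}z^{j+1}\,\d z=y^{j+2}/(j+2)$ for $\chi^{2}(y;4,x)$ and $\int_{0}^{y}z^{j}\,\d z=y^{j+1}/(j+1)$ for $\int_{0}^{y}p(z;0,x)\,\d z$. After an obvious index shift $k=j+1$ in the latter, the bookkeeping of factorials recombines into $\frac{y^{2}}{4}\sum_{j}(xy/4)^{j}/[j!(j+2)!]$ and $\sum_{k\ge 1}(xy/4)^{k}/(k!)^{2}$, respectively, establishing \eqref{eq:lem:chisq-bounds-1} once the extra $e^{-x/2}$ in the definition of $\chi^{2}(y;0,x)$ is put back.

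For \eqref{eq:lem:chisq-bounds-2}, I would invoke $x=S/\varphi(T)$ and $y=Ke^{-rT}/\varphi(T)$ with $\varphi(T)=\alpha(e^{\eta T}-1)/(4\eta)\to\infty$, so $xy\to 0$ as $T\to\infty$. The lower bounds in \eqref{eq:lem:chisq-bounds-2} follow immediately by truncating the nonnegative series after the first three, respectively two, terms, giving $\tfrac12+\tfrac{xy}{24}+\tfrac{(xy)^{2}}{768}$ and $\tfrac{xy}{4}+\tfrac{(xy)^{2}}{64}$. For the upper bounds I would bound the remaining tails $\sum_{j\ge 3}(xy/4)^{j}/[j!(j+2)!]$ and $\sum_{j\ge 3}(xy/4)^{j}/(j!)^{2}$ termwise by a geometric series in $xy/4$; each tail is therefore $O((xy)^{3})$, so it is dominated by $\bigl(1-\tfrac{1}{768}\bigr)(xy)^{2}$ and $\bigl(1-\tfrac{1}{64}\bigr)(xy)^{2}$ respectively once $xy$ is sufficiently small. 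Since $xy$ depends on $T$ only through the bounded factor $e^{-rT}$ and the factor $1/\varphi(T)^{2}$, one can choose a single $T_{1}=T_{1}(K;S,\alpha,\eta)$ (independent of $r\ge 0$) such that the upper bounds hold for all $T>T_{1}$.

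The proof is essentially computational and I do not expect any substantive obstacle. The only delicate point is the tail comparison in the second step: one has to verify that the constants pulled out when comparing the geometric bound on $\sum_{j\ge 3}(xy/4)^{j}/[j!(j+2)!]$ against $\bigl(1-\tfrac{1}{768}\bigr)(xy)^{2}$ involve only $\alpha,\eta,S,K$, so that the threshold $T_{1}$ genuinely does not depend on $r$. Since $y\le Ke^{-rT}/\varphi(T)\le K/\varphi(T)$ for $r\ge 0$, replacing $y$ by $K/\varphi(T)$ in the tail estimate gives a bound purely in $(K;S,\alpha,\eta)$, from which the uniform choice of $T_{1}$ is immediate.
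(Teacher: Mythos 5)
Your proof is correct and follows essentially the same route as the paper's: bound $\e^{-(x+z)/2}$ between $\e^{-(x+y)/2}$ and $\e^{-x/2}$ on $[0,y]$, reduce to the Bessel power series, truncate for the lower bounds, and control the tail by $(xy)^2$ using $xy \to 0$ as $T \to \infty$. The only cosmetic difference is that the paper integrates exactly first (via $\int_0^{w} u^2 I_1(u)\,\d u = w^2 I_2(w)$ and $\int_0^{w} I_1(u)\,\d u = I_0(w)-1$) and then expands $I_2$ and $I_0$, whereas you expand the integrand and integrate termwise; both yield the identical series and bounds.
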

\begin{proof}
See Section \ref{subsec:proof:lem:chisq-bounds} of Appendix D.
\end{proof}

Let
\begin{equation}
    \rcalll(K,T)
         = \e^{(r+\eta)T-(x+y)/2}
            \frac{y^2}{4}
            \left[
                \frac{1}{2}
                + \frac{xy}{24}
                +  \frac{(xy)^2}{768}
            \right]
         -\frac{K}{S} \e^{\eta T-x/2}
            \left[
                \frac{xy}{4}
                + (xy)^2
            \right]
\end{equation}
\begin{equation}
    \rcallu(K,T)
        = \e^{(r+\eta)T-x/2}
            \frac{y^2}{4}
            \left[
                \frac{1}{2}
                + \frac{xy}{24}
                +  (xy)^2
            \right]
         -\frac{K}{S} \e^{\eta T-(x+y)/2}
            \left[
                \frac{xy}{4}
                + \frac{(xy)^2}{64}
            \right]
\end{equation}
Then we have the following lemmas

\begin{lemma} \label{lem:R-asymptotics}
Assume the MMM and let $v\in (0,\sqrt{2(r+\eta)})$. Then for sufficiently large $T$,
\begin{equation} \label{eq:lem:R-asymptotics-1}
    \frac{K}{S}e^{-[\hat{r}-(r+\eta)]T} + \rcalll(K,T)
    \le \e^{(r+\eta)T}\rcall(K,T)
    \le \frac{K}{S}e^{-[\hat{r}-(r+\eta)]T} + \rcallu(K,T).
\end{equation}
Moreover, as $T \to \infty$,
\begin{equation} \label{eq:lem:R-asymptotics-2}
  \rcalll(K,T) = -\frac{2\eta^2 K^2}{\alpha^2} \e^{-(r+\eta)T}
    -\frac{760 SK^3\eta^4}{3\alpha^4} \e^{-2(r+\eta)T-\eta T}
    + \O(\e^{-3(r+\eta)T-2\eta T}),
\end{equation}
\begin{equation}  \label{eq:lem:R-asymptotics-3}
  \rcallu(K,T)
  = -\frac{2\eta^2 K^2}{\alpha^2} \e^{-(r+\eta)T}
            -\frac{4 SK^3\eta^4}{3\alpha^4} \e^{-2(r+\eta)T-\eta T}
            + \O(\e^{-3(r+\eta)T-2\eta T}),
\end{equation}
and
\begin{equation} \label{eq:lem:R-asymptotics-4}
\begin{split}
    & \e^{(r+\eta)T}\rcbs(K,T;v) \\
    &     = \frac{K}{S} \e^{-[\hat{r}-(r+\eta)]T}
%        & \quad
            - \frac{2K\eta}{\alpha}  \frac{v\sqrt{T}}{(d_2+v\sqrt{T})d_2\sqrt{2\pi}} \e^{-d_2^2/2}
            + \frac{2K\eta}{\alpha} n(d_2) \O(d_2^{-3}), \\
\end{split}
\end{equation}
where $d_2=d_2(K,T;v)$.
\end{lemma}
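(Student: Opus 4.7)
The proof naturally splits into four parts, corresponding to \eqref{eq:lem:R-asymptotics-1}--\eqref{eq:lem:R-asymptotics-4}. For the inequality \eqref{eq:lem:R-asymptotics-1}, I would first rewrite
\begin{equation*}
  \e^{(r+\eta)T}\rcall(K,T) = \tfrac{K}{S}\e^{\eta T}[1-\chi^2(y;0,x)] + \e^{(r+\eta)T}\chi^2(y;4,x),
\end{equation*}
observe via \eqref{eq:rhat-1} and \eqref{eq:zero-coupond-bond} that $\tfrac{K}{S}\e^{-[\hat{r}-(r+\eta)]T}=\tfrac{K}{S}\e^{\eta T}(1-\e^{-x/2})$, and then substitute the upper bound of $\chi^2(y;4,x)$ together with the lower bound of $\chi^2(y;0,x)$ from \eqref{eq:lem:chisq-bounds-2} to obtain the right-hand inequality, swapping the two bounds to obtain the left-hand one.

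For the expansions \eqref{eq:lem:R-asymptotics-2}--\eqref{eq:lem:R-asymptotics-3} the pivotal observation is the identity $\tfrac{K}{S}\e^{\eta T}\cdot\tfrac{xy}{4}=\e^{(r+\eta)T}\cdot\tfrac{y^2}{4}$, a direct consequence of $y/x=(K/S)\e^{-rT}$ from \eqref{eq:c-2}. This lets us factor $\e^{(r+\eta)T}\tfrac{y^2}{4}\sim\tfrac{4\eta^2K^2}{\alpha^2}\e^{-(r+\eta)T}$ out of both summands in $\rcalll$ (and in $\rcallu$), reducing each to this prefactor times a bracket in $x$, $y$, and $xy$. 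Expanding $x=\tfrac{4\eta S}{\alpha}\sum_{k\ge 1}\e^{-k\eta T}$ and $y=\tfrac{4\eta K}{\alpha}\sum_{k\ge 1}\e^{-(r+k\eta)T}$ as geometric series and Taylor-expanding $\e^{-(x+y)/2}$, $\e^{-x/2}$, and the polynomials in $xy$, the leading constants combine as $\tfrac{1}{2}-1=-\tfrac{1}{2}$, yielding the dominant term $-\tfrac{2\eta^2K^2}{\alpha^2}\e^{-(r+\eta)T}$. The first nontrivial $xy$-corrections, namely $\tfrac{xy}{24}-4xy=-\tfrac{95xy}{24}$ for $\rcalll$ and $\tfrac{xy}{24}-\tfrac{xy}{16}=-\tfrac{xy}{48}$ for $\rcallu$, give the stated second terms at order $\e^{-(2r+3\eta)T}$, and the uniform Taylor remainder bounds deliver the $\O(\e^{-(3r+5\eta)T})$ error.

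For \eqref{eq:lem:R-asymptotics-4} the plan is to use the Black--Scholes identity $n(d_1)=(K/S)\e^{-\hat{r}T}n(d_2)$, which follows from $d_1^2-d_2^2=2\ln(S/K)+2\hat{r}T$, together with Mills' ratio $\tilde{N}(d)=\tfrac{n(d)}{d}(1+\O(d^{-2}))$ as $d\to\infty$. Writing $\rcbs=\tfrac{K}{S}\e^{-\hat{r}T}+\tilde{N}(d_1)-\tfrac{K}{S}\e^{-\hat{r}T}\tilde{N}(d_2)$, the Black--Scholes identity factors $(K/S)\e^{-\hat{r}T}n(d_2)$ out of the last two terms and the Mills leading parts telescope as $\tfrac{1}{d_1}-\tfrac{1}{d_2}=-\tfrac{v\sqrt{T}}{(d_2+v\sqrt{T})d_2}$; multiplying by $\e^{(r+\eta)T}$ gives \eqref{eq:lem:R-asymptotics-4}. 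The hypothesis $v<\sqrt{2(r+\eta)}$ is exactly what guarantees $d_2\to+\infty$ at the rate $\tfrac{r+\eta-v^2/2}{v}\sqrt{T}$, since the small-$x$ expansion of $\ln(1-\e^{-x/2})$ in \eqref{eq:rhat-1} shows $\hat{r}\to r+\eta$, so the Mills expansion is valid.

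The main obstacle is the bookkeeping behind \eqref{eq:lem:R-asymptotics-2}--\eqref{eq:lem:R-asymptotics-3}: after the factoring trick one must carry the Taylor expansions of $\e^{-(x+y)/2}$, $\e^{-x/2}$, and the $(xy)^k$-polynomials several orders past the leading cancellation, track the cancellations of intermediate-order contributions between the two summands, and match the precise constant $-\tfrac{760SK^3\eta^4}{3\alpha^4}$ at the stated exponent. Part \eqref{eq:lem:R-asymptotics-4} is conceptually simpler but also requires care to confirm that the remainder $\tfrac{2K\eta}{\alpha}n(d_2)\O(d_2^{-3})$ absorbs both the higher-order Mills terms and the deviation of $(K/S)\e^{-[\hat{r}-(r+\eta)]T}$ from its limit $2\eta K/\alpha$.
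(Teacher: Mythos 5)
Your proposal follows the paper's own route almost exactly. Part \eqref{eq:lem:R-asymptotics-1} is the paper's argument verbatim: rewrite $\e^{(r+\eta)T}\rcall$, use $\tfrac{K}{S}\e^{-[\hat r-(r+\eta)]T}=\tfrac{K}{S}\e^{\eta T}(1-\e^{-x/2})$, and insert the bounds \eqref{eq:lem:chisq-bounds-2} with exactly the sign pattern you indicate. Part \eqref{eq:lem:R-asymptotics-4} is also the paper's argument: Mills' ratio for $\tilde N(d_1)$ and $\tilde N(d_2)$, the identity $Sn(d_1)=K\e^{-\hat rT}n(d_2)$, the telescoping $\tfrac{1}{d_1}-\tfrac{1}{d_2}=-\tfrac{v\sqrt T}{(d_2+v\sqrt T)d_2}$, and the expansion \eqref{eq:exp-short-expansion} to replace $\tfrac{K}{S}\e^{-[\hat r-(r+\eta)]T}$ by $\tfrac{2K\eta}{\alpha}$ in the subleading terms. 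For \eqref{eq:lem:R-asymptotics-2}--\eqref{eq:lem:R-asymptotics-3} the paper parametrizes $xy=\gamma\e^{-(r+\eta)T-\eta T}$ and $y^2=\lambda\e^{-2(r+\eta)T}$ and groups by powers of $\e^{-(r+\eta)T}$; your factoring identity $\tfrac{K}{S}\e^{\eta T}\tfrac{xy}{4}=\e^{(r+\eta)T}\tfrac{y^2}{4}$ is an equivalent, arguably cleaner, organization of the same computation, and your coefficients ($\tfrac12-1=-\tfrac12$, $-\tfrac{95}{24}xy$, $-\tfrac{1}{48}xy$) reproduce the stated constants correctly.

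One caution about the step you describe as ``tracking the cancellations of intermediate-order contributions between the two summands'': they do not cancel. The prefactor $\e^{(r+\eta)T}\tfrac{y^2}{4}$ equals $\tfrac{4\eta^2K^2}{\alpha^2}\e^{-(r+\eta)T}\bigl(1+2\e^{-\eta T}+\cdots\bigr)$, and the bracket equals $-\tfrac12+\tfrac{x}{4}+\cdots$ with $x=\O(\e^{-\eta T})$, so $\rcalll$ and $\rcallu$ each carry a genuine term of order $\e^{-(r+2\eta)T}$, which dominates the displayed second term of order $\e^{-2(r+\eta)T-\eta T}$. The paper's own proof has the same gap --- it silently replaces the $T$-dependent coefficient of $\e^{-(r+\eta)T}$ by its limit --- and it is harmless downstream because only the leading term $-\tfrac{2\eta^2K^2}{\alpha^2}\e^{-(r+\eta)T}$, indeed only its exponential order, enters the proof of Theorem \ref{thm:large-iv}. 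But if you execute your plan literally you will not be able to verify \eqref{eq:lem:R-asymptotics-2}--\eqref{eq:lem:R-asymptotics-3} beyond leading order as stated; you should either record the second term only up to an $\O(\e^{-(r+2\eta)T})$ error or keep the $T$-dependent coefficients explicit.
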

\begin{proof}
See Section \ref{subsec:proof:lem:R-asymptotics} of Appendix D.
\end{proof}

%\begin{lemma}
%Assume the MMM and let $v\in (0,\sqrt{2(r+\eta)})$. Then as $T \to \infty$,
%\begin{equation}
%\begin{split}
%    & \e^{(r+\eta)T}\rcbs(K,T;v) \\
%    &     = \frac{K}{S} \e^{-[\hat{r}-(r+\eta)]T}
%%        & \quad
%            - \frac{2K\eta}{\alpha}  \frac{v\sqrt{T}}{(d_2+v\sqrt{T})d_2\sqrt{2\pi}} \e^{-d_2^2/2}
%            + \frac{2K\eta}{\alpha} n(d_2) \O(d_2^{-3}), \\
%\end{split}
%\end{equation}
%where $d_2=d_2(K,T;v)$.
%\end{lemma}
%\begin{proof}
%{\color{red} See the appendix}
%\end{proof}

\begin{lemma} \label{lem:d2-v}
Assume the MMM and let $v_* = \sqrt{2(3-2\sqrt{2})(r+\eta)}$. Then for all sufficiently large $T$,
\begin{equation} \label{eq:d2-v}
\left\{
  \begin{array}{ll}
    \frac{1}{2}d_2^2(K,T;v) - (r+\eta)T > c_1 T, & \hbox{if $v \in (0,v_*)$,} \\
    \frac{1}{2}d_2^2(K,T;v) - (r+\eta)T < - c_2 T, & \hbox{if $v \in (v_*, \sqrt{2(r+\eta)})$,}
  \end{array}
\right.
\end{equation}
where $c_1$ and $c_2$ are some strictly positive constants dependent only on $K,S,v,r,\eta$.
\end{lemma}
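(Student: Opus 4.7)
The plan is to expand $d_2(K,T;v)$ explicitly for large $T$ and reduce both inequalities to the sign of a single quadratic expression in $v^2$, whose roots will happen to include $v_*^2$ exactly.

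First I would obtain an asymptotic expansion of the yield-to-maturity $\hat{r}(T)$. By \eqref{eq:rhat-1}, $\hat{r}(T) = r - T^{-1}\ln(1-\e^{-x/2})$ with $x = S/\varphi(T) = 4S\eta/[\alpha(\e^{\eta T}-1)]$. Since $x \to 0$ exponentially fast as $T \to \infty$, $1 - \e^{-x/2} = x/2 + \O(x^2)$, and so
\[
\hat{r}(T)\,T = (r+\eta)T - \ln\!\left(\frac{2S\eta}{\alpha}\right) + \O(\e^{-\eta T}).
\]
Substituting into $d_2 = [\ln(S/K) + (\hat{r}-v^2/2)T]/(v\sqrt{T})$, writing $\mu(v) := r+\eta - v^2/2$ and $B := \ln(S/K) - \ln(2S\eta/\alpha)$, I get
\[
v\sqrt{T}\,d_2(K,T;v) = \mu(v)\,T + B + \O(\e^{-\eta T}),
\]
and squaring yields the clean decomposition
\[
\tfrac{1}{2}\,d_2^2(K,T;v) - (r+\eta)T = \left[\frac{\mu(v)^2}{2v^2} - (r+\eta)\right]T + \frac{B\,\mu(v)}{v^2} + \O(1/T).
\]

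The core algebraic observation is then the identity
\[
\frac{\mu(v)^2}{2v^2} - (r+\eta) = \frac{v^4 - 12(r+\eta)\,v^2 + 4(r+\eta)^2}{8v^2}.
\]
Viewed as a quadratic in $u = v^2$, the numerator has roots $u = 2(3\pm 2\sqrt{2})(r+\eta)$; the smaller one is exactly $v_*^2$. Since $v_*^2 = 2(3-2\sqrt{2})(r+\eta) < 2(r+\eta) < 2(3+2\sqrt{2})(r+\eta)$, the interval $(0, v_*)$ sits entirely to the left of both roots, so the quartic is strictly positive there; meanwhile $(v_*, \sqrt{2(r+\eta)})$ sits strictly between the two roots, so the quartic is strictly negative there.

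Therefore, for any fixed $v \in (0, v_*)$ the bracketed coefficient in the decomposition above is a strictly positive constant depending on $K,S,v,r,\eta$, and I would take $c_1$ to be any positive number less than, say, half that coefficient; the $\O(1/T)$ remainder is dominated by the $\Theta(T)$ leading term for $T$ large enough, yielding the first inequality. The argument for $v \in (v_*, \sqrt{2(r+\eta)})$ is identical with the reversed sign, giving the second inequality with a corresponding $c_2 > 0$. The only genuinely delicate step is the first one—controlling $\hat{r}(T)T - (r+\eta)T$ uniformly by an $\O(\e^{-\eta T})$ term—but this follows immediately from the exponential decay of $x$ with $\eta > 0$, so the bulk of the lemma is the elementary observation that $v_*^2$ is the smaller root of the quartic $v^4 - 12(r+\eta)v^2 + 4(r+\eta)^2$.
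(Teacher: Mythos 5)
Your proof is correct and follows essentially the same route as the paper's: both reduce the claim to the sign of the quartic $v^4 - 12(r+\eta)v^2 + 4(r+\eta)^2$ (the paper writes it as $Q_\infty(v)$ after letting $\hat r \to r+\eta$, you substitute the expansion of $\hat r T$ before squaring), with $v_*^2$ identified as the smaller root of the associated quadratic in $v^2$. The only difference is cosmetic — you pass to the limit of $\hat r$ earlier and track an explicit $\O(\e^{-\eta T})$ error, whereas the paper keeps $\hat r(T)$ symbolic in $Q(K,T;v)$ and takes the limit at the end.
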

\begin{proof}
See Section \ref{subsec:proof:lem:d2-v} of Appendix D.
\end{proof}
\begin{remark}
By virtue of Proposition \ref{prop:limsup-bound}, we only need to investigate the limiting behavior of the implied volatility (and the limiting behavior of $d_2$) in the interval $(0,\sqrt{2(r+\eta)})$.
\end{remark}

We now present the proof of the large time limit of the implied volatility in the MMM.
\begin{proof}[Proof of Theorem \ref{thm:large-iv}]
Lemma \ref{lem:cbs-call-R} shows that for any $v \in (0,\sqrt{2(r+\eta)})$,
\begin{equation} \label{eq:R-definition}
\left\{
\begin{split}
  \cbs(K,T;v)
    & = S - S \rcbs(K,T;v), \\
  \call(K,T)
    & = S - S \rcall(K,T), \\
\end{split}
\right.
\end{equation}
where $\rcbs$ and $\rcall$ are given by \eqref{eq:rcbs-rcall-def}. Next, Lemma \ref{lem:d2-v} implies that for large enough $T$,
\begin{equation}
\left\{
  \begin{array}{ll}
    \exp\left(-\frac{1}{2}d_2^2(K,T;v)\right) > \exp(-(r+\eta)T - c_1 T), & \hbox{if $v \in (0,v_*)$;} \\
    \exp\left(-\frac{1}{2}d_2^2(K,T;v)\right) < \exp(-(r+\eta)T + c_2 T), & \hbox{if $v \in (v_*, \sqrt{2(r+\eta)})$.}
  \end{array}
\right.
\end{equation}
where $c_1$ and $c_2$ are some positive constants and $v_* =\sqrt{2(3-2\sqrt{2})(r+\eta)}$. Then combining these inequalities with the results of Lemma \ref{lem:R-asymptotics} gives the following properties:
\begin{enumerate}
\item[(i)] if $v \in (0,v_*)$, then for sufficiently large $T$,
\begin{equation}
\begin{split}
  \e^{(r+\eta)T} \rcbs(K,T;v)
    & \ge \frac{K}{S} \e^{-[\hat{r}-(r+\eta)]T} + \rcallu(K,T) \\
    & \ge \e^{(r+\eta)T}\rcall(K,T);
\end{split}
\end{equation}
\item[(ii)] if $v \in (v_*, \sqrt{2(r+\eta)})$, then for sufficiently large $T$,
\begin{equation}
\begin{split}
    \e^{(r+\eta)T} \rcbs(K,T;v)
    & \le \frac{K}{S} \e^{-[\hat{r}-(r+\eta)]T} + \rcalll(K,T) \\
    & \le \e^{(r+\eta)T}\rcall(K,T).
\end{split}
\end{equation}
\end{enumerate}
By \eqref{eq:R-definition} and the equality that $\call(K,T) = \cbs(K,T;\iv(K,T))$, we get
\begin{equation}
\left\{
\begin{split}
    & \cbs(K,T;v) \le \call(K,T) = \cbs(K,T;\iv(K,T)),
        \quad \mbox{if $v \in (0,v_*)$;} \\
    & \cbs(K,T;v) \ge \call(K,T) = \cbs(K,T;\iv(K,T)),
        \quad \mbox{if $v \in (v_*, \sqrt{2(r+\eta)})$.} \\
\end{split}
\right.
\end{equation}
As the function $v\mapsto \cbs(K,T;v)$ is monotonically increasing in $v$, we have, for each $K$,
\begin{equation}
\left\{
  \begin{array}{ll}
    \iv(K,T) \ge v, & \hbox{if $v \in (0,v_*)$;} \\
    \iv(K,T) \le v, & \hbox{if $v \in (v_*, \sqrt{2(r+\eta)})$,}
  \end{array}
\right.
\end{equation}
for all sufficiently large $T$. This implies that
\begin{equation}
    \liminf_{T \to \infty} \iv(K,T) \ge v_*
    \quad \mbox{ and } \quad
    \limsup_{T \to \infty} \iv(K,T) \le v_*.
\end{equation}
As a result,
\begin{equation}
  \lim_{T \to \infty} \iv(K,T) = v_* = \sqrt{2(3-2\sqrt{2})(r+\eta)}.
\end{equation}
And the proof is complete.
\end{proof}

\begin{remark} \label{rem:gao-lee-large-time}
The results of Gao and Lee \cite[Corollary 7.8]{gao-lee-2011} do not apply here because the assumptions of their corollary are not satisfied in the MMM. Specifically, for their Case $(+)$ they require, in their notation, $k/L \xrightarrow{\;T \to
\infty\;} \const \in [0,\infty)$; but under the MMM $k/L \xrightarrow{\;T \to
\infty \;} -1$. Similarly, for their Case $(-)$ they need $k/L \xrightarrow{\;T \to
\infty \;} \const \in (0,\infty)$; but under the MMM $k/L \xrightarrow{\;T \to
\infty \;} \infty$.
\end{remark}

\section{Appendix A: properties of the noncentral chi-square distribution} \label{sec:appendix-a}

We list some facts about the noncentral chi-square distribution. Let use make the following remark first.
\begin{remark}
Let $Y_x$ be a chi-square random variable with zero degree of freedom and noncentrality parameter $x$. That is, let $Y_x \sim \chi^2_0(x)$. Then
\begin{equation} \label{eq:chisq-zero-degree-1}
\begin{split}
  \chi^2(y;0,x)
    & = \PP(Y_x \le y)  \qquad (y>0) \\
    & = \PP(Y_x = 0) + \PP(0< Y_x \le y) \\
    & = \e^{-x/2} + \int_0^y p(z;0,x) \, \d z
\end{split}
\end{equation}
because $p(\cdot;0,x)$ is an improper density and
\begin{equation}
  \int_0^\infty p(z;0,x) \,\d z = 1 - \e^{-x/2}.
\end{equation}
See Siegel \cite{siegel-79}. Hence
\begin{equation}
\begin{split}
  \tilde{\chi}^2(y;0,x)
    & = \PP(Y_x > y)  \qquad (y>0) \\
    & = \int_y^\infty p(z;0,x) \, \d z \\
    & = 1- \PP(Y_x \le y) \\
    & = 1 - \PP(Y_x = 0) - \PP(0< Y_x \le y) \\
    & = 1 - \e^{-x/2} - \int_0^y p(z;0,x) \, \d z \\
    & = 1 - \chi^2(y;0,x).
\end{split}
\end{equation}
\end{remark}

\begin{lemma} For the noncentral chi-square distribution $\chi^2$ the
following identities is valid for $x,y>0$ and $\delta \ge 0$:
\begin{eqnarray}
    \frac{\partial p(y;\delta,x)}{\partial x}
        & = & \frac{1}{2} p(y;\delta+2,x)
            - \frac{1}{2} p(y;\delta,x), \label{eq:chisq-property-1} \\
    \frac{\partial p(y;\delta,x)}{\partial x}
        & = & - \frac{\partial p(y;\delta+2,x)}{\partial y}, \label{eq:chisq-property-2} \\
    \frac{\partial \chi^2(y;\delta,x)}{\partial x}
        & = & - p(y;\delta+2,x). \label{eq:chisq-property-3}
\end{eqnarray}
\end{lemma}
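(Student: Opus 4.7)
My plan is to exploit the Poisson mixture representation of the noncentral chi-square density, which expresses the density $p(y;\delta,x)$ as a Poisson($x/2$)-mixture of central chi-square densities:
\begin{equation*}
    p(y;\delta,x) = \sum_{k=0}^\infty \e^{-x/2}\frac{(x/2)^k}{k!}\,g(y;\delta+2k),
    \qquad g(y;n) = \frac{y^{n/2-1}\e^{-y/2}}{2^{n/2}\Gamma(n/2)}.
\end{equation*}
This form isolates all $x$-dependence in the Poisson weights and all $y$-dependence in the central densities $g$, reducing each identity to simple differentiation and a reindexing of the Poisson sum.

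For \eqref{eq:chisq-property-1}, I would differentiate the weight $\e^{-x/2}(x/2)^k/k!$ in $x$; the derivative splits as $-\tfrac{1}{2}\e^{-x/2}(x/2)^k/k! + \tfrac{1}{2}\e^{-x/2}(x/2)^{k-1}/(k-1)!$ for $k\ge 1$. Summing, the first piece reassembles into $-\tfrac{1}{2}p(y;\delta,x)$, while the second, after shifting $j=k-1$, reassembles into $+\tfrac{1}{2}p(y;\delta+2,x)$.

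For \eqref{eq:chisq-property-2}, I would instead differentiate $g$ in $y$. A direct computation using $\Gamma(n/2)=(n/2-1)\Gamma(n/2-1)$ yields the central-density relation $\partial_y g(y;n) = \tfrac{1}{2}g(y;n-2)-\tfrac{1}{2}g(y;n)$. Applying this term-by-term to the Poisson mixture for $p(y;\delta+2,x)$ gives $\partial_y p(y;\delta+2,x) = \tfrac{1}{2}p(y;\delta,x)-\tfrac{1}{2}p(y;\delta+2,x)$, which is exactly the negative of the right-hand side of \eqref{eq:chisq-property-1}.

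For \eqref{eq:chisq-property-3}, I would differentiate $\chi^2(y;\delta,x)=\int_0^y p(z;\delta,x)\,\d z$ under the integral (or its $\delta=0$ variant with the extra $\e^{-x/2}$ atom), substitute \eqref{eq:chisq-property-2} to convert $\partial_x p(z;\delta,x)$ into $-\partial_z p(z;\delta+2,x)$, and integrate. The boundary value $p(0;\delta+2,x)$ is $0$ for $\delta>0$ (since $I_{\delta/2}(\sqrt{xy})\sim (xy)^{\delta/4}\to 0$) and equals $\tfrac{1}{2}\e^{-x/2}$ when $\delta=0$, which in the latter case cancels exactly with the $-\tfrac{1}{2}\e^{-x/2}$ coming from differentiating the point mass in the definition of $\chi^2(y;0,x)$. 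The only mildly delicate point is this careful bookkeeping at $\delta=0$, handled separately by \eqref{eq:chisq-zero-degree-1}; everything else is routine algebra once the Poisson mixture is in place.
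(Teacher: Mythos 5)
Your proposal is correct, but it takes a genuinely different route from the paper: the paper does not derive these identities at all, it simply cites Cohen (1988), Equations (2)--(3), for \eqref{eq:chisq-property-1} and \eqref{eq:chisq-property-2}, and Johnson et al., (29.23e), for \eqref{eq:chisq-property-3}. Your Poisson-mixture argument is a self-contained derivation, and the computations check out: the weight derivative reindexes to give \eqref{eq:chisq-property-1}; the central-density relation $\partial_y g(y;n)=\tfrac{1}{2}g(y;n-2)-\tfrac{1}{2}g(y;n)$ applied to $p(y;\delta+2,x)$ reproduces the negative of \eqref{eq:chisq-property-1} and hence \eqref{eq:chisq-property-2}; and the boundary bookkeeping for \eqref{eq:chisq-property-3} at $\delta=0$ (the atom's derivative $-\tfrac{1}{2}\e^{-x/2}$ cancelling $p(0;\delta+2,x)|_{\delta=0}=\tfrac{1}{2}\e^{-x/2}$) is exactly right and is a point the cited references, which typically assume $\delta>0$, do not obviously cover. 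What your approach buys is transparency and explicit treatment of the $\delta=0$ improper density; what it costs is two routine verifications you should at least note: that the Poisson mixture agrees with the paper's Bessel-function definition in \eqref{eq:c-2} (immediate from the series $I_\nu(z)=\sum_k (z/2)^{\nu+2k}/[k!\,\Gamma(\nu+k+1)]$, with the convention $1/\Gamma(0)=0$ making the $k=0$ component vanish when $\delta=0$, consistent with Siegel's improper density), and that termwise differentiation of the series is legitimate (local uniform convergence of the differentiated series, which is clear since the terms decay factorially).
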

\begin{proof}
The first identity is given by Equation (2) of Cohen
\cite{cohen-88}; the second identity results from Equations (2)
and (3) of Cohen \cite{cohen-88};  the third identity is given by
Johnson et al. \cite[(29.23e)]{johnson-et-al-95}.
\end{proof}
\begin{corollary} \label{cor:chisq-density-derivative}
Subject to the MMM and the condition that $\delta \ge 0$,
\begin{multline} \label{eq:p-T}
    \frac{\partial p(y;\delta,x)}{\partial T}
        = \frac{1}{2}
        \left[
            p(y;\delta-2,x)
            - p(y;\delta,x)
        \right]y_T
            + \frac{1}{2}\left[
                    p(y;\delta+2,x)
                    - p(y;\delta,x)
                \right]x_T
\end{multline}
and
\begin{equation} \label{eq:tilde-chisq-T}
%    \frac{\partial \chi^2(y;\delta,x)}{\partial T}
%        & = & p(y;\delta,x)y_T - p(y;\delta+2,x)x_T,  \label{eq:chisq-T-1} \\
    \frac{\partial \tilde{\chi}^2(y;\delta,x)}{\partial T}
        =  - p(y;\delta,x)y_T + p(y;\delta+2,x)x_T.
\end{equation}
\end{corollary}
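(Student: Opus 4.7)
The proof of Corollary \ref{cor:chisq-density-derivative} is essentially an application of the chain rule, combined with the three identities \eqref{eq:chisq-property-1}--\eqref{eq:chisq-property-3} from the preceding lemma. Under the MMM, the parameters $x$ and $y$ defined in \eqref{eq:c-2} depend on $T$ (while $\delta$ does not), so for any sufficiently smooth function $f(y,x)$ we have
\begin{equation*}
    \frac{\partial}{\partial T} f\bigl(y(T),x(T)\bigr)
    = \frac{\partial f}{\partial y} y_T + \frac{\partial f}{\partial x} x_T.
\end{equation*}
The entire task therefore reduces to computing the partial derivatives of $p(y;\delta,x)$ and $\tilde{\chi}^2(y;\delta,x)$ with respect to $y$ and $x$, and substituting.

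For the first identity \eqref{eq:p-T}, I would proceed as follows. From \eqref{eq:chisq-property-1} applied with $\delta$ replaced by $\delta-2$, I obtain $\partial p(y;\delta-2,x)/\partial x = \tfrac{1}{2}[p(y;\delta,x)-p(y;\delta-2,x)]$; combining this with \eqref{eq:chisq-property-2} in the form $\partial p(y;\delta,x)/\partial y = -\partial p(y;\delta-2,x)/\partial x$ then yields $\partial p(y;\delta,x)/\partial y = \tfrac{1}{2}[p(y;\delta-2,x)-p(y;\delta,x)]$. Together with \eqref{eq:chisq-property-1} itself, the chain rule gives \eqref{eq:p-T} immediately.

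For the second identity \eqref{eq:tilde-chisq-T}, I would start from $\tilde{\chi}^2(y;\delta,x) = 1-\chi^2(y;\delta,x)$ and use $\partial \chi^2(y;\delta,x)/\partial y = p(y;\delta,x)$ (which holds for $\delta>0$ directly from the definition in \eqref{eq:c-2}, and also for $\delta=0$ since differentiating $e^{-x/2}+\int_0^y p(z;0,x)\,\d z$ in $y$ produces $p(y;0,x)$). Combined with \eqref{eq:chisq-property-3}, the chain rule yields \eqref{eq:tilde-chisq-T}.

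There is no real obstacle here; the only small point requiring care is that the identity \eqref{eq:chisq-property-3} is stated for $\delta \ge 0$, which covers the case $\delta=0$ that arises in later sections through the term $\tilde{\chi}^2(y;0,x)$ in the bond and put formulas. A one-line check, namely that differentiating $\chi^2(y;0,x) = e^{-x/2}+\int_0^y p(z;0,x)\,\d z$ in $x$ and applying \eqref{eq:chisq-property-2} telescopes via $p(0;2,x) = \tfrac{1}{2}e^{-x/2}$ to give $-p(y;2,x)$, confirms that the case $\delta=0$ is not exceptional, so the chain-rule computation applies uniformly for all $\delta \ge 0$.
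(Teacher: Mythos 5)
Your proposal is correct and follows essentially the same route as the paper: the chain rule in $T$ combined with the identities \eqref{eq:chisq-property-1}--\eqref{eq:chisq-property-3} for the $x$- and $y$-derivatives. The only cosmetic difference is that for \eqref{eq:tilde-chisq-T} you invoke \eqref{eq:chisq-property-3} directly while the paper re-derives it via the Leibniz rule and \eqref{eq:chisq-property-2}; your explicit check that the $\delta=0$ case telescopes correctly (the boundary term $p(0;2,x)=\tfrac{1}{2}\e^{-x/2}$ cancelling against the derivative of the atom $\e^{-x/2}$) is a welcome detail the paper leaves implicit.
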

\begin{proof}
The chain rule gives
\begin{equation} \label{eq:p-T-proof-1}
    \frac{\partial p(y;\delta,x)}{\partial T}
    =  \frac{\partial p(y;\delta,x)}{\partial y}
            \frac{\partial y}{\partial T}
            + \frac{\partial p(y;\delta,x)}{\partial x}
            \frac{\partial x}{\partial T}.
\end{equation}
By \eqref{eq:chisq-property-1} and \eqref{eq:chisq-property-2},
\begin{equation} \label{eq:p-T-proof-2}
\begin{split}
    \frac{\partial p(y;\delta,x)}{\partial x}
        & = \frac{1}{2} p(y;\delta+2,x)
            - \frac{1}{2} p(y;\delta,x), \\
    \frac{\partial p(y;\delta,x)}{\partial y}
        & = - \frac{\partial p(y;\delta-2,x)}{\partial x}
        = \frac{1}{2} p(y;\delta - 2,x)
            - \frac{1}{2} p(y;\delta,x).
\end{split}
\end{equation}
Combining \eqref{eq:p-T-proof-1} and \eqref{eq:p-T-proof-2} gives
\eqref{eq:p-T}. To prove \eqref{eq:tilde-chisq-T}, we apply the
chain rule and the Leibniz formula for differentiation of definite
integrals to $\chi^2(y;\delta,x)$. That gives
\begin{equation} \label{eq:pde-chiT}
\begin{split}
    \frac{\partial \chi^2(y;\delta,x)}{\partial T}
        & = \frac{\partial \chi^2(y;\delta,x)}{\partial y}
            \frac{\partial y}{\partial T}
            + \frac{\partial \chi^2(y;\delta,x)}{\partial x}
            \frac{\partial x}{\partial T} \\
        & = p(y;\delta,x)y_T + x_T
            \int_0^y \frac{\partial p(z;\delta,x)}{\partial x}
            \, \d z \\
        & = p(y;\delta,x)y_T - x_T
            \int_0^y \frac{\partial p(z;\delta+2,x)}{\partial z}
            \, \d z \\
        & = p(y;\delta,x)y_T - p(y;\delta+2,x)x_T,
\end{split}
\end{equation}
where the third equality results from \eqref{eq:chisq-property-2}.
Then the desired result follows from the identity $\tilde{\chi}^2
= 1- \chi^2$.
\end{proof}

\begin{lemma}
Assume the MMM. Then
\begin{eqnarray}
% \nonumber to remove numbering (before each equation)
  p(y;4,x)
    &=& \frac{y}{x} p(y;0,x), \label{eq:pde-pp1}  \\
  - \frac{2}{x} p(y;4,x)
    &=& p(y;6,x) - \frac{y}{x} p(y;2,x)  .    \label{eq:pde-pp2}
\end{eqnarray}
%\begin{equation} \label{eq:pde-pp1}
%    p(y;\delta,x) = \frac{y}{x} p(y;0,x).
%\end{equation}
%\begin{equation} \label{eq:pde-pp2}
%    p(y;\delta+2,x) - \frac{y}{x} p(y;\delta-2,x)
%        = - \frac{2}{x} p(y;\delta,x).
%\end{equation}
\end{lemma}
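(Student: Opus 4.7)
The plan is to substitute the explicit formula
\begin{equation*}
    p(y;\delta,x) = \tfrac{1}{2}\left(\tfrac{y}{x}\right)^{(\delta-2)/4} \exp\left(-\tfrac{x+y}{2}\right) I_{(\delta-2)/2}\left(\sqrt{xy}\right)
\end{equation*}
from \eqref{eq:c-2} directly into both identities and reduce each claim to a standard identity for modified Bessel functions of the first kind. Set $z = \sqrt{xy}$ throughout; all the exponential factors $\exp(-(x+y)/2)$ and the factor $1/2$ in the definition cancel on either side of the identities, so the content is purely an assertion about the prefactor $(y/x)^{(\delta-2)/4}$ and the Bessel functions $I_{(\delta-2)/2}(z)$.

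For the first identity $p(y;4,x)=(y/x)\,p(y;0,x)$, I would observe that the left-hand side carries the Bessel factor $I_1(z)$ with prefactor $(y/x)^{1/2}$, while the right-hand side gives $(y/x)\cdot(y/x)^{-1/2}I_{-1}(z) = (y/x)^{1/2} I_{-1}(z)$. The identity then reduces to $I_1(z) = I_{-1}(z)$, which is the well-known relation $I_{-n}(z)=I_n(z)$ for integer order (see, e.g., Abramowitz and Stegun 9.6.6) applied at $n=1$.

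For the second identity $-\tfrac{2}{x}p(y;4,x) = p(y;6,x) - \tfrac{y}{x}p(y;2,x)$, I would substitute the definitions to obtain
\begin{equation*}
    -\tfrac{1}{x}\left(\tfrac{y}{x}\right)^{1/2} I_1(z) = \tfrac{1}{2}\left(\tfrac{y}{x}\right) I_2(z) - \tfrac{y}{2x} I_0(z),
\end{equation*}
after canceling the common exponential and the factor $1/2$. Dividing through by $(y/x)^{1/2}/2$ and using $z=\sqrt{xy}$ reduces this to
\begin{equation*}
    I_0(z) - I_2(z) = \tfrac{2}{z} I_1(z),
\end{equation*}
which is the standard recurrence $I_{\nu-1}(z) - I_{\nu+1}(z) = (2\nu/z) I_\nu(z)$ (Abramowitz and Stegun 9.6.26) at $\nu=1$.

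There is essentially no obstacle here beyond bookkeeping of the exponents of $(y/x)$; the content of both identities is carried by the Bessel symmetry $I_{-1}=I_1$ and the three-term recurrence, and once the common exponential and $1/2$ are canceled, the verification is a one-line matching exercise for each identity.
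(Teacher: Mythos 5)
Your proposal is correct and matches the paper's own proof essentially step for step: both arguments substitute the explicit density from \eqref{eq:c-2} and reduce the first identity to $I_{-1}=I_1$ (Abramowitz--Stegun 9.6.6) and the second to the recurrence $I_0(z)-I_2(z)=2I_1(z)/z$ (Abramowitz--Stegun 9.6.26). No substantive difference to report.
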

\begin{proof}
By \eqref{eq:c-2}, the definition of the density $p$,
we get
\begin{equation}
    p(y;4,x) - \frac{y}{x} p(y;0,x)
        = \frac{1}{2}
            (y/x)^{1/2}
            \e^{-(x+y)/2}
            \left[
                I_{1}\left(\sqrt{xy}\right)
                -
                I_{-1}\left(\sqrt{xy}\right)
            \right].
\end{equation}
%\begin{equation}
%\begin{split}
%    & \quad p(y;\delta,x) - \frac{y}{x} p(y;0,x) \\
%    & = \frac{1}{2}
%            \left(y/x\right)^{(\delta-2)/4}
%            \e^{-(x+y)/2}
%            I_{(\delta-2)/2}\left(\sqrt{xy}\right) \\
%    & \quad
%        -
%        \frac{1}{2} (y/x)
%            (y/x)^{(0-2)/4}
%            \e^{-(x+y)/2}
%            I_{(0-2)/2}\left(\sqrt{xy}\right) \\
%%    & = \frac{1}{2}
%%            \left(\frac{y}{x}\right)^{(\delta-2)/4}
%%            \exp\left(-\frac{x+y}{2}\right)
%%            \left[
%%                I_{(\delta-2)/2}\left(\sqrt{xy}\right)
%%                -
%%                I_{(0-2)/2}\left(\sqrt{xy}\right)
%%            \right] \\
%    & = \frac{1}{2}
%            (y/x)^{1/2}
%            \e^{-(x+y)/2}
%            \left[
%                I_{1}\left(\sqrt{xy}\right)
%                -
%                I_{-1}\left(\sqrt{xy}\right)
%            \right] \\
%    & = 0,
%\end{split}
%\end{equation}
Then we obtain \eqref{eq:pde-pp1} by noting that $I_1(\cdot) =
I_{-1}(\cdot)$; see, e.g., \cite[Formula
9.6.6]{abramowitz-stegun-70}. Proceeding similarly with the
definitions of the terms, we have
\begin{equation}
\begin{split}
    \quad p(y;6,x) - \frac{y}{x} p(y;2,x)
%    & = \frac{1}{2}
%            \left(y/x\right)^{(\delta+2-2)/4}
%            \e^{-(x+y)/2}
%            I_{(\delta+2-2)/2}\left(\sqrt{xy}\right) \\
%    & \quad
%        - \left(y/x\right)
%        \frac{1}{2}
%            \left(y/x\right)^{(\delta-2-2)/4}
%            \e^{-(x+y)/2}
%            I_{(\delta-2-2)/2}\left(\sqrt{xy}\right) \\
%    & = \frac{1}{2}
%            \left(y/x\right)^{(\delta+2-2)/4}
%            \e^{-(x+y)/2}
%            \left[
%                I_{(\delta+2-2)/2}\left(\sqrt{xy}\right)
%                -
%                I_{(\delta-2-2)/2}\left(\sqrt{xy}\right)
%            \right] \\
    & = \frac{1}{2}
            \left(y/x\right)
            \e^{-(x+y)/2}
            \left[
                I_{2}\left(\sqrt{xy}\right)
                -
                I_{0}\left(\sqrt{xy}\right)
            \right] \\
%    & = - \frac{1}{2}
%            \left(\frac{y}{x}\right)^{(\delta+2-2)/4}
%            \exp\left(-\frac{x+y}{2}\right)
%            \left[
%                I_{1-1}\left(\sqrt{xy}\right)
%                -
%                I_{1+1}\left(\sqrt{xy}\right)
%            \right] \\
    & = - \frac{1}{2}
            \left(y/x\right)
            \e^{-(x+y)/2}
            \frac{2}{\sqrt{xy}}
                I_{1}\left(\sqrt{xy}\right) \\
%    & = - \frac{1}{2}
%            \left(\frac{y}{x}\right)^{1/2}
%            \left(\frac{y}{x}\right)^{(\delta-2)/4}
%            \exp\left(-\frac{x+y}{2}\right)
%            \frac{2}{\sqrt{xy}}
%                I_{1}\left(\sqrt{xy}\right) \\
%    & = - \left(2/x\right)
%            \frac{1}{2}
%            \left(y/x\right)^{(\delta-2)/4}
%            \e^{-(x+y)/2}
%            I_{(\delta-2)/2}\left(\sqrt{xy}\right) \\
    & = - \frac{2}{x}p(y;4,x),
\end{split}
\end{equation}
where the second equality follows from the identity $I_{1-1}(z)
-I_{1+1}(z) = 2I_1(z)/z$; see e.g. \cite[Formula
9.6.26]{abramowitz-stegun-70}. This proves \eqref{eq:pde-pp2}, and
the proof is thus complete.
\end{proof}

\section{Appendix B: auxiliary results for the small time limit} \label{sec:appendix-b}

\subsection{Differential properties of the call option price}

We now list some basic properties of the call option price. Recall
that we have set $t=0$.

\begin{lemma}
Under the MMM we have the following representations:
\begin{equation} \label{eq:formula-x-y-a}
\left\{
\begin{split}
    &   \varphi
         = \frac{\alpha}{4\eta}\left(\e^{\eta T} -1\right), \quad
        \varphi_T
         = \frac{\alpha}{4} \e^{\eta T}, \quad
        \varphi_{TT}
         = \frac{\alpha\eta}{4} \e^{\eta T}, \\
    &   x
         = \frac{S}{\varphi}, \quad
        x_T
         = - S\frac{\varphi_T}{\varphi^2}, \quad
        x_{TT}
         = 2S \frac{\varphi_T^2}{\varphi^3}
            - S \frac{\varphi_{TT}}{\varphi^2}, \\
    &   y
         = \frac{K\e^{-rT}}{\varphi}, \quad
        y_T
         = -\frac{rK\e^{-rT}}{\varphi}
            - \frac{K\e^{-rT}\varphi_T}{\varphi^2}.
\end{split}
\right.
\end{equation}
Moreover,
\begin{equation}
\begin{split}
    x_{TTT}
        & = - 6 S \varphi^{-4} \varphi_T^3
            + 4 S \varphi^{-3} \varphi_T \varphi_{TT}
            + 2 S \varphi^{-3} \varphi_T \varphi_{TT}
            - S \varphi^{-2} \varphi_{TTT} \\
        & = - 6 S \varphi^{-4} \varphi_T^3
            + 6 S \varphi^{-3} \varphi_T \varphi_{TT}
            - S \varphi^{-2} \varphi_{TTT}, \\
\end{split}
\end{equation}
and
\begin{equation} \label{eq:formula-x-y-b}
    xy = \frac{S K\e^{-rT}}{\varphi^2}, \quad
    \frac{y}{x} = \frac{K\e^{-rT}}{S}, \quad
    \frac{y_T}{x_T}
        = \frac{K}{S \e^{rT}}
            \left(\frac{r + \varphi_T/\varphi}{\varphi_T/\varphi}\right).
\end{equation}
\end{lemma}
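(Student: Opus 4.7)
The plan is to prove the lemma by direct computation from the defining formulas in \eqref{eq:c-2}, namely $\varphi(T) = \frac{\alpha}{4\eta}(\e^{\eta T}-1)$, $x = S/\varphi(T)$, and $y = K\e^{-rT}/\varphi(T)$. The lemma contains no analytic content beyond elementary calculus applied to these three definitions, so the whole argument reduces to careful bookkeeping of the chain and quotient rules.

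First I would differentiate $\varphi$ with respect to $T$ twice (and implicitly a third time for $x_{TTT}$). Each differentiation simply brings down a factor $\eta$ and drops the constant $-1$, giving $\varphi_T = \tfrac{\alpha}{4}\e^{\eta T}$, $\varphi_{TT} = \tfrac{\alpha\eta}{4}\e^{\eta T}$, and $\varphi_{TTT} = \tfrac{\alpha\eta^2}{4}\e^{\eta T}$. Next I would compute $x_T$ from $x = S\varphi^{-1}$ by the chain rule, giving $x_T = -S\varphi^{-2}\varphi_T$, and similarly $x_{TT}$ by differentiating this expression: the product rule produces the two terms $2S\varphi^{-3}\varphi_T^2$ and $-S\varphi^{-2}\varphi_{TT}$ as claimed. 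For $x_{TTT}$ I would differentiate $x_{TT}$ term by term; the first term yields $-6S\varphi^{-4}\varphi_T^3 + 4S\varphi^{-3}\varphi_T\varphi_{TT}$, and the second yields $2S\varphi^{-3}\varphi_T\varphi_{TT} - S\varphi^{-2}\varphi_{TTT}$, which combine into the stated formula after collecting like terms. For $y_T$ I would apply the product rule to $y = K\e^{-rT}\varphi^{-1}$, generating one contribution from differentiating $\e^{-rT}$ and one from differentiating $\varphi^{-1}$, exactly matching the formula.

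For the composite ratios in \eqref{eq:formula-x-y-b}, the identities $xy = SK\e^{-rT}/\varphi^2$ and $y/x = K\e^{-rT}/S$ are immediate from the definitions. For $y_T/x_T$ I would substitute the already-computed expressions
\[
y_T = -\frac{K\e^{-rT}}{\varphi}\left(r + \frac{\varphi_T}{\varphi}\right), \qquad x_T = -\frac{S\varphi_T}{\varphi^2} = -\frac{S}{\varphi}\cdot\frac{\varphi_T}{\varphi},
\]
so that taking their ratio cancels the common factor $-1/\varphi$ and leaves precisely $\tfrac{K}{S\e^{rT}}\big(r+\varphi_T/\varphi\big)\big/(\varphi_T/\varphi)$.

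There is no real obstacle here: every step is a mechanical differentiation, and the only place one must be attentive is in the computation of $x_{TTT}$, where the two middle terms of the form $S\varphi^{-3}\varphi_T\varphi_{TT}$ must be collected correctly (the $4$ and $2$ combining to $6$ as displayed). No appeal to the chi-square identities, Bessel function properties, or any of the results in Sections~\ref{sec:rr-generalization}--\ref{sec:proof-large-iv} is required.
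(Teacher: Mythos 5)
Your proposal is correct and matches the paper's proof exactly: the paper's own proof of this lemma is simply ``Direct differentiation,'' and your step-by-step application of the chain and product rules to the definitions of $\varphi$, $x$, and $y$ in \eqref{eq:c-2} is precisely that computation carried out in full. All the individual derivative formulas and the collection of the $4S\varphi^{-3}\varphi_T\varphi_{TT}$ and $2S\varphi^{-3}\varphi_T\varphi_{TT}$ terms in $x_{TTT}$ check out.
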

\begin{proof}
Direct differentiation.
\end{proof}

\begin{lemma} \label{lem:cT-cTT}
Under the MMM, we have
\begin{equation} \label{eq:cT}
  \call_T(K,T)
    = - \frac{2 S}{x} x_T p(y;4,x)
        + r K \e^{-rT} \tilde{\chi}^2(y;0,x),
\end{equation}
and
\begin{equation} \label{eq:cTT}
\begin{split}
        & \quad    \call_{TT}(K,T) \\
        & = - 2S \frac{\eta^2 \e^{\eta T}}{(\e^{\eta T}-1)^2}
            p(y;4,x) \\
        & \quad
            + S \frac{\eta \e^{\eta T}}{\e^{\eta T}-1}
            \Biggl\{
                \left[
                    p(y;2,x)
                    - p(y;4,x)
                \right]y_T %\\
%        & \qquad \qquad \qquad \qquad
            + \left[
                    p(y;6,x)
                    - p(y;4,x)
                \right]x_T
                \Biggr\} \\
        & \quad
            - r^2 K \e^{-r T} \tilde{\chi}^2(y;0,x)
            + rK\e^{-rT}
                \left[
                    - p(y;0,x)y_T
                    + p(y;2,x)x_T
                \right].
\end{split}
\end{equation}
for $S,K,T \in (0,\infty)$.
\end{lemma}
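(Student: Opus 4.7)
The plan is to differentiate the explicit representation
\[
\call(K,T) = S\,\tilde{\chi}^2(y;4,x) - K\e^{-rT}\,\tilde{\chi}^2(y;0,x)
\]
from \eqref{eq:c-1} with respect to $T$, treating $x=x(T)$ and $y=y(T)$ as functions of $T$ via \eqref{eq:formula-x-y-a}, and then simplifying the resulting expression using the identities \eqref{eq:pde-pp1}--\eqref{eq:pde-pp2} for the noncentral chi-square densities. The second derivative will then follow by differentiating $\call_T$ once more, again invoking the derivative formula \eqref{eq:p-T}--\eqref{eq:tilde-chisq-T}.

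For \eqref{eq:cT}, I would apply \eqref{eq:tilde-chisq-T} with $\delta=4$ and $\delta=0$ to get
\[
\call_T = S\bigl[-p(y;4,x)y_T + p(y;6,x)x_T\bigr] + rK\e^{-rT}\tilde{\chi}^2(y;0,x) - K\e^{-rT}\bigl[-p(y;0,x)y_T + p(y;2,x)x_T\bigr].
\]
The key simplifying observation is that $y/x = K\e^{-rT}/S$, so \eqref{eq:pde-pp1} gives $K\e^{-rT}p(y;0,x) = Sp(y;4,x)$, which causes the two $y_T$ contributions to cancel exactly. The remaining $x_T$ contributions combine into $x_T[Sp(y;6,x) - K\e^{-rT}p(y;2,x)]$, which by the same substitution $K\e^{-rT} = Sy/x$ equals $Sx_T[p(y;6,x) - (y/x)p(y;2,x)]$; invoking \eqref{eq:pde-pp2} collapses this to $-2Sx_T p(y;4,x)/x$, yielding \eqref{eq:cT}.

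For \eqref{eq:cTT} I would differentiate $\call_T$ term by term. The second summand $rK\e^{-rT}\tilde{\chi}^2(y;0,x)$ contributes $-r^2 K\e^{-rT}\tilde{\chi}^2(y;0,x) + rK\e^{-rT}[-p(y;0,x)y_T + p(y;2,x)x_T]$ by \eqref{eq:tilde-chisq-T}, matching the last two lines of the claimed formula directly. For the first summand $-2S(x_T/x)p(y;4,x)$, I would use the MMM identity $x_T/x = -\varphi_T/\varphi = -\eta\e^{\eta T}/(\e^{\eta T}-1)$ from \eqref{eq:formula-x-y-a}; a short computation gives $\partial_T(x_T/x) = \eta^2\e^{\eta T}/(\e^{\eta T}-1)^2$, producing the factor $-2S\eta^2\e^{\eta T}/(\e^{\eta T}-1)^2$ in front of $p(y;4,x)$. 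The remaining piece $-2S(x_T/x)\,\partial_T p(y;4,x)$ is handled by applying \eqref{eq:p-T} at $\delta=4$ and absorbing the overall factor $-x_T/x = \eta\e^{\eta T}/(\e^{\eta T}-1)$.

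The only real obstacle is bookkeeping — keeping the chi-square index shifts, the $y_T$ versus $x_T$ contributions, and the signs straight when differentiating the product $(x_T/x)\cdot p(y;4,x)$ — rather than anything conceptually deep. All the analytic ingredients (the density identities \eqref{eq:pde-pp1}--\eqref{eq:pde-pp2} and the $T$-derivative formulas \eqref{eq:p-T} and \eqref{eq:tilde-chisq-T}) are already supplied in Appendix A, so the proof reduces to an orderly application of the chain rule together with these substitutions.
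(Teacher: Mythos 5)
Your proposal is correct and follows essentially the same route as the paper: differentiate \eqref{eq:c-1} via \eqref{eq:tilde-chisq-T}, cancel the $y_T$ terms with \eqref{eq:pde-pp1}, collapse the $x_T$ terms with \eqref{eq:pde-pp2}, and then differentiate $\call_T$ once more using \eqref{eq:p-T} together with the identity $\partial_T(x_T/x)=(x_{TT}x-x_T^2)/x^2=\eta^2\e^{\eta T}/(\e^{\eta T}-1)^2$, which is exactly the identity the paper invokes.
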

\begin{proof}
Differentiating \eqref{eq:c-1} with respect to $T$ gives
\begin{equation}
\begin{split}
    \call_T (K,T)
        & = S \frac{\partial \tilde{\chi}^2(y;4,x)}{\partial  T} %\\
%        & \quad
            + rK\e^{-rT}\tilde{\chi}^2(y;0,x)
            - K\e^{-rT} \frac{\partial \tilde{\chi}^2(y;0,x)}{\partial T}. \\
\end{split}
\end{equation}
Applying \eqref{eq:tilde-chisq-T} to the chi-square terms gives
\begin{equation}
\begin{split}
    \call_T
        & = S [-p(y;4,x)y_T + p(y;6,x)x_T]
            + rK\e^{-rT}\tilde{\chi}^2(y;0,x) \\
        & \quad
            - K\e^{-rT} [-p(y;0,x)y_T + p(y;2,x)x_T] \\
%        & = [-s p(y;\delta,x) + K\e^{-rT}p(y;0,x)] y_T \\
%        & \quad
%            + [s p(y;\delta+2,x) - K\e^{-rT}  p(y;\delta-2,x)] x_T
%            + rK\e^{-rT}\tilde{\chi}^2(y;0,x) \\
%        & = -s \left[p(y;\delta,x) - \frac{K\e^{-rT}}{s} p(y;0,x)\right] y_T \\
%        & \quad
%            + S \left[p(y;\delta+2,x) - \frac{K\e^{-rT}}{s} p(y;\delta-2,x)\right] x_T
%            + rK\e^{-rT}\tilde{\chi}^2(y;0,x) \\
        & = - S \left[p(y;4,x) - \frac{y}{x} p(y;0,x)\right] y_T \\
        & \quad
            + S \left[p(y;6,x) - \frac{y}{x} p(y;2,x)\right] x_T
            + rK\e^{-rT}\tilde{\chi}^2(y;0,x) \\
        & = - 2 S \frac{x_T}{x}p(y;4,x) + rK\e^{-rT}\tilde{\chi}^2(y;0,x), \\
\end{split}
\end{equation}
where the last equality results from \eqref{eq:pde-pp1} and
\eqref{eq:pde-pp2}. This proves \eqref{eq:cT}, the identity for
$\call_T$.
%The desired result can then be obtained by
%taking into account that
%\begin{equation}
%\begin{split}
%    x_T
%        & = - \frac{4 s\eta^2 \e^{\eta T}}{\left(\e^{\eta
%            T}-1\right)^2} \\
%        & = - \frac{4 S \eta }{\e^{\eta T}-1} \cdot
%             \frac{\eta \e^{\eta T} }{\e^{\eta T}-1} \\
%        & = - x \frac{\eta \e^{\eta T} }{\e^{\eta T}-1}.  \\
%\end{split}
%\end{equation}
To get the identity for $\call_{TT}$, we differentiate
$\call_T$, which gives
\begin{equation}
\begin{split}
  \call_{TT}(K,T)
    & = -2S \frac{x_{TT} x - x_T^2}{x^2}
        p(y;4,x)
         -2S \frac{x_T}{x}
          \frac{\partial p(y;4,x)}{\partial T} \\
    & \quad
        - r^2 K \e^{-r T} \tilde{\chi}^2(y;0,x)
        + rK\e^{-rT} \frac{\partial \tilde{\chi}^2(y;0,x)}{\partial
        T}.
\end{split}
\end{equation}
We then obtain identity \eqref{eq:cTT} by applying Corollary
\ref{cor:chisq-density-derivative} to $\partial p(y;4,x)/\partial T$
and $\partial \tilde{\chi}^2(y;0,x)/\partial T$ and by
noting the identity
\begin{equation}
  \frac{x_{TT}x -x_T^2}{x^2}
  = \frac{\eta^2 \e^{\eta T}}{\left(\e^{\eta T}-1\right)^2}.
\end{equation}
The proof is thus complete.
\end{proof}

\subsection{Some small time limits}

In this subsection we present some small time limits needed for the
proof of Theorem \ref{thm:small-iv}.

\begin{lemma} \label{lem:atm-prelim-1}
Under the MMM, the following properties hold for $S,K \in
(0,\infty)$:
\begin{enumerate}
\item $x\xrightarrow{\;T \to 0\;} \infty$; $y\xrightarrow{\;T
        \to 0\;} \infty$.
\item $\frac{x_T\sqrt{T}}{x} p(y;4,x)
        \xrightarrow{\;T \to 0\;} - \frac{1}{4\sqrt{2\pi}}
        \sqrt{\frac{\alpha}{K}}$ when $S=K$.
\item $T\frac{x_T}{x} p(y;4,x) \xrightarrow{\;T \to
0\;} 0$. \\
\item $T\call_T(K,T) \xrightarrow{\;T \to 0\;} 0$. \\
%\item[(v)] $x\tilde{\chi}^2(y;0,x)/[x_T p(y;\delta,x)]
%\xrightarrow{\;T \to 0\;} 0$ when $S<K$. \\
\end{enumerate}
\end{lemma}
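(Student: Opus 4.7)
The plan is to establish the four assertions in order, relying on the elementary behavior of $\varphi(T)$ near zero together with the large-argument asymptotic of the modified Bessel function $I_1(z) \sim e^{z}/\sqrt{2\pi z}$.

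For (1), I would simply observe from \eqref{eq:formula-x-y-a} that $\varphi(T) = \frac{\alpha}{4\eta}(\e^{\eta T}-1) \sim \alpha T/4$ as $T\to 0$, so $\varphi(T) \to 0^+$. Consequently $x = S/\varphi \to \infty$ and $y = K\e^{-rT}/\varphi \to \infty$.

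For (2), the main step is an asymptotic expansion of $p(y;4,x)$ as $T\to 0$. Since $\sqrt{xy} = \sqrt{SK\e^{-rT}}/\varphi \to \infty$, the Bessel asymptotic gives
\[
    p(y;4,x)
    = \tfrac{1}{2}(y/x)^{1/2} \e^{-(x+y)/2} I_1(\sqrt{xy})
    \sim \frac{(y/x)^{1/2}}{2\sqrt{2\pi}\,(xy)^{1/4}}
        \exp\!\left(-\tfrac{1}{2}(\sqrt{x}-\sqrt{y})^{2}\right).
\]
A short computation shows $(\sqrt{x}-\sqrt{y})^{2} = \varphi^{-1}(\sqrt{S}-\sqrt{K\e^{-rT}})^{2}$, and when $S=K$ this reduces to $\varphi^{-1} S(1-\e^{-rT/2})^{2} \sim Sr^2T/\alpha \to 0$. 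Thus the exponential factor tends to $1$, while $(y/x)^{1/2}\to 1$, $(xy)^{1/4}\sim (SK)^{1/4}/\sqrt{\varphi}$, and $\varphi \sim \alpha T/4$. Combining these yields $p(y;4,x) \sim \sqrt{\alpha T}/(4\sqrt{2\pi K})$ at $S=K$. Together with $x_T/x = -\eta \e^{\eta T}/(\e^{\eta T}-1) \sim -1/T$, this gives
\[
    \frac{x_T\sqrt{T}}{x}\,p(y;4,x) \sim -\frac{1}{\sqrt{T}}\cdot\frac{\sqrt{\alpha T}}{4\sqrt{2\pi K}} = -\frac{1}{4\sqrt{2\pi}}\sqrt{\alpha/K}.
\]

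For (3), I split into cases. When $S = K$, the stronger limit of (2) already implies $T\frac{x_T}{x}p(y;4,x) = \sqrt{T}\cdot\frac{x_T\sqrt{T}}{x}p(y;4,x) \to 0$. When $S \ne K$, the same Bessel asymptotic shows $(\sqrt{x}-\sqrt{y})^{2} \sim (\sqrt{S}-\sqrt{K})^{2}/\varphi \to \infty$, so $p(y;4,x)$ decays exponentially fast, dominating the polynomial growth $Tx_T/x \sim -1$. For (4), I would use the explicit formula \eqref{eq:cT} to write
\[
    T\call_T(K,T)
    = -2S\cdot T\,\frac{x_T}{x}p(y;4,x) + rKT\e^{-rT}\tilde{\chi}^2(y;0,x);
\]
the first term vanishes by (3), and the second vanishes because $\tilde{\chi}^2(y;0,x)\in[0,1]$ while $T\to 0$.

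The only delicate step is the uniform Bessel asymptotic underlying (2); once $p(y;4,x)$ has been controlled at and away from $S=K$, parts (3) and (4) follow directly.
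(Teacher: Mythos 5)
Your proof is correct and follows essentially the same route as the paper's: the large-argument asymptotic $I_1(z)\sim \e^{z}/\sqrt{2\pi z}$, completing the square to produce the factor $\exp\left(-\tfrac{1}{2}(\sqrt{x}-\sqrt{y})^2\right)$, and $\varphi(T)\sim \alpha T/4$. You are in fact slightly more careful than the paper on part (3), whose proof there reads only that it ``follows from (2)'' and thus leaves the $S\ne K$ case --- where one needs the exponential decay of $p(y;4,x)$ coming from $(\sqrt{x}-\sqrt{y})^2\to\infty$, exactly as you argue --- implicit.
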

\begin{proof}
The limits in (1) follow from the definitions of $x$ and $y$. We
now prove (2). Note that $x = s/\varphi$ and $x_T = - S
\varphi_T/\varphi^2$.
%\begin{equation}
%  x = \frac{S}{\varphi}, \quad
%  x_T = - S \frac{\varphi_T}{\varphi^2}.
%\end{equation}
From these identities we get
\begin{equation}
  \frac{x_T}{x}
    = - \frac{\varphi_T}{\varphi}
    = - \frac{\eta \e^{\eta T}}{\e^{\eta T} -1}
    \sim -\frac{\eta}{\e^{\eta T} -1}
    \qquad (T \to 0).
\end{equation}
Hence, as $T \to 0$,
\begin{equation} \label{eq:lem:atm-prelim-1:1}
\begin{split}
   \left. \sqrt{T} \frac{x_T}{x} p(y;4, x)\right|_{S=K}
    & \sim - \frac{\eta \sqrt{T}}{\e^{\eta T} -1} p(y;4,x) \\
    & = \frac{-\eta \sqrt{T}}{\e^{\eta T} -1}
            \left(y/x\right)^{1/2}
            \e^{-(x+y)/2}
            I_1\left(\sqrt{xy}\right)/2 \\
    & \sim \frac{-\eta \sqrt{T}}{\e^{\eta T} -1}
        \left(\frac{1}{2}\right)
            \exp\left(- \frac{x+y}{2}\right)
            \exp(\sqrt{xy})
            \frac{1}{\sqrt{2\pi \sqrt{xy}}} \\
%    & = \frac{-\eta \sqrt{T}}{\e^{\eta T} -1}
%        \left(\frac{\e^{-rT/2}}{2}\right)
%             \exp\left(- \frac{x+y}{2} + \sqrt{xy}\right)
%            \frac{1}{\sqrt{2\pi}(xy)^{1/4}} \\
    & = \frac{-\eta \sqrt{T}}{\e^{\eta T} -1}
        \left(\frac{1}{2}\right)
             \exp\left(- \frac{1}{2}\left(\sqrt{x}-\sqrt{y}\right)^2\right)
            \frac{1}{\sqrt{2\pi}(xy)^{1/4}} \\
    & \sim \frac{-\eta \sqrt{T}}{\e^{\eta T} -1} \times
            \frac{1}{2\sqrt{2\pi}(xy)^{1/4}}  \\
    & = \frac{-\eta}{2\sqrt{2\pi}}
        \frac{\sqrt{T}}{(\e^{\eta T}-1)(xy)^{1/4}}.
\end{split}
\end{equation}
In the third line of \eqref{eq:lem:atm-prelim-1:1} we have used
the fact that when $S=K$, $y/x \sim 1$ and that $I_v(z) \sim
\e^z/\sqrt{2\pi z}$ as $z \to \infty$; see \cite[Formula
9.7.1]{abramowitz-stegun-70}. In the forth line we have used the
property that $(\sqrt{x}-\sqrt{y}) \to 0$ as $T \to 0$ when $S=K$.

By the definitions of $x$ and $y$, we have
\begin{equation} \label{eq:lem:atm-prelim-1:2}
  xy|_{S =K}
    = \frac{sK \e^{-r T}}{\varphi^2}
    =\frac{K^2 \e^{-r T}}{\varphi^2}
    \sim \frac{K^2}{\varphi^2}
    = \frac{16 \eta^2 K^2}{\alpha^2 (\e^{\eta T}-1)^2}.
\end{equation}
Combining \eqref{eq:lem:atm-prelim-1:1} and
\eqref{eq:lem:atm-prelim-1:2} shows that as $T \to 0$,
\begin{equation}
\begin{split}
  \sqrt{T}\frac{x_T}{x} p(y;4,x)
    & \sim \frac{-\eta}{2\sqrt{2\pi}}
        \frac{\sqrt{T}}{(\e^{\eta T}-1)(xy)^{1/4}} \\
    & \sim \frac{-\eta}{2\sqrt{2\pi}}
        \frac{\sqrt{T}
            }{
                (\e^{\eta T}-1)
                \left[
                    \dfrac{16 \eta^2 K^2}{\alpha^2 (\e^{\eta T}-1)^2}
                \right]^{1/4}
            } \\
    & = \frac{-\eta}{2\sqrt{2\pi}}
        \left(\frac{T}{\e^{\eta T} -1}\right)^{1/2}
        \frac{\sqrt{\alpha}}{2\sqrt{\eta K}} \\
    & \sim \frac{-\sqrt{\alpha \eta}}{4\sqrt{2\pi K}}
            \frac{1}{\sqrt{\eta}} \\
    & = \frac{-\sqrt{\alpha }}{4\sqrt{2\pi K}}.
\end{split}
\end{equation}
This proves (2). The limit in (3) follows from those in (2).
The limits in (4) follows from the definitions of $x$,
$y$, $p$ and $\call_T$.
And the proof of the lemma is thus complete.
\end{proof}

\begin{lemma} \label{lem:chisq-limits}
Under the MMM, we have
\begin{eqnarray}
    \lim_{T \to t} \chi^2(y;4,x)
        & = & \lim_{T \to t} \chi^2(y;0,x) = 0,
        \quad \mbox{if $S>K$}; \label{eq:chisq-limits-1} \\
    \lim_{T \to t} \chi^2(y;4,x)
        & = & \lim_{T \to t} \chi^2(y;0,x) = 1,
        \quad \mbox{if $S<K$}. \label{eq:chisq-limits-2}
\end{eqnarray}
\end{lemma}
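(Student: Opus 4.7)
The plan is to exploit the fact that a noncentral chi-square random variable $Y_x \sim \chi^2_\delta(x)$ has mean $\delta + x$ and variance $2(\delta + 2x)$ (both $\delta = 0$ and $\delta = 4$ are admissible via the Poisson mixture representation $\chi^2_\delta(x) \stackrel{d}{=} \chi^2_{\delta + 2N}$ with $N \sim \mathrm{Poisson}(x/2)$), so the distribution is concentrated in a $\Theta(\sqrt{x})$ window around $x$. I would then show that the separation $|x - y|$ between the evaluation point and the mean is of a much larger order than $\sqrt{x}$ as $T \to 0$, whence Chebyshev's inequality forces the CDF to 0 or 1 depending on the sign of $x - y$.

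First I would record the small-time asymptotics. Since $\varphi(T) = \alpha(\e^{\eta T}-1)/(4\eta) \sim \alpha T / 4$ as $T \to 0$, we have $x \sim 4S/(\alpha T)$, $y \sim 4K/(\alpha T)$, and
\begin{equation*}
    x - y = \frac{S - K\e^{-rT}}{\varphi(T)} \sim \frac{4(S-K)}{\alpha T}, \qquad T \to 0.
\end{equation*}
Thus $|x - y|$ diverges at rate $1/T$, whereas $\sqrt{\operatorname{Var}(Y_x)} = \sqrt{2(\delta + 2x)}$ diverges only at rate $1/\sqrt{T}$.

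Second, for the case $S > K$, note that $y/x = (K/S)\e^{-rT} < 1$ for every $T \ge 0$, so $y < x$ and eventually $y < x + \delta$. Chebyshev's inequality then yields
\begin{equation*}
    \chi^2(y;\delta,x) = \PP(Y_x \le y) \le \PP\bigl(\abs{Y_x - (\delta + x)} \ge (\delta + x) - y\bigr) \le \frac{2(\delta + 2x)}{(\delta + x - y)^2},
\end{equation*}
and the right side is $\O(T^{-1}) / \Omega(T^{-2}) = \O(T) \to 0$, establishing \eqref{eq:chisq-limits-1}. For the case $S < K$, the same inequalities hold with roles reversed: for all sufficiently small $T$ (namely $T < \log(K/S)/r$ when $r>0$, or all $T$ when $r = 0$), $y - (\delta + x) > 0$, and
\begin{equation*}
    1 - \chi^2(y;\delta,x) = \PP(Y_x > y) \le \frac{2(\delta + 2x)}{(y - \delta - x)^2} \to 0,
\end{equation*}
yielding \eqref{eq:chisq-limits-2}. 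There is no significant obstacle here; everything reduces to the asymptotic orders above together with one application of Chebyshev. The only minor care needed is to record the $\delta = 0$ case, for which the atom at the origin is already absorbed into the standard mean/variance formulas through the Poisson mixture representation.
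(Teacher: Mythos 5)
Your proof is correct, but it takes a genuinely different route from the paper. You identify $\chi^2(y;\delta,x)$ with $\PP(Y_x\le y)$ for $Y_x\sim\chi^2_\delta(x)$ (legitimate for $\delta=0$ as well, since Siegel's zero-degree distribution is exactly the Poisson mixture with an atom at the origin, so the mean $\delta+x$ and variance $2(\delta+2x)$ formulas do hold), observe that $|x-y|\sim 4|S-K|/(\alpha T)$ dominates the standard deviation $\Theta(T^{-1/2})$, and close with Chebyshev, obtaining the rate $\O(T)$. The paper instead works analytically with the integral representations: for $\delta=4$ it invokes Temme's formulas and the Mattner--Roos bound $\e^{-z}(I_0(z)+I_1(z))<\sqrt{2/(\pi z)}$, and for $\delta=0$ it estimates the defining integral directly, producing bounds of the form $\const\cdot\sqrt{x}\,\e^{-(\sqrt{x}-\sqrt{y})^2/2}$. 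Your argument is shorter and more elementary, and it handles both degrees of freedom and both moneyness cases uniformly. What the paper's heavier machinery buys is the \emph{exponential} decay rate: the subsequent corollary (the limit $x\tilde{\chi}^2(y;0,x)/[x_Tp(y;4,x)]\to 0$ for $S<K$) reuses the Gaussian-tail bound from \eqref{eq:S<K} in a ratio against $\e^{-(\sqrt{x}-\sqrt{y})^2/2}$, where the polynomial rate $\O(T)$ from Chebyshev would not suffice. So your proof fully establishes the lemma as stated, but could not substitute for the paper's proof as a building block for what follows.
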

\begin{proof}
Without loss of generality we shall prove the lemma for $t=0$. We
will check $\chi^2(y;4,x)$ first. From Temme
\cite[(2.6)-(2.8)]{temme-93}, we have, for sufficiently small $T$,
\begin{equation} \label{eq:temme-93}
  \chi^2(y;4,x)
  =
  \left\{%
    \begin{array}{ll}
        \displaystyle \frac{1}{2}(y/x)
                \left[
                    \sqrt{x/y}F_{2}(\sqrt{xy},\omega)
                    - F_{1}(\sqrt{xy},\omega)
                \right], & \hbox{for $S>K$}, \\
        1 + \displaystyle \frac{1}{2}(y/x)
                \left[
                    \sqrt{x/y}F_{2}(\sqrt{xy},\omega)
                    - F_{1}(\sqrt{xy},\omega)
                \right], & \hbox{for $S<K$}, \\
    \end{array}%
  \right.
\end{equation}
where
\begin{equation}
\begin{split}
  & \omega = \frac{1}{2}(\sqrt{y}-\sqrt{x})^2/\sqrt{yx}, \\
  & F_\mu(\sqrt{xy},\omega)
        = \int_{\sqrt{xy}}^\infty \e^{-(\omega +1)z}I_{\mu}(z)
        \,\d z.
\end{split}
\end{equation}
Since $(y/x) \xrightarrow{\; T \to 0 \;} K/S$, we only need to
show that in \eqref{eq:temme-93} both $F_{2}$ and
$F_{1}$ converge to zero as $T$ tends to zero. Mattner and Roos \cite[(8)]{mattner-roos-07}
have shown that
\begin{equation} \label{eq:mattner-roos-inequality}
  \e^{-z}(I_0(z) + I_1(z))
    < \sqrt{2/(\pi z)},
    \quad z \in (0,\infty).
\end{equation}
This implies that
\begin{equation}
\begin{split}
    F_1(\sqrt{xy},\omega)
    \le \sqrt{2/\pi}
        \int_{\sqrt{xy}}^\infty \frac{\e^{-\omega z}}{\sqrt{z}}
        \,\d z
    = \frac{
        \sqrt{2} \erfc\left(\sqrt{w
            \sqrt{xy}}\right)
            }{
            \sqrt{w}},
\end{split}
\end{equation}
where $\erfc(\cdot)$ is the complementary error function.
Observing that $\omega \xrightarrow{\;T \to 0\;} \omega_0=
\frac{1}{2}\sqrt{K/S}\left(1-\sqrt{S/K}\right)^2$ and that $x,y
\xrightarrow{\;T \to 0\;} \infty$, we get $F_1(\sqrt{xy},\omega)
\xrightarrow{\;T \to 0\;} 0$. Using again the identity
$I_{2}(z)=I_{0}(z) -2I_1(z)/z$, \cite[Formula
9.6.26]{abramowitz-stegun-70}, we can similarly show that
$F_2(\sqrt{xy},\omega) \xrightarrow{\;T \to 0\;} 0$.

We now check the limits of $\chi^2(y;0,x)$. Because the
degree of freedom is zero, the results of Temme \cite{temme-93} do
not apply. We will work with the definition of the chi-square
function instead. We will check firstly the case where $S>K$.
From \eqref{eq:c-2} we get
\begin{equation}
\begin{split}
  \chi^2(y;0,x)
    & = \int_0^y
        \frac{1}{2}
            \left(\frac{z}{x}\right)^{-1/2}
            \e^{-(x+z)/2}
            I_{-1}\left(\sqrt{xz}\right)
            \, \d z \\
    & = \frac{\sqrt{x}}{2}
        \int_0^y
            \frac{1}{\sqrt{z}}
            \e^{-(x+z)/2}
            I_{1}\left(\sqrt{xz}\right)
            \, \d z \\
    & = \frac{\sqrt{x}}{2}
        \int_0^y
            \frac{1}{\sqrt{z}}
            \e^{-(\sqrt{x}-\sqrt{z})^2/2}
            \e^{-\sqrt{xz}}
            I_{1}\left(\sqrt{xz}\right)
            \, \d z \\
    & \sim \frac{\sqrt{x}}{2}
        \int_0^y
            \frac{1}{\sqrt{z}}
            \e^{-(\sqrt{x}-\sqrt{z})^2/2}
            (2/\pi)^{1/2}(xz)^{-1/4}
            \, \d z  \quad (T \to 0) \\
    & = \frac{x^{1/4}}{(2\pi)^{1/2}}
        \int_0^y
            z^{-3/4}
            \e^{-(\sqrt{x}-\sqrt{z})^2/2}
            \, \d z \\
    & \le \frac{x^{1/4}}{(2\pi)^{1/2}}
        \e^{-(\sqrt{x}-\sqrt{y})^2/2}
        \int_0^y
            z^{-3/4}
            \, \d z
     = \frac{4(xy)^{1/4}}{(2\pi)^{1/2}}
        \e^{-(\sqrt{x}-\sqrt{y})^2/2} \\
    & \le \frac{4\sqrt{x}}{\sqrt{2\pi}}
        \e^{-(\sqrt{x}-\sqrt{y})^2/2} \\
    & = \frac{4\sqrt{x}}{\sqrt{2\pi}}
        \e^{-x\left(1-\sqrt{y/x}\right)^2/2}
        \xrightarrow{\;T \to 0 \;} 0,
%    & = \frac{2 x^{1/4}}{(2\pi)^{1/2}}
%        \int_{\sqrt{x}-\sqrt{y}}^{\sqrt{x}}
%            (\sqrt{x}-w)^{-1/2}
%            \e^{-w^2/2}
%            \, \d w \\
\end{split}
\end{equation}
where in the step for the asymptotic relation $\sim$ we have used
\eqref{eq:mattner-roos-inequality}, and in the last step the fact
that $x \to \infty$ and $y/x \to K/S$ as $T \to 0$. Finally, we
check the case where $S<K$. In this case
\begin{equation} \label{eq:S<K}
\begin{split}
    \tilde{\chi}^2(y;0,x)
        & = \int_y^\infty
                \frac{1}{2}\left(\frac{z}{x}\right)^{-1/2}
                \e^{-(x+y)/2} I_1(\sqrt{xz}) \, \d z \\
        & = \frac{\sqrt{x}}{2}
            \int_y^\infty
                \frac{1}{\sqrt{z}}
                \e^{-(\sqrt{x}-\sqrt{y})^2/2}
                \e^{-\sqrt{xz}} I_1(\sqrt{xz}) \, \d z \\
        & \sim \frac{\sqrt{x}}{2}
            \int_y^\infty
                \frac{1}{\sqrt{z}}
                \e^{-(\sqrt{x}-\sqrt{y})^2/2}
                (2/\pi)^{1/2} (xz)^{-1/4} \, \d z \quad (T \to 0) \\
        & = \frac{x^{1/4}}{\sqrt{2\pi}}
            \int_y^\infty
                \frac{1}{z^{3/4}}
                \e^{-(\sqrt{x}-\sqrt{y})^2/2} \, \d z \\
        & = \frac{2 x^{1/4}}{\sqrt{2\pi}}
            \int_{\sqrt{y}-\sqrt{x}}^\infty
                \frac{1}{\sqrt{u+\sqrt{x}}}
                \e^{-u^2/2} \, \d u \\
        & \le \frac{2 (x/y)^{1/4}}{\sqrt{2\pi}}
            \int_{\sqrt{y}-\sqrt{x}}^\infty
                \e^{-u^2/2} \, \d u
                \xrightarrow{\;T\to 0 \;} 0,
\end{split}
\end{equation}
where in the fourth equality we have changed the variables by
using $u = \sqrt{z} - \sqrt{x}$. The proof is thus complete.
%It can be similarly proved that $\lim_{T \to 0}
%\chi^2(y;0,x) =1$ when $S<K$. And the proof is thus
%complete.
\end{proof}
\begin{corollary}
Under the MMM, we have
\begin{equation} \label{eq:x-chi-x-T-p}
  x\tilde{\chi}^2(y;0,x)/[x_T p(y;4,x)]
    \xrightarrow{\;T \to 0\;} 0, \quad \mbox{for $S<K$}.
\end{equation}
\end{corollary}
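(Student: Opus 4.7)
The plan is to decompose the ratio as a product of two factors
\[
    \frac{x\,\tilde{\chi}^2(y;0,x)}{x_T\,p(y;4,x)}
        = \frac{x}{x_T}\cdot\frac{\tilde{\chi}^2(y;0,x)}{p(y;4,x)},
\]
show that the first factor tends to $0$ and the second stays bounded, and then conclude by the product rule for limits.

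For the first factor, I would use the formulas in \eqref{eq:formula-x-y-a}: since $x=S/\varphi$ and $x_T=-S\varphi_T/\varphi^2$, we have
\[
    \frac{x}{x_T} = -\frac{\varphi}{\varphi_T}
        = -\frac{(\e^{\eta T}-1)/(4\eta)}{\e^{\eta T}/4}
        = -\frac{\e^{\eta T}-1}{\eta\,\e^{\eta T}}
        \xrightarrow{\; T\to 0\;} 0.
\]
This step is elementary and requires no estimates.

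For the second factor, I would reuse the bounds already developed in the proof of Lemma \ref{lem:chisq-limits} for the $S<K$ case. Specifically, the chain of estimates in \eqref{eq:S<K} together with the Mill's ratio inequality $\int_a^\infty \e^{-u^2/2}\,\d u \le \e^{-a^2/2}/a$ gives, for all sufficiently small $T$,
\[
    \tilde{\chi}^2(y;0,x)
        \le \frac{2(x/y)^{1/4}}{\sqrt{2\pi}\,(\sqrt{y}-\sqrt{x})}\,\e^{-(\sqrt{y}-\sqrt{x})^2/2}.
\]
On the other hand, by \cite[Formula 9.7.1]{abramowitz-stegun-70} we have the asymptotic
\[
    p(y;4,x)\sim \frac{(y/x)^{1/2}}{2\sqrt{2\pi}\,(xy)^{1/4}}\,\e^{-(\sqrt{y}-\sqrt{x})^2/2},\qquad T\to 0,
\]
since $x,y\to\infty$ and $\sqrt{xy}\to\infty$. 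Dividing, the exponentials cancel and
\[
    \frac{\tilde{\chi}^2(y;0,x)}{p(y;4,x)}
        \lesssim \frac{4x/y}{1-\sqrt{x/y}}
        \xrightarrow{\; T\to 0\;} \frac{4S/K}{1-\sqrt{S/K}},
\]
which is finite because $y/x\to K/S>1$ when $S<K$, so in particular $\tilde{\chi}^2/p$ is bounded in a neighborhood of $T=0$.

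The main obstacle, modest as it is, is tracking the algebra in the ratio of Bessel-function asymptotics so that the $\e^{-(\sqrt{y}-\sqrt{x})^2/2}$ factor cancels cleanly; everything else is a direct application of already-proved facts. Combining the two factors gives $x\,\tilde{\chi}^2(y;0,x)/[x_T\,p(y;4,x)]\to 0$, as claimed.
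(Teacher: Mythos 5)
Your proposal is correct and follows essentially the same route as the paper: both arguments factor the ratio into the elementary piece $x/x_T=-\varphi/\varphi_T\to 0$ times a $\tilde{\chi}^2(y;0,x)/p(y;4,x)$-type term, and both control the latter with the Gaussian-tail upper bound for $\tilde{\chi}^2(y;0,x)$ established in the proof of Lemma \ref{lem:chisq-limits} (case $S<K$) together with the large-argument Bessel asymptotic for $p(y;4,x)$. The only cosmetic difference is bookkeeping: you keep $p(y;4,x)$ intact and show the second factor is merely bounded (adding a Mill's-ratio estimate), whereas the paper groups the $(xy)^{1/4}$ prefactor with $\varphi/\varphi_T$ and shows both resulting pieces tend to zero.
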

\begin{proof}
By Lemma \ref{lem:atm-prelim-1} (1) and \cite[Formula 9.7.1]{abramowitz-stegun-70}, we get, as $T \to 0$,
\begin{equation}
  p(y;4,x)
    \sim \frac{(y/x)^{1/2}}{2\sqrt{2\pi}(xy)^{1/4}}
            \e^{-(x+y)/2} \e^{\sqrt{xy}}
    = \frac{(y/x)^{1/2}}{2\sqrt{2\pi}(xy)^{1/4}}
            \e^{-(\sqrt{x}-\sqrt{y})^2/2}.
\end{equation}
This, together with the fact that $x_T = -s\varphi_T/\varphi^2$,
gives
\begin{equation} \label{eq:chi-e-frac}
\begin{split}
    \frac{x\tilde{\chi}^2(y;0,x)}{x_T p(y;4,x)}
        & \sim \frac{x\tilde{\chi}^2(y;0,x)
                        }{x_T \frac{(y/x)^{1/2}}{2\sqrt{2\pi}(xy)^{1/4}}
            \e^{-(\sqrt{x}-\sqrt{y})^2/2}} \\
        & = 2\sqrt{2\pi}\frac{\varphi(xy)^{1/4}}{\varphi_T}
            \times \frac{\tilde{\chi}^2(y;0,x)}{\e^{-(\sqrt{x}-\sqrt{y})^2/2}}, \\
\end{split}
\end{equation}
where, as $\varphi \xrightarrow{\;T\to 0\;} 0$ and $\varphi_T
\xrightarrow{\;T\to 0\;} \alpha/4$,
\begin{equation}
\begin{split}
    \frac{\varphi(xy)^{1/4}}{\varphi_T}
    & = \frac{\varphi^{1/2}
    \left(SK\e^{-rT}\right)^{1/4}}{\varphi_T}
    \xrightarrow{\;T\to 0\;} 0.
\end{split}
\end{equation}
The second fraction in \eqref{eq:chi-e-frac} also converges to
zero because the last inequality in \eqref{eq:S<K} implies that
\begin{equation}
\begin{split}
    \frac{\tilde{\chi}^2(y;0,x)}{\e^{-(\sqrt{x}-\sqrt{y})^2/2}}
        & \le
            \frac{2 (x/y)^{1/4}}{\sqrt{2\pi}}
            \times \frac{
                \int_{\sqrt{y}-\sqrt{x}}^\infty
                    \e^{-u^2/2} \, \d u
                }{
                    \e^{-(\sqrt{x}-\sqrt{y})^2/2}
                }
            \xrightarrow{\;T\to 0 \;} 0.
\end{split}
\end{equation}
The proof is now complete.
\end{proof}

\begin{lemma} \label{lem:c-T-0}
Under the MMM, we have, for $S,K,T \in (0,\infty)$, and $t\in
[0,T]$,
\begin{equation}
  \lim_{T \to t} \call(S,t;K,T) = (S-K)_+.
\end{equation}
\end{lemma}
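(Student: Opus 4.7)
The plan is to pass to the limit $T\to 0$ in the explicit representation \eqref{eq:c-1}. Since $\varphi(T)=\frac{\alpha}{4\eta}(\e^{\eta T}-1)\to 0$, both $x=S/\varphi(T)$ and $y=K\e^{-rT}/\varphi(T)$ tend to $\infty$ as $T\to 0$, which puts us in the regime where Lemma \ref{lem:chisq-limits} applies.

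When $S>K$, Lemma \ref{lem:chisq-limits} gives $\chi^2(y;4,x)\to 0$ and $\chi^2(y;0,x)\to 0$, so $\tilde{\chi}^2(y;4,x)\to 1$ and $\tilde{\chi}^2(y;0,x)\to 1$; \eqref{eq:c-1} then yields $\call(K,T)\to S-K=(S-K)_+$. When $S<K$, the same lemma gives $\chi^2(y;4,x)\to 1$ and $\chi^2(y;0,x)\to 1$, so both $\tilde{\chi}^2$ factors vanish and $\call(K,T)\to 0=(S-K)_+$.

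The boundary case $S=K$ is not covered by Lemma \ref{lem:chisq-limits}, so I would treat it by a squeeze. From the probabilistic representation \eqref{eq:call-price-expectation} the map $K\mapsto\call(K,T)$ is nonincreasing, because $(S_T-K)_+/S_T$ is pointwise nonincreasing in $K$. Hence for every $\epsilon\in(0,S)$,
\begin{equation*}
  0\le\call(S,T)\le\call(S-\epsilon,T),
\end{equation*}
and the previously proved $S>K$ case (applied with $K$ replaced by $S-\epsilon$) yields $\lim_{T\to 0}\call(S-\epsilon,T)=S-(S-\epsilon)=\epsilon$. Letting $\epsilon\downarrow 0$ gives $\lim_{T\to 0}\call(S,T)=0=(S-S)_+$, completing the proof.

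The main (mild) obstacle is precisely this at-the-money step, since Lemma \ref{lem:chisq-limits} excludes $S=K$; the monotonicity-based squeeze sidesteps it without additional asymptotic analysis of the noncentral chi-square functions at the critical point $x/y\to 1$. A more self-contained alternative would be to apply dominated convergence directly to \eqref{eq:call-price-expectation}: the integrand $(S_T-K)_+/S_T$ is bounded by $1$ and converges almost surely to $(S-K)_+/S$ by path continuity of the SDE solution at $t=0$, which handles all three cases uniformly.
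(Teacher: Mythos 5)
Your proof is correct, but it takes a genuinely different route from the paper's. The paper proves this lemma probabilistically/analytically via the benchmarked price: it quotes the representation $\hat{C}(S,t;K,T)=\EE[(S_T-K)_+/S_T\,|\,S_t=S]$, invokes the Janson--Tysk variant of the Feynman--Kac theorem to obtain the pricing PDE together with the terminal condition $\hat{C}\to (S-K)_+/S$ as $t\to T$, and then transfers this to $\call$ via the Markov property. You instead pass to the limit in the explicit noncentral chi-square formula \eqref{eq:c-1} using Lemma \ref{lem:chisq-limits} for $S\ne K$, and handle the at-the-money case by monotonicity of $K\mapsto\call(K,T)$ and a squeeze; this is sound (Lemma \ref{lem:chisq-limits} is proved independently in Appendix B, so there is no circularity), and the squeeze is a clean way around the fact that that lemma excludes $S=K$. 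Your dominated-convergence alternative is in fact very close in spirit to what the paper's Feynman--Kac argument encodes, and is arguably the most economical version: the benchmarked payoff is bounded by $1$ and converges a.s.\ by path continuity, which handles all three moneyness cases at once and avoids both the degenerate-coefficient PDE machinery and the chi-square asymptotics. The one thing the paper's formulation buys that yours does not explicitly address is the general initial time $t\in[0,T]$ in the statement (the explicit formula \eqref{eq:c-1} is written for $t=0$); since the process is time-inhomogeneous this is not literally a matter of stationarity, but the extension is routine because the relevant scale function still vanishes as $T\to t$, and your DCT argument applies verbatim at any $t$.
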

\begin{proof}
Platen and Heath \cite[(12.3.11)]{platen-heath-06} have shown that under the MMM, the benchmarked European call option price $\call$ is
\begin{equation}
  \hat{C}(S,t;K,T)
    = \EE
        \left[
            \left.
                \frac{(S_T - K)_+}{S_T}
            \right| S_t = S
        \right].
\end{equation}
where $S_t$ is the underlying process in \eqref{eq:mmm-sde}, $K$ is the strike, $t$ the current time, $T$ the expiry, and $0\le t \le T <\infty$. By a variant of the Feynman--Kac theorem derived in Janson and Tysk \cite[Theorem 5.5]{janson-tysk-06}, we have
\begin{equation} \label{eq:stmmm:pde-chat}
\left\{
\begin{split}
  & \hat{C}_t
    = - \frac{1}{2} \sigma^2(S,t) S^2 \hat{C}_{SS}
        -  [r + \sigma^2(S,t)]S \hat{C}_S,
        \quad (S,t) \in (0,\infty) \times [0,T), \\
  & \lim_{t \to T} \hat{C}(S,t;K,T)
     = \frac{(S - K)_+}{S}, \quad S \in (0,\infty). \\
\end{split}
\right.
\end{equation}
Let $\tau = T-t$. Then $\tau \in [0,T]$, and by the Markov property of $S_t$,
\begin{equation}
    c(K,\tau)
        \equiv \EE
        \left[
            \left.
                \frac{(S_\tau - K)_+}{S_\tau}
            \right| S_0 = S
        \right]
        =\hat{C}(S,T-\tau;K,T).
\end{equation}
The limit in \eqref{eq:stmmm:pde-chat} implies that $c(K,\tau) \xrightarrow{\; \tau \to 0 \;} (S-K)_+/S$. Now set $C(K,\tau) = S c(K,\tau)$, $\tau \in [0,T]$. Then
\begin{equation}
  \lim_{\tau \to 0 } C(K,\tau) = \lim_{\tau \to 0} S c(K,\tau) = S\frac{(S - K)_+}{S}= (S-K)_+.
\end{equation}
And the proof is complete.
\end{proof}
\begin{remark}
In \eqref{eq:stmmm:pde-chat}, the benchmarked price $\hat{C}$ also satisfies the boundary condition
\begin{equation}
  \lim_{S \to 0} \hat{C}(S,t;K,T)
    = 0, \quad 0 \le t \le T < \infty, \quad 0 < K < \infty.
\end{equation}
This can be verified by using \eqref{eq:call-price-expectation} and \eqref{eq:c-1}.
\end{remark}

%%%%%%%%================================================================================
%%%%%%%%================================================================================
%%%%%%%%=================         section
%%%%%%%%================================================================================
%%%%%%%%================================================================================

\section{Appendix C: auxiliary results for the large time limit} \label{sec:appendix-c}

In this appendix we shall use the following notation:
\begin{equation} \label{eq:rhat-star-definition}
    \hat{r}_* = r+\eta.   \\
\end{equation}
Taking into consideration \eqref{eq:vhi-definition}, we shall sometimes write
\begin{equation} \label{eq:vhi-alt-expression}
  \vhi = \frac{\sqrt{2\hat{r}_*}}{1-\epsilon}, \qquad 0< \epsilon \ll 1.
\end{equation}

\subsection{Properties of the terms in the MMM call price}

%\begin{lemma}
%Under the MMM we have the following representations:
%\begin{equation} \label{eq:formula-x-y-a}
%\left\{
%\begin{split}
%    &   \varphi
%         = \frac{\alpha}{4\eta}\left(\e^{\eta T} -1\right), \quad
%        \varphi_T
%         = \frac{\alpha}{4} \e^{\eta T}, \quad
%        \varphi_{TT}
%         = \frac{\alpha\eta}{4} \e^{\eta T}, \\
%    &   x
%         = \frac{S}{\varphi}, \quad
%        x_T
%         = - S\frac{\varphi_T}{\varphi^2}, \quad
%        x_{TT}
%         = 2S \frac{\varphi_T^2}{\varphi^3}
%            - S \frac{\varphi_{TT}}{\varphi^2}, \\
%    &   y
%         = \frac{K\e^{-rT}}{\varphi}, \quad
%        y_T
%         = -\frac{rK\e^{-rT}}{\varphi}
%            - \frac{K\e^{-rT}\varphi_T}{\varphi^2}.
%\end{split}
%\right.
%\end{equation}
%Moreover,
%\begin{equation}
%\begin{split}
%    x_{TTT}
%        & = - 6 S \varphi^{-4} \varphi_T^3
%            + 4 S \varphi^{-3} \varphi_T \varphi_{TT}
%            + 2 S \varphi^{-3} \varphi_T \varphi_{TT}
%            - S \varphi^{-2} \varphi_{TTT} \\
%        & = - 6 S \varphi^{-4} \varphi_T^3
%            + 6 S \varphi^{-3} \varphi_T \varphi_{TT}
%            - S \varphi^{-2} \varphi_{TTT}, \\
%\end{split}
%\end{equation}
%and
%\begin{equation} \label{eq:formula-x-y-b}
%    xy = \frac{S K\e^{-rT}}{\varphi^2}, \quad
%    \frac{y}{x} = \frac{K\e^{-rT}}{S}, \quad
%    \frac{y_T}{x_T}
%        = \frac{K}{S \e^{rT}}
%            \left(\frac{r + \varphi_T/\varphi}{\varphi_T/\varphi}\right).
%\end{equation}
%\end{lemma}
%\begin{proof}
%Direct differentiation.
%\end{proof}

\begin{lemma} \label{lem:varphi-x-y-at-infty}
Under the MMM, we have the following limits as $T \to \infty$:
\begin{enumerate}
\item $\displaystyle \varphi, \varphi_T, \varphi_{TT} \longrightarrow \infty $.
\item $\displaystyle x, x_T, x_{TT} \longrightarrow 0 $.
\item $\displaystyle y, y_T \longrightarrow 0 $.
\item $\displaystyle
        \frac{\varphi_T}{\varphi}
        = \frac{\eta \e^{\eta T}}{\e^{\eta T}-1} \longrightarrow \eta $.
\item $\displaystyle    \frac{x_T}{x} \longrightarrow - \eta $.
\item $\displaystyle    \frac{x_{TT}}{x} \longrightarrow \eta^2 $.
\item $\displaystyle    \frac{y}{x} \longrightarrow 0 $.
\item $\displaystyle    xy \longrightarrow 0 $.
\item $\displaystyle
        \frac{y_T}{x_T} = \frac{K}{\e^{rT}} \times \frac{r + \varphi_T/\varphi}{S \varphi_T/\varphi}
            \sim \frac{K}{S\e^{rT}} \cdot \frac{(r+\eta)}{\eta } \longrightarrow 0 $.
\item $\displaystyle    \e^{\eta T} x_T = - S \frac{\e^{\eta T}}{\varphi}
                \frac{\varphi_T}{\varphi} \longrightarrow - \frac{4 S \eta^2}{\alpha}$.
\item $\displaystyle \frac{x_{TTT}}{x_{T}} \longrightarrow \eta^2$.
\end{enumerate}
\end{lemma}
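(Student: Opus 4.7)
The plan is to prove each of the eleven asymptotic statements by direct substitution of the explicit formulas for $\varphi$, $\varphi_T$, $\varphi_{TT}$, $\varphi_{TTT}$ supplied in \eqref{eq:formula-x-y-a}, and then reducing each limit to an elementary exponential computation. No deep tool is needed; the proof is essentially bookkeeping around the dominant scale $e^{\eta T}$ and, when $r>0$ appears, the extra decay factor $e^{-rT}$.

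First, I would dispose of (1) and the base asymptotics. Since $\varphi(T)=\tfrac{\alpha}{4\eta}(e^{\eta T}-1)$, $\varphi_T=\tfrac{\alpha}{4}e^{\eta T}$, and $\varphi_{TT}=\tfrac{\alpha\eta}{4}e^{\eta T}$, all three blow up with $e^{\eta T}$ as $T\to\infty$. Then (4) follows by dividing numerator and denominator of $\eta e^{\eta T}/(e^{\eta T}-1)$ by $e^{\eta T}$. Statement (5) is immediate from $x_T/x=-\varphi_T/\varphi$. For (6), compute
\[
\frac{x_{TT}}{x}=\frac{2S\varphi_T^2/\varphi^3-S\varphi_{TT}/\varphi^2}{S/\varphi}=2\Bigl(\frac{\varphi_T}{\varphi}\Bigr)^{\!2}-\frac{\varphi_{TT}}{\varphi},
\]
and note $\varphi_{TT}/\varphi=\eta\,\varphi_T/\varphi\to\eta^2$ since $\varphi_{TT}=\eta\varphi_T$; this gives $x_{TT}/x\to 2\eta^2-\eta^2=\eta^2$. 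Combining (4)--(6) with $x=S/\varphi\to 0$ yields (2).

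Next I would dispatch (3), (7) and (8). Writing $y=Ke^{-rT}/\varphi$, both factors tend to $0$ (recall $r>0$ in the MMM, and in any case $1/\varphi\to 0$), and the same argument applied to $y_T=-rKe^{-rT}/\varphi-Ke^{-rT}\varphi_T/\varphi^2$ shows $y_T\to 0$. Statement (7) is the identity $y/x=Ke^{-rT}/S$, which tends to $0$; (8) is $xy=SKe^{-rT}/\varphi^2\to 0$. For (9), the explicit formula
\[
\frac{y_T}{x_T}=\frac{-rKe^{-rT}/\varphi-Ke^{-rT}\varphi_T/\varphi^2}{-S\varphi_T/\varphi^2}=\frac{Ke^{-rT}}{S}\cdot\frac{r+\varphi_T/\varphi}{\varphi_T/\varphi}
\]
combined with (4) gives the claimed equivalence, and the factor $e^{-rT}\to 0$ yields the limit.

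For (10) I would write $e^{\eta T}x_T=-S\,(e^{\eta T}/\varphi)\,(\varphi_T/\varphi)$ and observe $e^{\eta T}/\varphi=4\eta/[\alpha(1-e^{-\eta T})]\to 4\eta/\alpha$, multiplying by (4) to get $-4S\eta^2/\alpha$. Finally for (11), differentiating yields the formula given in the preamble for $x_{TTT}$; dividing by $x_T=-S\varphi_T/\varphi^2$ gives
\[
\frac{x_{TTT}}{x_T}=6\Bigl(\frac{\varphi_T}{\varphi}\Bigr)^{\!2}-6\,\frac{\varphi_{TT}}{\varphi}+\frac{\varphi_{TTT}}{\varphi_T}.
\]
Since $\varphi_{TTT}=\eta\varphi_{TT}=\eta^2\varphi_T$, the last term equals $\eta^2$ exactly; the first two terms were shown in (6) to tend to $2\eta^2$ and $2\eta^2$ respectively when combined as $6\eta^2-6\eta^2$, leaving $\eta^2$. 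The only mildly tricky point in the whole lemma is keeping track of the sign and of the cancellation in (11), where the leading $6\eta^2$ terms must cancel; everything else is routine.
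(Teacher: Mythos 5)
Your proof is correct and takes essentially the same route as the paper, which simply states that all eleven limits "result from similar calculation based on the definition and/or the L'Hopital rule" and only writes out item (11) explicitly — your computation of $x_{TTT}/x_T = 6\varphi^{-2}\varphi_T^2 - 6\varphi^{-1}\varphi_{TT} + \varphi_{TTT}\varphi_T^{-1} \to 6\eta^2 - 6\eta^2 + \eta^2$ is identical to the paper's. The only blemish is the garbled phrase claiming the first two terms in (11) "tend to $2\eta^2$ and $2\eta^2$" (they tend to $6\eta^2$ and $-6\eta^2$, as your own displayed cancellation $6\eta^2-6\eta^2$ correctly shows); this is cosmetic and does not affect the argument.
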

%
%\begin{equation} \label{eq:formula-x-y-1}
%\begin{split}
%    \varphi, \varphi_T, \varphi_{TT}
%        & \longrightarrow \infty, \\
%    x, x_T, x_{TT}
%        & \longrightarrow 0, \\
%    y, y_T
%        & \longrightarrow 0, \\
%    \frac{\varphi_T}{\varphi}
%        & = \frac{\eta \e^{\eta T}}{\e^{\eta T}-1}
%          \longrightarrow \eta, \\
%    \frac{x_T}{x}
%        & \longrightarrow - \eta, \\
%    \frac{x_{TT}}{x}
%        & \longrightarrow \eta^2, \\
%    \frac{y}{x}
%        & \longrightarrow 0, \\
%    xy
%        & \longrightarrow 0, \\
%    \frac{y_T}{x_T}
%        & = \frac{K}{\e^{rT}} \times \frac{r + \varphi_T/\varphi}{S \varphi_T/\varphi}
%            \sim \frac{K}{S\e^{rT}} \cdot
%            \frac{(r+\eta)}{\eta }
%            \longrightarrow 0, \\
%    \e^{\eta T} x_T
%        & = - S \frac{\e^{\eta T}}{\varphi}
%                \frac{\varphi_T}{\varphi}
%            \longrightarrow
%            - \frac{4 S \eta^2}{\alpha}.
%\end{split}
%\end{equation}
%
\begin{proof}
We omit the details because all of the limits in this lemma result from similar calculation based on the definition and/or the L'Hopital rule. We only mention that for (11),
\begin{equation}
\begin{split}
    \frac{x_{TTT}}{x_{T}}
        & =
            \frac{
                - 6 S \varphi^{-4} \varphi_T^3
                + 6 S \varphi^{-3} \varphi_T \varphi_{TT}
                - S \varphi^{-2} \varphi_{TTT}
                }{
                - S \varphi^{-2} \varphi_T
                } \\
        & =
            6 \varphi^{-2} \varphi_T^2
            - 6 \varphi^{-1}\varphi_{TT}
            + \varphi_{TTT} \varphi_T^{-1}
            \xrightarrow{\;T \to \infty\;}
            6 \eta^2 - 6 \eta^2 + \eta^2 = \eta^2.
\end{split}
\end{equation}
And the proof is complete.
\end{proof}

The following lemma holds for the $\chi^2$ related terms.
\begin{lemma} \label{lem:chisq-density-at-infty}
Under the MMM, we have the following asymptotics as $T \to \infty$:
\begin{enumerate}
\item $\displaystyle p(y;4,x) \sim y/4 $.
\item $\displaystyle    p(y;2,x) \sim 1/2 $.
\item $\displaystyle    p(y;0,x) \sim x/4 $.
\item $\displaystyle   \frac{p(y;0,x)}{p(y;4,x)} \frac{y_T}{x_T}
        \longrightarrow \frac{r+\eta}{r} $.
\item $\displaystyle   \frac{p(y;2,x)}{p(y;4,x)} \sim \frac{2}{y} $.
\item $\displaystyle    \frac{p(y;6,x)}{p(y;4,x)} \sim \frac{y}{4} $.
\item $\displaystyle    \frac{\e^{-rT}\tilde{\chi}^2(y;0,x)}{p(y;4,x)} \longrightarrow \frac{2S}{K} $.
\item $\displaystyle    \frac{\e^{-rT}x_T p(y;2,x)}{p(y;4,x)} \sim - \frac{2S}{K} \frac{\varphi_T}{\varphi} \longrightarrow -\frac{2S\eta}{K} $.
\item $\displaystyle    \frac{\chi^2(y;4,x)}{p(y;4,x)} \longrightarrow 0 $.
\item $\displaystyle \chi^2(y;4,x) \longrightarrow 0 $.
\item $\displaystyle \chi^2(y;0,x) \longrightarrow 1 $.
\item $\displaystyle \tilde{\chi}^2(y;0,x) \longrightarrow 0 $.
\end{enumerate}
%\begin{equation} \label{eq:formula-dp-1}
%\begin{split}
%    p(y;\delta,x)
%        & \sim \frac{y}{4}, \\
%    p(y;\delta-2,x)
%        & \sim \frac{1}{2}, \\
%    p(y;\delta -4,x)
%        & \sim \frac{x}{4}, \\
%    \frac{p(y;\delta-4,x)}{p(y;\delta,x)} \frac{y_T}{x_T}
%        & \longrightarrow \frac{r+\eta}{r}, \\
%    \frac{p(y;\delta-2,x)}{p(y;\delta,x)}
%        & \sim \frac{2}{y}, \\
%    \frac{p(y;\delta+2,x)}{p(y;\delta,x)}
%        & \sim \frac{y}{4}, \\
%    \frac{\e^{-rT}\tilde{\chi}^2(y;\delta-4,x)}{p(y;\delta,x)}
%        & \longrightarrow \frac{2S}{K}, \\
%    \frac{\e^{-rT}x_T p(y;\delta-2,x)}{p(y;\delta,x)}
%        & \sim - \frac{2S}{K} \frac{\varphi_T}{\varphi}
%            \longrightarrow -\frac{2S\eta}{K}, \\
%    \frac{\chi^2(y;\delta,x)}{p(y;\delta,x)}
%        & \longrightarrow 0.
%\end{split}
%\end{equation}
\end{lemma}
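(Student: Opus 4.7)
The plan is to deduce all twelve items from three ingredients: (a) the small-argument expansion $I_\nu(z)\sim (z/2)^\nu/\Gamma(\nu+1)$ for the modified Bessel function, which applies because $\sqrt{xy}\to 0$ by Lemma \ref{lem:varphi-x-y-at-infty}; (b) the fact that $\exp(-(x+y)/2)\to 1$ since $x,y\to 0$; and (c) the explicit upper and lower bounds on $\chi^2(y;0,x)$ and $\chi^2(y;4,x)$ supplied by Lemma \ref{lem:chisq-bounds}. Items (1)--(6) concern the density $p(y;\delta,x)$, and items (7)--(12) concern the cumulative functions $\chi^2$ and $\tilde\chi^2$.

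For (1)--(3), I substitute the leading Bessel term into the definition in \eqref{eq:c-2}. The identity $I_1(\sqrt{xy})\sim \sqrt{xy}/2$ gives $p(y;4,x)\sim \tfrac12(y/x)^{1/2}\cdot \sqrt{xy}/2 = y/4$; likewise $I_0(\sqrt{xy})\to 1$ yields $p(y;2,x)\sim 1/2$, and using $I_{-1}=I_1$ together with \eqref{eq:pde-pp1} yields $p(y;0,x)=(x/y)p(y;4,x)\sim x/4$. Items (5) and (6) then follow as ratios, with (6) requiring the next expansion $I_2(\sqrt{xy})\sim xy/8$, giving $p(y;6,x)\sim y^2/16$ and hence $p(y;6,x)/p(y;4,x)\sim y/4$. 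For (4), combine $p(y;0,x)/p(y;4,x)=x/y=Se^{rT}/K$ with the closed-form expression for $y_T/x_T$ from Lemma \ref{lem:varphi-x-y-at-infty}(9): the exponential factor $e^{rT}$ cancels exactly, leaving $(r+\varphi_T/\varphi)/(\varphi_T/\varphi)$, which tends to the stated limit by Lemma \ref{lem:varphi-x-y-at-infty}(4).

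For (10)--(12), I apply the bounds of Lemma \ref{lem:chisq-bounds} directly: since $e^{-x/2}\to 1$, $y\to 0$ and $xy\to 0$, the upper bound on $\chi^2(y;4,x)$ is $O(y^2)\to 0$, while the lower and upper bounds on $\chi^2(y;0,x)$ both tend to $1$, so $\tilde\chi^2(y;0,x)\to 0$. Item (9) divides the $O(y^2)$ upper bound by $p(y;4,x)\sim y/4$ to obtain $O(y)\to 0$. Item (7) is the most delicate: write
\begin{equation*}
\tilde\chi^2(y;0,x)=\bigl(1-e^{-x/2}\bigr)-\bigl[\chi^2(y;0,x)-e^{-x/2}\bigr],
\end{equation*}
note that $1-e^{-x/2}\sim x/2$ while Lemma \ref{lem:chisq-bounds} shows $\chi^2(y;0,x)-e^{-x/2}=O(xy)$, so $\tilde\chi^2(y;0,x)\sim x/2$; dividing by $p(y;4,x)\sim y/4$ and using $e^{-rT}(x/y)=S/K$ gives (7). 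Item (8) follows from (5) combined with $x_T/y=(x_T/x)(x/y)=-(\varphi_T/\varphi)(Se^{rT}/K)$, which cancels the $e^{-rT}$ factor and gives $-(2S/K)(\varphi_T/\varphi)\to -2S\eta/K$.

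The main obstacle is item (7): the leading behavior of $\tilde\chi^2(y;0,x)$ comes from the atom $e^{-x/2}$ of the zero-degree-of-freedom distribution rather than from its continuous part, and one must check that the continuous contribution (of size $O(xy)$) is genuinely negligible compared to the $O(x)$ contribution from $1-e^{-x/2}$. This is where the sharpness of Lemma \ref{lem:chisq-bounds} is essential; the other items are routine substitutions once the Bessel asymptotic and the formulas of Lemma \ref{lem:varphi-x-y-at-infty} are in hand.
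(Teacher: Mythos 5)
Your proposal is correct and follows essentially the same route as the paper: the paper's own proof is a one-line appeal to the definitions plus the small-argument Bessel asymptotic $I_1(z)\sim z/2$ (items (11)--(12) being the only ones spelled out), which is exactly your ingredient (a); your use of Lemma \ref{lem:chisq-bounds} for (9)--(12) in place of the paper's direct estimate of $\int_0^y p(z;0,x)\,\d z$ is an immaterial variation, and your handling of (7) via $\tilde{\chi}^2(y;0,x)=(1-\e^{-x/2})-O(xy)\sim x/2$ is sound. One caveat on item (4): your computation correctly reduces the expression to $(r+\varphi_T/\varphi)/(\varphi_T/\varphi)$, whose limit is $(r+\eta)/\eta$, \emph{not} the value $(r+\eta)/r$ printed in the lemma; the denominator in the statement appears to be a typo for $\eta$, so you should not assert agreement with the stated limit without flagging the discrepancy.
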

\begin{proof}
Straightforward application of the definitions and if needed of L'Hopital's rule. Nevertheless, the last two limits require some more explanation. As mentioned in Siegel \cite{siegel-79}, $p(y;0,x)$ is an improper density and
\begin{equation}
  \chi^2(y;0,x)
        = \frac{1}{2}\e^{-x/2} + \int_0^y p(z;0,x) \, \d z.
\end{equation}
As $T \to 0$ the integral goes to zero because $x, y \xrightarrow{\; T \to \infty \;} 0$ and
\begin{equation}
\begin{split}
  \int_0^y p(z;0,x)  \, \d z
    & = \int_0^y \frac{1}{2}
            \left(\frac{z}{x}\right)^{-1/2} \e^{-(x+z)/2}
            I_{-1}\left(\sqrt{xz}\right) \, \d z \\
    & = \int_0^y \frac{1}{2}
            \sqrt{\frac{x}{z}} \e^{-(x+z)/2}
            I_{1}\left(\sqrt{xz}\right) \, \d z \\
    & \sim \int_0^y \frac{1}{2}
            \sqrt{\frac{x}{z}} \e^{-(x+z)/2}
            \frac{\sqrt{xz}}{2} \, \d z   \qquad (xz \to 0) \\
    & = \int_0^y \frac{1}{4}
            x \e^{-(x+z)/2}\, \d z \xrightarrow{\; x,y \to 0 \;} 0,   \\
\end{split}
\end{equation}
where the third line above results from \cite[Formula 9.6.10]{abramowitz-stegun-70}. This gives (11). The limit in (12) follows from the definition that $\tilde{\chi}^2(y;0,x) = 1 - \chi^2(y;0,x)$.
\end{proof}

\subsection{Properties of the terms in the Black--Scholes call price}

\begin{lemma} \label{lem:rhat-formula}
Let $\theta = 1 - \e^{-x/2}$. Then we have the following representation formulas:
\begin{equation}
\left\{
\begin{split}
    \hat{r}
        & = r - \frac{1}{T} \ln \theta, \\
    \hat{r}_T
        & = \frac{1}{T^2} \ln \theta
            - \frac{1}{T} \frac{\theta_T}{\theta}, \\
    \hat{r}_{TT}
%        & = - \frac{2}{T^3} \ln \theta
%            + \frac{1}{T^2} \frac{\theta_T}{\theta}
%            + \frac{1}{T^2} \frac{\theta_T}{\theta}
%            - \frac{1}{T}
%                \frac{\theta_{TT}\theta - \theta_T^2}{\theta^2} \\
        & = \frac{1}{T^2}
            \left(
                - \frac{2}{T}\ln\theta
                +2 \frac{\theta_T}{\theta}
                - T \frac{\theta_{TT}\theta - \theta_T^2}{\theta^2}
            \right).
\end{split}
\right.
\end{equation}
\end{lemma}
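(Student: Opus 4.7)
The lemma is a direct consequence of the formula \eqref{eq:rhat-1} combined with routine calculus. My plan is as follows.

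First, I would note that \eqref{eq:rhat-1} together with \eqref{eq:zero-coupond-bond} already gives
\begin{equation*}
    \hat{r}(T) = r - \frac{1}{T}\ln\bigl(1-\e^{-x/2}\bigr) = r - \frac{1}{T}\ln\theta,
\end{equation*}
which establishes the first identity with $\theta = 1 - \e^{-x/2}$ (viewed as a function of $T$ through $x = x(T)$).

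Next, I would differentiate the expression $\hat{r} = r - T^{-1}\ln\theta$ with respect to $T$, applying the product rule to the two factors $T^{-1}$ and $\ln\theta$ and the chain rule to $\ln\theta$. This yields
\begin{equation*}
    \hat{r}_T = \frac{1}{T^2}\ln\theta - \frac{1}{T}\frac{\theta_T}{\theta},
\end{equation*}
which is the second identity.

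Finally, I would differentiate $\hat{r}_T$ term by term. The first term $T^{-2}\ln\theta$ differentiates to $-2T^{-3}\ln\theta + T^{-2}\theta_T/\theta$ by the product/chain rule, and the second term $-T^{-1}\theta_T/\theta$ differentiates to $T^{-2}\theta_T/\theta - T^{-1}(\theta_{TT}\theta - \theta_T^2)/\theta^2$ using the product and quotient rules. Combining and factoring out $T^{-2}$ yields the stated expression for $\hat{r}_{TT}$. There is no real obstacle here; the entire proof is a transparent exercise in differentiation and can be dispatched in a few lines.
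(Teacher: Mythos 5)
Your proposal is correct and matches the paper's own proof, which simply states "Straightforward differentiation": the first identity is \eqref{eq:rhat-1}, and the other two follow by the product, quotient, and chain rules exactly as you compute. All three displayed formulas check out.
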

\begin{proof}
Straightforward differentiation.
\end{proof}

\begin{lemma} \label{lem:theta-at-infty}
As $T \to \infty$, we have the following limits for the $\theta$ related terms:
\begin{enumerate}
\item $\displaystyle \theta_T = \frac{1}{2} x_T \e^{-x/2} \longrightarrow 0.$
\item $\displaystyle \theta_{TT} = \frac{1}{2} x_{TT} \e^{-x/2}
            - \frac{1}{4} x_T^2 \e^{-x/2}.$
\item $\displaystyle \frac{\theta_T}{\theta} = \frac{1}{2} \frac{x_T \e^{-x/2}}{1-\e^{-x/2}}
            \longrightarrow - \eta.$
\item $\displaystyle \frac{\theta_{TT}}{\theta} \longrightarrow \eta^2. $
\item $\displaystyle \lim_{T\to \infty}
        \frac{1}{T}\ln \theta
          = \lim_{T\to \infty} \frac{\theta_T}{\theta}
          = \lim_{T\to \infty}
            \frac{1}{2}
            \left(\frac{x_T \e^{-x/2}}{1-\e^{-x/2}}\right)
            = - \eta.$
\end{enumerate}
\end{lemma}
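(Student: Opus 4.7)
\medskip

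The plan is to exploit the fact that $x \to 0$ as $T \to \infty$ (Lemma~\ref{lem:varphi-x-y-at-infty}(2)), which makes $\theta = 1 - e^{-x/2}$ small, with the Taylor expansion $\theta = x/2 + O(x^2)$ and $e^{-x/2} \to 1$. All five limits will then reduce to the previously established asymptotics for $x$, $x_T$, $x_{TT}$ in Lemma~\ref{lem:varphi-x-y-at-infty}.

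For parts (1) and (2), I would just differentiate $\theta = 1 - e^{-x/2}$ directly: $\theta_T = \tfrac{1}{2} x_T e^{-x/2}$ and $\theta_{TT} = \tfrac{1}{2} x_{TT} e^{-x/2} - \tfrac{1}{4} x_T^2 e^{-x/2}$ by the product and chain rules. Part (1) then follows from $x_T \to 0$ and $e^{-x/2} \to 1$.

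For part (3), write
\begin{equation*}
    \frac{\theta_T}{\theta}
    = \frac{x_T}{x} \cdot \frac{x/2}{1 - e^{-x/2}} \cdot e^{-x/2},
\end{equation*}
and use $\lim_{x \to 0} \frac{x/2}{1 - e^{-x/2}} = 1$ together with $x_T/x \to -\eta$ from Lemma~\ref{lem:varphi-x-y-at-infty}(5). For part (4), from the formula in (2),
\begin{equation*}
    \frac{\theta_{TT}}{\theta}
    = \frac{x_{TT} e^{-x/2}}{2\theta} - \frac{x_T^2 e^{-x/2}}{4\theta}
    = \frac{x_{TT}}{x}\cdot \frac{x/2}{\theta} e^{-x/2}
      - \frac{1}{2}\cdot\frac{x_T^2}{x}\cdot \frac{x/2}{\theta} e^{-x/2}.
\end{equation*}
The first term tends to $\eta^2$ by Lemma~\ref{lem:varphi-x-y-at-infty}(6); the second vanishes because $x_T^2/x = x(x_T/x)^2 \to 0 \cdot \eta^2 = 0$, again by Lemma~\ref{lem:varphi-x-y-at-infty}(2) and (5).

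Finally, for part (5), the limit $\lim_{T \to \infty} T^{-1}\ln \theta$ has the $-\infty/\infty$ indeterminate form since $\theta \to 0$ and $T \to \infty$, so L'Hopital's rule applies and gives $\lim_{T \to \infty} \theta_T/\theta$, which equals $-\eta$ by part (3); the intermediate identity in the statement is just the explicit expression of $\theta_T/\theta$ already derived. No step is a genuine obstacle: the whole lemma is a bookkeeping exercise built on Lemma~\ref{lem:varphi-x-y-at-infty}, with the only mild subtlety being the repeated use of the equivalence $\theta \sim x/2$ to convert ratios in $\theta$ into ratios in $x$.
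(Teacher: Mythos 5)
Your proof is correct. The overall skeleton matches the paper's (differentiate $\theta=1-\e^{-x/2}$, reduce everything to the known limits of $x$, $x_T$, $x_{TT}$ from Lemma \ref{lem:varphi-x-y-at-infty}, and use L'Hopital on $T^{-1}\ln\theta$ for part (5)), but you evaluate the key ratio $\theta_T/\theta$ differently. The paper rewrites it as $\tfrac{1}{2}x_T/(\e^{x/2}-1)$, applies L'Hopital a second time in $T$, and then substitutes the explicit $\varphi$-formulas for $x_T$ and $x_{TT}$ to extract the limit $-\eta$. You instead factor out $x_T/x$ and use the elementary equivalence $\theta\sim x/2$ as $x\to 0$ (i.e. $\lim_{x\to 0}\frac{x/2}{1-\e^{-x/2}}=1$), so that parts (3) and (4) fall out directly from the already-established limits $x_T/x\to-\eta$ and $x_{TT}/x\to\eta^2$ with no further differentiation. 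Your route is slightly more economical and avoids recomputing from $\varphi$; the paper's route is more mechanical but self-contained in its reliance on L'Hopital. Both are valid; your handling of the cross term in (4) via $x_T^2/x = x\,(x_T/x)^2\to 0$ is exactly the right observation.
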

\begin{proof}
These limits follow directly from the definition of the variables and if necessary from an additional application of  L'Hopital's rule. In particular, we have
\begin{equation}
\begin{split}
    \lim_{T\to \infty}
        \frac{1}{T}\ln \theta
        & = \lim_{T\to \infty} \frac{\theta_T}{\theta}
          = \lim_{T\to \infty}
            \frac{1}{2}
            \left(\frac{x_T \e^{-x/2}}{1-\e^{-x/2}}\right) \\
        & = \lim_{T\to \infty}
            \frac{1}{2}
            \left(\frac{x_T }{\e^{x/2}-1}\right)
          = \lim_{T\to \infty}
            \frac{1}{2}
            \left(\frac{x_{TT} }{(x_T \e^{x/2})/2}\right) \\
        & = \lim_{T\to \infty}
            \frac{x_{TT} }{x_T \e^{x/2}}
          = \lim_{T\to \infty}
            \frac{2S\varphi^{-3}\varphi_T^2-s\varphi^{-2}\varphi_{TT}}{-s\varphi^{-2}\varphi_T \e^{x/2}} \\
        & = \lim_{T\to \infty}
            \left[
                - 2\frac{\varphi_T}{\varphi}
                + \frac{\varphi_{TT}}{\varphi_T}
            \right] \frac{1}{\e^{x/2}}
          = -2\eta + \eta
          = - \eta.
\end{split}
\end{equation}
And this completes the proof.
\end{proof}
We have the following lemma for the limits of $\hat{r}$ and its related terms.
\begin{lemma}\label{lem:rhat-at-infty}
Let $\hat{r}_* = r + \eta$. Then the following limits hold as $T \to \infty$:
\begin{enumerate}
\item   $\hat{r} \longrightarrow \hat{r}_*.$
\item   $\hat{r}_T \longrightarrow 0. $
\item   $\hat{r}_T T \longrightarrow 0.$
\item   $\hat{r}_{TT} \longrightarrow 0.$
\item   $(\hat{r} - \hat{r}_*)T \longrightarrow - \ln \frac{2\eta S}{\alpha}.$
\end{enumerate}
\end{lemma}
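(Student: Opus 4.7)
My plan is to derive all five limits by direct substitution of the representation formulas from Lemma \ref{lem:rhat-formula} together with the asymptotics of $\theta = 1 - \e^{-x/2}$ established in Lemma \ref{lem:theta-at-infty}. Throughout I will rely on the three key facts $T^{-1}\ln\theta \to -\eta$, $\theta_T/\theta \to -\eta$, and $\theta_{TT}/\theta \to \eta^2$ as $T \to \infty$, plus the observation that $\theta \to 0$ and $\theta_T \to 0$.

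For (1), using $\hat{r} = r - T^{-1}\ln\theta$ and $T^{-1}\ln\theta \to -\eta$ gives $\hat{r} \to r + \eta = \hat{r}_*$ immediately. For (2), substitute into
\begin{equation*}
  \hat{r}_T = \frac{1}{T^2}\ln\theta - \frac{1}{T}\frac{\theta_T}{\theta} = \frac{1}{T}\cdot\frac{\ln\theta}{T} - \frac{1}{T}\cdot\frac{\theta_T}{\theta};
\end{equation*}
both terms are $O(1/T)$ since the two bracketed factors tend to $-\eta$, so $\hat{r}_T \to 0$. For (3), multiplying through by $T$ yields $\hat{r}_T T = T^{-1}\ln\theta - \theta_T/\theta \to -\eta - (-\eta) = 0$; note that the cancellation is exact in the limit, which is what makes this vanish. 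For (4), substitute into the expression for $\hat{r}_{TT}$: each of the three contributions $-2T^{-3}\ln\theta$, $2T^{-2}(\theta_T/\theta)$, and $-T^{-1}(\theta_{TT}/\theta - (\theta_T/\theta)^2)$ is $O(1/T)$ (in the last case, both $\theta_{TT}/\theta$ and $(\theta_T/\theta)^2$ tend to $\eta^2$, so their difference vanishes in the limit and is then divided by $T$), giving $\hat{r}_{TT} \to 0$.

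The main obstacle is (5), which is genuinely a second-order statement requiring sharper asymptotics of $\theta$, not just its limiting behavior. I would rewrite
\begin{equation*}
  (\hat{r} - \hat{r}_*)T = -\ln\theta - \eta T,
\end{equation*}
and then use $x \to 0$ (from Lemma \ref{lem:varphi-x-y-at-infty}) together with the expansion $1 - \e^{-x/2} = x/2 + O(x^2)$ to get $\theta \sim x/2$ as $T \to \infty$. Since
\begin{equation*}
  x = \frac{S}{\varphi(T)} = \frac{4\eta S}{\alpha(\e^{\eta T}-1)} \sim \frac{4\eta S}{\alpha}\e^{-\eta T} \quad (T \to \infty),
\end{equation*}
we have $\theta \sim (2\eta S/\alpha)\e^{-\eta T}$, whence $-\ln\theta = -\ln(2\eta S/\alpha) + \eta T + o(1)$. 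Subtracting $\eta T$ gives the stated limit $-\ln(2\eta S/\alpha)$. The subtlety is that one must track the constant factor in the $\theta$-asymptotics carefully; using only the qualitative $\theta_T/\theta \to -\eta$ from Lemma \ref{lem:theta-at-infty} would only reproduce $-\ln\theta - \eta T = o(T)$, not a finite limit, so the explicit form of $x$ in terms of $\varphi$ is indispensable here.
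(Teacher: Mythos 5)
Your proof is correct. Parts (1)--(4) coincide with the paper's argument: both substitute the representation formulas of Lemma \ref{lem:rhat-formula} and invoke the limits $T^{-1}\ln\theta \to -\eta$, $\theta_T/\theta \to -\eta$, $\theta_{TT}/\theta \to \eta^2$ from Lemma \ref{lem:theta-at-infty}. For part (5), which you rightly identify as the only genuinely second-order statement, your route differs in technique from the paper's. You expand $\theta = 1-\e^{-x/2} = x/2 + \O(x^2)$ and combine this with the explicit asymptotics $x = S/\varphi \sim (4\eta S/\alpha)\e^{-\eta T}$ to conclude $\theta \sim (2\eta S/\alpha)\e^{-\eta T}$, hence $-\ln\theta - \eta T \to -\ln(2\eta S/\alpha)$. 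The paper instead rewrites $-\ln\theta - \eta T = -\ln\bigl[\theta\bigl(4\eta\varphi/\alpha + 1\bigr)\bigr]$, treats $\theta$ as a function of $\varphi$, and evaluates $\lim_{\varphi\to\infty}\theta\,(4\eta\varphi/\alpha+1)$ by L'Hopital's rule (differentiating in $\varphi$), arriving at the same constant $2\eta S/\alpha$. Both arguments are sound and of comparable length; yours is somewhat more transparent about where the constant comes from and avoids the change of variable and the verification of L'Hopital's hypotheses, while the paper's avoids appealing to the Taylor expansion of the exponential. Your closing observation --- that the qualitative limit $\theta_T/\theta \to -\eta$ alone would only yield $-\ln\theta - \eta T = o(T)$ and that the precise prefactor of $x$ is indispensable --- correctly pinpoints why (5) needs more than Lemma \ref{lem:theta-at-infty}.
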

\begin{proof}
The limit in (1) follows from Lemma \ref{lem:theta-at-infty}. The limit in (2) holds as
\begin{equation}
\begin{split}
    \hat{r}_T
        & = \left[
                \frac{1}{T} \left(\frac{1}{T} \ln \theta \right) - \frac{1}{T} \left(\frac{\theta_T}{\theta}\right)
            \right]
            \sim
            \left[\frac{1}{T} (-\eta) - \frac{1}{T} (-\eta)\right]
             \xrightarrow{\;T \to \infty\;} 0.
\end{split}
\end{equation}
For the limit in (3), we have
\begin{equation}
\begin{split}
    \hat{r}_T T
        & = \left(\frac{1}{T} \ln \theta  - \frac{\theta_T}{\theta} \right)
        \xrightarrow{\;T \to \infty\;} [-\eta - (-\eta)] = 0.
\end{split}
\end{equation}
The limit in (4) follows from the calculation that
\begin{equation}
\begin{split}
    \hat{r}_{TT}
        & = \left[
            \frac{1}{T^2} \left(-\frac{2}{T} \ln \theta  + 2\frac{\theta_T}{\theta}\right)
             - \frac{1}{T} \left(\frac{\theta_{TT}}{\theta} - \frac{\theta_T^2}{\theta^2} \right)
            \right]
            \xrightarrow{\;T \to \infty\;} 0.
\end{split}
\end{equation}
We now prove (5). By \eqref{eq:rhat-1} and Lemma \ref{lem:rhat-formula},
\begin{equation}
  f \equiv (\hat{r} - \hat{r}_*) T
  = \left[r - \frac{1}{T} \ln \theta -  (r + \eta) \right] T
  = - \ln \theta - \eta T.
\end{equation}
Recall that \eqref{eq:c-2} gives $\varphi = \frac{\alpha}{4\eta} (\e^{\eta T} - 1)$.
This gives
\begin{equation}
  \eta T = \ln \left(\frac{4\eta\varphi}{\alpha} +1 \right), \quad
  T = \frac{1}{\eta} \ln \left(\frac{4\eta\varphi}{\alpha} +1 \right).
\end{equation}
Noting that $\varphi \to \infty$ as $T \to \infty$, we want to express $f$ as a function of $\varphi$ and find its limit accordingly. Now we have
\begin{equation}
     f =  - \ln \theta - \eta T
         = - \ln \theta - \ln \left(\frac{4\eta\varphi}{\alpha} +1 \right) = - \ln \theta \left(\frac{4\eta\varphi}{\alpha} +1 \right).
\end{equation}
So our aim now is to find the following limit:
\begin{equation}
  \lim_{T \to \infty} f
  = \lim_{T \to \infty} \left[- \ln \theta \left(\frac{4\eta\varphi}{\alpha} +1 \right)\right]
  = \lim_{\varphi \to \infty} \left[- \ln \theta \left(\frac{4\eta\varphi}{\alpha} +1 \right)\right].
\end{equation}
From Lemma \ref{lem:rhat-formula} we know that $\theta = 1 - \e^{-x/2}$; and by definition $ x= S/\varphi$, see \eqref{eq:c-2}. Hence
\begin{equation}
    x_\varphi = - \frac{S}{\varphi^2}, \quad
    \theta_\varphi = \frac{x_\varphi}{2} \e^{-x/2} = - \frac{S}{2\varphi^2} \e^{-x/2}.
\end{equation}
Moreover, it can be checked that as a function of $\varphi$, the function $\theta \to 0$ as $\varphi \to 0$; similarly we have $x \to 0$ as $\varphi \to 0$. So by the L'Hopital rule, we get
\begin{equation}
\begin{split}
\lim_{\varphi \to \infty} \theta \left(\frac{4\eta\varphi}{\alpha} +1 \right)
    & \equiv \lim_{\varphi \to \infty} \frac{\theta}{\left(\frac{4\eta\varphi}{\alpha} +1 \right)^{-1}}
     = \lim_{\varphi \to \infty}
        \frac{\partial_\varphi \theta}{\partial_\varphi\left[\left(\frac{4\eta\varphi}{\alpha} +1 \right)^{-1}\right]},
\end{split}
\end{equation}
provided the last limit exists. The last limit does exist as
\begin{equation}
\begin{split}
  \frac{\partial_\varphi \theta}{\partial_\varphi\left[\left(\frac{4\eta\varphi}{\alpha} +1 \right)^{-1}\right]}
    & = \frac{- \frac{S}{2\varphi^2} \e^{-x/2}}{-\left(\frac{4\eta\varphi}{\alpha}+1\right)^{-2} \frac{4\eta}{\alpha}}
      = \frac{\alpha S}{8\eta}\times \frac{\left(\frac{4\eta\varphi}{\alpha}+1\right)^2}{\varphi^2} \e^{-x/2} \\
    & = \frac{\alpha S}{8\eta}\left(\frac{4\eta}{\alpha}+\frac{1}{\varphi}\right)^2 \e^{-x/2}
        \xrightarrow{\;\varphi \to \infty \;} \frac{\alpha S}{8\eta} \times \frac{16 \eta^2}{\alpha^2} = \frac{2\eta S}{\alpha}.
\end{split}
\end{equation}
From this the desired limit follows and the proof is thus complete.
\end{proof}

\subsection{Large time limits associated with $\vhi$.}

\begin{lemma} \label{lem:d1-d2-at-infty}
Under the MMM, we have, for each $K \in (0,\infty)$,
\begin{enumerate}
\item $\displaystyle d_1(K,T;\vhi) \xrightarrow{\;T \to \infty\;} \infty.$
\item $\displaystyle d_2(K,T;\vhi) \xrightarrow{\;T \to \infty\;} - \infty.$
\item $\displaystyle \exp\left[-\left(\frac{d_1^2(K,T;\vhi)}{2} - \hat{r}_*T\right)\right]
  \xrightarrow{\;T \to \infty\;} 0.$
\end{enumerate}
\end{lemma}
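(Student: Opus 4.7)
The approach is to substitute $v = \vhi = \sqrt{2\hat{r}_*}/(1-\epsilon)$ into the explicit formulas
\begin{equation*}
  d_1(K,T;v) = \frac{\ln(S/K)}{v\sqrt{T}} + \frac{\hat{r}\sqrt{T}}{v} + \frac{v\sqrt{T}}{2},
  \qquad
  d_2(K,T;v) = d_1(K,T;v) - v\sqrt{T},
\end{equation*}
and then track the $T \to \infty$ behavior term by term, using Lemma \ref{lem:rhat-at-infty}, which guarantees that $\hat{r} \to \hat{r}_* = r+\eta$ and $(\hat{r}-\hat{r}_*)T \to -\ln(2\eta S/\alpha)$ (a finite constant).

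For (1), the first term $\ln(S/K)/(\vhi\sqrt{T})$ tends to $0$, while $\hat{r}\sqrt{T}/\vhi \to \infty$ (since $\hat{r} \to \hat{r}_* > 0$) and $\vhi\sqrt{T}/2 \to \infty$. So $d_1(K,T;\vhi) \to \infty$. For (2), write $d_2 = \ln(S/K)/(\vhi\sqrt{T}) + \sqrt{T}(\hat{r}/\vhi - \vhi/2)$. Using $\vhi = \sqrt{2\hat{r}_*}/(1-\epsilon)$, one computes
\begin{equation*}
  \frac{\hat{r}_*}{\vhi} - \frac{\vhi}{2} = \sqrt{\hat{r}_*/2}\,\Bigl[(1-\epsilon) - \frac{1}{1-\epsilon}\Bigr] < 0
\end{equation*}
for $0 < \epsilon \ll 1$, so the $\sqrt{T}$ coefficient is eventually bounded above by a strictly negative constant; hence $d_2(K,T;\vhi) \to -\infty$.

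For (3), which is the main calculation, I would expand $d_1^2$ as $(a+b+c)^2$ with $a = \ln(S/K)/(\vhi\sqrt{T})$, $b = \hat{r}\sqrt{T}/\vhi$, $c = \vhi\sqrt{T}/2$. The terms $a^2, 2ab, 2ac$ are $o(1)$ or constants; the cross term $2bc = \hat{r}T$; and $b^2 + c^2 = \hat{r}^2 T/\vhi^2 + \vhi^2 T/4$. Thus
\begin{equation*}
  \tfrac{1}{2}d_1^2 - \hat{r}_* T
  = T\Bigl[\tfrac{\hat{r}^2}{2\vhi^2} + \tfrac{\vhi^2}{8} + \tfrac{\hat{r}}{2} - \hat{r}_*\Bigr] + O(1).
\end{equation*}
Using $\hat{r} \to \hat{r}_*$ and $\vhi^2 = 2\hat{r}_*/(1-\epsilon)^2$, the bracketed quantity converges to
\begin{equation*}
  \frac{\hat{r}_*}{4}\Bigl[(1-\epsilon)^2 + \frac{1}{(1-\epsilon)^2} - 2\Bigr]
  = \frac{\hat{r}_*}{4}\Bigl[(1-\epsilon) - \frac{1}{1-\epsilon}\Bigr]^2 > 0,
\end{equation*}
so $\tfrac{1}{2}d_1^2 - \hat{r}_* T \to +\infty$ and the claimed exponential tends to $0$.

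The main obstacle is the bookkeeping in (3): one must combine the $T \to \infty$ behavior of $\hat{r}$, which only converges to $\hat{r}_*$ at rate $1/T$ (Lemma \ref{lem:rhat-at-infty}(5)), with the quadratic expansion carefully enough to see that the $O(T)$ leading coefficient is strictly positive rather than zero. Once the coefficient is written as a perfect square times $\hat{r}_*/4$, the positivity (for any $\epsilon > 0$) is transparent, which is exactly what allows the exponential suppression in (3).
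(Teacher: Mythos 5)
Your proposal is correct and follows essentially the same route as the paper: parts (1) and (2) are the identical decomposition $d_i = \ln(S/K)/(\vhi\sqrt{T}) + \beta_i^\epsilon\sqrt{T}$ with the sign of the limiting coefficient determined by the $(1-\epsilon)$ versus $1/(1-\epsilon)$ comparison. For (3) the paper factors $\tfrac{1}{2}d_1^2 - \hat{r}_*T$ as a difference of squares and shows the factor $\beta_1^\epsilon - \sqrt{2\hat{r}_*}$ tends to a positive constant, whereas you expand the square and collect the $O(T)$ coefficient into $\tfrac{\hat{r}_*}{4}\bigl[(1-\epsilon) - \tfrac{1}{1-\epsilon}\bigr]^2$; these are algebraically the same positive limit, so the difference is purely cosmetic.
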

\begin{proof}
We shall prove (1) first. Let
\begin{equation}
  \beta_1^\epsilon = \frac{\hat{r}}{\vhi} + \frac{\vhi}{2}.
\end{equation}
The by the definition of $d_1$,
\begin{equation}
\begin{split}
  d_1(K,T;\vhi)
    & = \frac{\ln(S/K) + (\hat{r} + {\vhi}^2/2)T}{\vhi\sqrt{T}}
     = \frac{\ln(S/K)}{\vhi\sqrt{T}}
        +  \frac{(\hat{r} + {\vhi}^2/2)T}{\vhi\sqrt{T}} \\
    & = \frac{\ln(S/K)}{\vhi\sqrt{T}}
        +  \left(\frac{\hat{r}}{\vhi} + \frac{\vhi}{2}\right)\sqrt{T}
     = \frac{\ln(S/K)}{\vhi\sqrt{T}}
        +  \beta_1^\epsilon \sqrt{T}.  \\
\end{split}
\end{equation}
For each $0<\epsilon \ll 1$, $\vhi$ is a positive constant, so
\begin{equation} \label{eq:lem:d1-d2-at-infty-1}
  \frac{\ln(S/K)}{\vhi\sqrt{T}} \xrightarrow{\;T \to \infty\;} 0.
\end{equation}
By Lemma \ref{lem:rhat-at-infty} (1), $\hat{r} \xrightarrow{\;T \to \infty\;} \hat{r}_*>0$. This implies that
\begin{equation}
  \beta_1^\epsilon \xrightarrow{\;T \to \infty\;} \const>0.
\end{equation}
Consequently we get (1).

To prove (2), we let
\begin{equation}
  \beta_2^\epsilon = \frac{\hat{r}}{\vhi} - \frac{\vhi}{2}.
\end{equation}
Then we can write $d_2(K,T;\vhi)$ as
\begin{equation}
\begin{split}
  d_2(K,T;\vhi)
    & = \frac{\ln(S/K) + (\hat{r} - {\vhi}^2/2)T}{\vhi\sqrt{T}}
      = \frac{\ln(S/K)}{\vhi\sqrt{T}}
        +  \frac{(\hat{r} - {\vhi}^2/2)T}{\vhi\sqrt{T}} \\
    & = \frac{\ln(S/K)}{\vhi\sqrt{T}}
        +  \left(\frac{\hat{r}}{\vhi} - \frac{\vhi}{2}\right)\sqrt{T}
      = \frac{\ln(S/K)}{\vhi\sqrt{T}}
        +  \beta_2^\epsilon \sqrt{T}.  \\
\end{split}
\end{equation}
As already noted,
\begin{equation}
  \frac{\ln(S/K)}{\vhi\sqrt{T}} \xrightarrow{\;T \to \infty\;} 0.
\end{equation}
Thus we only need to show that $\beta_2^\epsilon \sqrt{T} \xrightarrow{\;T \to \infty\;} -\infty$. We rewrite $\beta_2^\epsilon$ as
\begin{equation}
\begin{split}
  \beta_2^\epsilon
    & = \frac{\hat{r}}{\sqrt{2\hat{r}_*}/(1-\epsilon)}
        - \frac{\sqrt{2\hat{r}_*}/(1-\epsilon)}{2}
      = \frac{\hat{r}}{\sqrt{2\hat{r}_*}}(1-\epsilon)
        - \frac{\sqrt{2\hat{r}_*}}{2} \frac{1}{1-\epsilon} \\
    & = %\left(
            \frac{\hat{r}}{\sqrt{2\hat{r}_*}}
            - \frac{\sqrt{2\hat{r}_*}}{2}
        %\right)
        + \frac{\sqrt{2\hat{r}_*}}{2}
        - \frac{\sqrt{2\hat{r}_*}}{2} \frac{1}{1-\epsilon}
        - \frac{\hat{r}}{\sqrt{2\hat{r}_*}} \epsilon \\
    & = \left(
            \frac{\hat{r}}{\sqrt{2\hat{r}_*}}
            - \frac{\sqrt{2\hat{r}_*}}{2}
        \right)
        + \frac{\sqrt{2\hat{r}_*}}{2}
            \left(
                1- \frac{1}{1-\epsilon}
            \right)
        - \frac{\hat{r}}{\sqrt{2\hat{r}_*}} \epsilon \\
    & = \left(
            \frac{\hat{r}}{\sqrt{2\hat{r}_*}}
            - \frac{\sqrt{2\hat{r}_*}}{2}
        \right)
        - \frac{\sqrt{2\hat{r}_*}}{2}
                \frac{\epsilon}{1-\epsilon}
        - \frac{\hat{r}}{\sqrt{2\hat{r}_*}} \epsilon \\
    & = \left(
            \frac{\hat{r}}{\sqrt{2\hat{r}_*}}
            - \frac{\sqrt{2\hat{r}_*}}{2}
        \right)
        - \epsilon \frac{\sqrt{2\hat{r}_*}}{2}
            \left(
                \frac{1}{1-\epsilon}
                - \frac{\hat{r}}{\hat{r}_*}
            \right). \\
\end{split}
\end{equation}
Using Lemma \ref{lem:rhat-at-infty} (1) we get
\begin{equation} \label{eq:lem:d1-d2-at-infty-2}
  \left(
    \frac{\hat{r}}{\sqrt{2\hat{r}_*}} - \frac{\sqrt{2\hat{r}_*}}{2}
  \right) \xrightarrow{\;T \to \infty\;} 0
  \quad \mbox{ and } \quad
  \frac{\hat{r}}{\hat{r}_*} \xrightarrow{\;T \to \infty\;} 1.
\end{equation}
As a result,
\begin{equation} \label{eq:lem:d1-d2-at-infty-3}
%  \epsilon \frac{\sqrt{2\hat{r}_*}}{2}
            \left(
                \frac{1}{1-\epsilon}
                - \frac{\hat{r}}{\hat{r}_*}
            \right)
            \xrightarrow{\;T \to \infty\;} \const(\epsilon) >0.
\end{equation}
This then leads to
\begin{equation}
  \beta_2^\epsilon \xrightarrow{\;T \to \infty\;} \const(\epsilon) <0
  \quad \mbox{ and } \quad
   \beta_2^\epsilon \sqrt{T} \xrightarrow{\;T \to \infty\;} -\infty.
\end{equation}
This proves (2).

We now prove (3). Our strategy is to show that $ [d_1^2 (K,T;\vhi)/2 - \hat{r}_*T] \xrightarrow{\;T \to \infty\;} \infty. $ By definition,
\begin{equation} \label{eq:lem:d1-d2-at-infty-4}
\begin{split}
  \frac{1}{2} d_1^2 (K,T;\vhi) - \hat{r}_*T
    & = \frac{1}{2}
        \left[
            \frac{\ln(S/K)}{\vhi\sqrt{T}}
            + \beta_1^\epsilon \sqrt{T}
        \right]^2
        - \hat{r}_* T
      = \frac{1}{2}
        \left[
            \frac{\ln(S/K)}{\vhi T}
            + \beta_1^\epsilon
        \right]^2 T
        - \frac{\Bigl(\sqrt{2\hat{r}_*}\Bigr)^2 T}{2} \\
    & = \frac{T}{2}
        \left[
            \frac{\ln(S/K)}{\vhi T}
            + \beta_1^\epsilon + \sqrt{2\hat{r}_*}
        \right]
        \left[
            \frac{\ln(S/K)}{\vhi T}
            + \beta_1^\epsilon - \sqrt{2\hat{r}_*}
        \right].
         \\
\end{split}
\end{equation}
Here the first square bracketed term is strictly positive as $T$ tends to infinity. For the second square bracketed term, we note that
\begin{equation}
\begin{split}
  \beta_1^\epsilon
    & = \frac{\hat{r}}{\sqrt{2\hat{r}_*}/(1-\epsilon)}
        + \frac{\sqrt{2\hat{r}_*}/(1-\epsilon)}{2}
      = \frac{\hat{r}}{\sqrt{2\hat{r}_*}}(1-\epsilon)
        + \frac{\sqrt{2\hat{r}_*}}{2} \frac{1}{1-\epsilon} \\
    & = \left(
            \frac{\hat{r}}{\sqrt{2\hat{r}_*}}
            + \frac{\sqrt{2\hat{r}_*}}{2}
        \right)
        + \frac{\sqrt{2\hat{r}_*}}{2} \frac{1}{1-\epsilon}
        - \frac{\sqrt{2\hat{r}_*}}{2}
        - \frac{\hat{r}}{\sqrt{2\hat{r}_*}} \epsilon \\
    & = \left(
            \frac{\hat{r}}{\sqrt{2\hat{r}_*}}
            + \frac{\sqrt{2\hat{r}_*}}{2}
        \right)
        + \frac{\sqrt{2\hat{r}_*}}{2}
            \left(
                \frac{1}{1-\epsilon}
                -  1
            \right)
        - \frac{\hat{r}}{\sqrt{2\hat{r}_*}} \epsilon \\
    & = \left(
            \frac{\hat{r}}{\sqrt{2\hat{r}_*}}
            + \frac{\sqrt{2\hat{r}_*}}{2}
        \right)
        + \frac{\sqrt{2\hat{r}_*}}{2}
                \frac{\epsilon}{1-\epsilon}
        - \frac{\hat{r}}{\sqrt{2\hat{r}_*}} \epsilon \\
    & = \left(
            \frac{\hat{r}}{\sqrt{2\hat{r}_*}}
            + \frac{\sqrt{2\hat{r}_*}}{2}
        \right)
        + \epsilon \frac{\sqrt{2\hat{r}_*}}{2}
            \left(
                \frac{1}{1-\epsilon}
                - \frac{\hat{r}}{\hat{r}_*}
            \right). \\
\end{split}
\end{equation}
Hence
\begin{equation}
\begin{split}
  \beta_1^\epsilon - \sqrt{2\hat{r}_*}
    & = \left(
            \frac{\hat{r}}{\sqrt{2\hat{r}_*}}
            + \frac{\sqrt{2\hat{r}_*}}{2}
        \right)
        - \sqrt{2\hat{r}_*}
        + \epsilon \frac{\sqrt{2\hat{r}_*}}{2}
            \left(
                \frac{1}{1-\epsilon}
                - \frac{\hat{r}}{\hat{r}_*}
            \right) \\
    & = \left(
            \frac{\hat{r}}{\sqrt{2\hat{r}_*}}
            - \frac{\sqrt{2\hat{r}_*}}{2}
        \right)
        + \epsilon \frac{\sqrt{2\hat{r}_*}}{2}
            \left(
                \frac{1}{1-\epsilon}
                - \frac{\hat{r}}{\hat{r}_*}
            \right) \\
\end{split}
\end{equation}
By \eqref{eq:lem:d1-d2-at-infty-2} and \eqref{eq:lem:d1-d2-at-infty-3}, we obtain
\begin{equation}
  \beta_1^\epsilon - \sqrt{2\hat{r}_*} \xrightarrow{\;T \to \infty\;} \const(\epsilon) >0.
\end{equation}
Combining this with \eqref{eq:lem:d1-d2-at-infty-1} gives
\begin{equation}
  \left[
            \frac{\ln(S/K)}{\vhi T}
            + \beta_1^\epsilon - \sqrt{2\hat{r}_*}
  \right]
  \xrightarrow{\;T \to \infty\;} \const(\epsilon) >0.
\end{equation}
This then gives
\begin{equation}
  \frac{1}{2} d_1^2 (K,T;\vhi) - \hat{r}_*T \xrightarrow{\;T \to \infty\;} \infty.
\end{equation}
The desired limit in (3) follows from this. And the proof is hence complete.
\end{proof}

\begin{lemma}
Assume the MMM. Then for any $v \in (0,\infty)$,
\begin{equation}
    d_1(K,T;v) \xrightarrow{\;T \to \infty\;} \infty,
\end{equation}
and
\begin{equation} \label{eq:d2-large-time-limits}
\left\{
  \begin{array}{ll}
    d_2(K,T;v) \xrightarrow{\;T \to \infty\;} - \infty, & \hbox{if $v>\sqrt{2\hat{r}_*}$;} \\
    d_2(K,T;v) \xrightarrow{\;T \to \infty\;} 0, & \hbox{if $v=\sqrt{2\hat{r}_*}$;} \\
    d_2(K,T;v) \xrightarrow{\;T \to \infty\;} \infty, & \hbox{if $v<\sqrt{2\hat{r}_*}$.}
  \end{array}
\right.
\end{equation}
\end{lemma}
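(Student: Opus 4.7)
The plan is to read off both limits directly from the expansions
\begin{equation*}
  d_1(K,T;v) = \frac{\ln(S/K)}{v\sqrt{T}} + \left(\frac{\hat{r}}{v} + \frac{v}{2}\right)\sqrt{T}, \qquad
  d_2(K,T;v) = \frac{\ln(S/K)}{v\sqrt{T}} + \left(\frac{\hat{r}}{v} - \frac{v}{2}\right)\sqrt{T},
\end{equation*}
which follow from the definition of $d_1,d_2$ in \eqref{eq:bs-N-d}. For any fixed $v>0$, the first summand vanishes as $T\to\infty$, so everything depends on the coefficient of $\sqrt{T}$.

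For $d_1$: by Lemma~\ref{lem:rhat-at-infty}(1), $\hat{r}\to\hat{r}_*>0$, hence $\hat{r}/v+v/2\to\hat{r}_*/v+v/2>0$. Multiplying by $\sqrt{T}\to\infty$ gives $d_1(K,T;v)\to\infty$. For $d_2$ in the two non-degenerate cases, the coefficient tends to $\hat{r}_*/v-v/2=(2\hat{r}_*-v^2)/(2v)$, which is strictly positive when $v<\sqrt{2\hat{r}_*}$ and strictly negative when $v>\sqrt{2\hat{r}_*}$, yielding $d_2\to+\infty$ and $d_2\to-\infty$ respectively.

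The only case requiring real work is the critical one $v=\sqrt{2\hat{r}_*}$, where the coefficient limit is $0$ and the product $0\cdot\sqrt{T}$ is indeterminate. Here I rewrite
\begin{equation*}
  \frac{\hat{r}}{v}-\frac{v}{2}
    = \frac{\hat{r}}{\sqrt{2\hat{r}_*}}-\frac{\sqrt{2\hat{r}_*}}{2}
    = \frac{\hat{r}-\hat{r}_*}{\sqrt{2\hat{r}_*}},
\end{equation*}
so that
\begin{equation*}
  d_2(K,T;\sqrt{2\hat{r}_*})
    = \frac{\ln(S/K)}{\sqrt{2\hat{r}_*}\,\sqrt{T}}
      + \frac{(\hat{r}-\hat{r}_*)T}{\sqrt{2\hat{r}_*}\,\sqrt{T}}.
\end{equation*}
By Lemma~\ref{lem:rhat-at-infty}(5), $(\hat{r}-\hat{r}_*)T\to -\ln(2\eta S/\alpha)$, a finite constant, so the second summand is $O(1/\sqrt{T})\to 0$. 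The first summand also tends to zero, yielding $d_2\to 0$. This finer asymptotic supplied by Lemma~\ref{lem:rhat-at-infty}(5) is the only non-routine ingredient; the rest is an algebraic reorganisation of the Black--Scholes arguments.
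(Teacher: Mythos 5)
Your proof is correct, and it follows the same basic decomposition as the paper's: write $d_1$ and $d_2$ as $\ln(S/K)/(v\sqrt{T})$ plus a coefficient times $\sqrt{T}$, then track the sign of the limiting coefficient via $\hat{r}\to\hat{r}_*$ (Lemma \ref{lem:rhat-at-infty}(1)). The one place you go beyond the paper is the critical case $v=\sqrt{2\hat{r}_*}$: the paper's proof merely records that $\hat{r}-v^2/2\to 0$ and asserts the conclusion, which as stated leaves the indeterminate product $0\cdot\sqrt{T}$ unresolved. Your rewriting of the second summand as $(\hat{r}-\hat{r}_*)T/(\sqrt{2\hat{r}_*}\sqrt{T})$ and the appeal to Lemma \ref{lem:rhat-at-infty}(5), which gives $(\hat{r}-\hat{r}_*)T\to-\ln(2\eta S/\alpha)$ (a finite constant), is exactly the right way to close that gap; it shows the critical-case term is $\O(T^{-1/2})$ and hence that $d_2\to 0$. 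So your argument is not merely equivalent to the paper's but strictly more complete at the only delicate point.
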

\begin{proof}
Let $m=\ln(S/K)$. Recall from \eqref{eq:bs-N-d} that
\begin{equation}
\left\{
\begin{split}
    d_1(K,T;v)
        & = \frac{m + (\hat{r} + v^2/2)T}{v\sqrt{T}}
          = \frac{m}{v\sqrt{T}} + \frac{\hat{r} + v^2/2}{v}\sqrt{T}, \\
    d_2(K,T;v)
        & = \frac{m + (\hat{r} - v^2/2)T}{v\sqrt{T}}
          = \frac{m}{v\sqrt{T}} + \frac{\hat{r} - v^2/2}{v}\sqrt{T}. \\
\end{split}
\right.
\end{equation}
%and
%\begin{equation}
%  d_1 = d_2 + v\sqrt{T}.
%\end{equation}
Since $\hat{r}\xrightarrow{\;T \to \infty\;} \hat{r}_* = r+\eta$, we have $d_1(K,T;v) \xrightarrow{\;T \to \infty\;} \infty $ for any $v \in (0,\infty)$. The limits for $d_2$ can be derived from the limits
\begin{equation}
\left\{
  \begin{array}{ll}
    \hat{r}-v^2/2 \xrightarrow{\;T \to \infty\;} \const_1 < 0, & \hbox{if $v>\sqrt{2\hat{r}_*}$;} \\
    \hat{r}-v^2/2 \xrightarrow{\;T \to \infty\;} 0, & \hbox{if $v=\sqrt{2\hat{r}_*}$;} \\
    \hat{r}-v^2/2 \xrightarrow{\;T \to \infty\;} \const_2 >0, & \hbox{if $v<\sqrt{2\hat{r}_*}$.}
  \end{array}
\right.
\end{equation}
And the proof is complete.
\end{proof}

\begin{lemma}
Assume the MMM. Then
\begin{equation} \label{eq:exp-short-expansion}
    \e^{-[\hat{r} - (r+\eta)]T}
    = \frac{2\eta S}{\alpha} + \frac{e_1}{\varphi} + \frac{e_2}{\varphi^2}
        + \O(\varphi^{-3}),
\end{equation}
where
\begin{equation}
  e_1 = \frac{S}{2} - \frac{\eta S^2}{2\alpha}, \qquad
  e_2 = \frac{\eta S^3}{12 \alpha} - \frac{S^2}{8}.
\end{equation}
\end{lemma}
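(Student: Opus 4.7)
The plan is to rewrite the exponent in closed form using the two defining identities $\hat{r} = r - T^{-1}\ln\theta$ (from Lemma~\ref{lem:rhat-formula}, with $\theta = 1-\e^{-x/2}$) and $\varphi = \frac{\alpha}{4\eta}(\e^{\eta T}-1)$, which gives $\eta T = \ln(1 + 4\eta\varphi/\alpha)$. Multiplying $\hat r - (r+\eta)$ by $T$ therefore collapses to
\begin{equation*}
    [\hat{r} - (r+\eta)]T
    = -\ln\theta - \eta T
    = -\ln\!\left[\theta\left(1 + \tfrac{4\eta\varphi}{\alpha}\right)\right],
\end{equation*}
so that
\begin{equation*}
    \e^{-[\hat{r} - (r+\eta)]T}
    = \theta\left(1 + \tfrac{4\eta\varphi}{\alpha}\right)
    = (1-\e^{-x/2})\left(1 + \tfrac{4\eta\varphi}{\alpha}\right).
\end{equation*}
This exact identity turns the asymptotic claim into a purely algebraic Taylor expansion, eliminating all need for L'Hopital-style computations.

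Next I would expand the first factor as a power series in $x = S/\varphi$, which is small since $\varphi \to \infty$ by Lemma~\ref{lem:varphi-x-y-at-infty}(1). Using $1 - \e^{-x/2} = \frac{x}{2} - \frac{x^2}{8} + \frac{x^3}{48} + \O(x^4)$ and substituting $x = S/\varphi$ gives
\begin{equation*}
    1 - \e^{-S/(2\varphi)}
    = \frac{S}{2\varphi} - \frac{S^2}{8\varphi^2} + \frac{S^3}{48\varphi^3} + \O(\varphi^{-4}).
\end{equation*}
Multiplying this by $1 + \tfrac{4\eta\varphi}{\alpha}$ and collecting by powers of $\varphi^{-1}$, the leading term $\frac{4\eta\varphi}{\alpha}\cdot\frac{S}{2\varphi} = \frac{2\eta S}{\alpha}$ appears, the $\varphi^{-1}$ coefficient assembles as $\frac{S}{2} - \frac{\eta S^2}{2\alpha} = e_1$, and the $\varphi^{-2}$ coefficient as $-\frac{S^2}{8} + \frac{\eta S^3}{12\alpha} = e_2$, with all remaining contributions absorbed into $\O(\varphi^{-3})$. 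This yields exactly \eqref{eq:exp-short-expansion}.

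There is no real obstacle here; the only mildly delicate point is bookkeeping the $\O(\varphi^{-3})$ remainder, which requires noting that the next term in the Taylor expansion of $1 - \e^{-x/2}$ contributes at order $\varphi^{-4}\cdot\varphi = \varphi^{-3}$ after the multiplication. This is easily controlled by the standard estimate $|1 - \e^{-x/2} - \frac{x}{2} + \frac{x^2}{8} - \frac{x^3}{48}| \le C x^4$ for $x$ in a bounded neighborhood of $0$, which applies since $x = S/\varphi$ is eventually as small as desired.
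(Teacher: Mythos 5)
Your proposal is correct and follows essentially the same route as the paper: both reduce the claim to the exact identity $\e^{-[\hat{r}-(r+\eta)]T} = (1-\e^{-x/2})\bigl(\frac{4\eta\varphi}{\alpha}+1\bigr)$ and then Taylor-expand $1-\e^{-x/2}$ in $x=S/\varphi$ before collecting powers of $\varphi^{-1}$. The coefficients $e_1$ and $e_2$ come out exactly as in the paper, and your remark on controlling the $\O(\varphi^{-3})$ remainder is a harmless refinement of the paper's implicit bookkeeping.
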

\begin{proof}
It is well known that as $z \to 0$,
\begin{equation}
\begin{split}
  \e^{-z}
    & = 1 -z + \frac{z^2}{2!} - \frac{z^3}{3!}
        + \frac{z^4}{4!} - \frac{z^5}{5!} + \frac{z^6}{6!} + \O(z^7) \\
    & = 1 -z + \frac{z^2}{2} - \frac{z^3}{6}
        + \frac{z^4}{24} - \frac{z^5}{120} + \frac{z^6}{720} + \O(z^7). \\
\end{split}
\end{equation}
This gives
\begin{equation}
    1- \e^{-z}
     = z - \frac{z^2}{2} + \frac{z^3}{6}
         - \frac{z^4}{24} + \frac{z^5}{120} - \frac{z^6}{720} + \O(z^7).
\end{equation}
By Lemma \ref{lem:varphi-x-y-at-infty} (2), $x \xrightarrow{\;T\to \infty\;} 0$. This implies that
\begin{equation}
    1- \e^{-x/2}
     = \frac{x}{2} - \frac{x^2}{8} + \frac{x^3}{48}
         - \frac{x^4}{384} + \frac{x^5}{3840} - \frac{x^6}{46080} + \O(x^7).
\end{equation}
By definition, $\varphi = \frac{\alpha}{4\eta}(\e^{\eta T} -1)$; so $ \e^{\eta T} = \frac{4\eta \varphi}{\alpha} + 1$.
This then gives
\begin{equation}
  \e^{-[\hat{r} - (r+\eta)]T}
    = \e^{\eta T}(1- \e^{-x/2})
    = (1- \e^{-x/2})\left(\frac{4\eta\varphi}{\alpha} + 1\right).
\end{equation}
Noting that $x = S/\varphi$, we get, as $T \to \infty$,
\begin{equation}
\begin{split}
  & \e^{-[\hat{r} - (r+\eta)]T} \\
  &  = \left[
        \frac{S}{2\varphi}
        - \frac{S^2}{8\varphi^2}
        + \frac{S^3}{48\varphi^3}
        - \frac{S^4}{384\varphi^4}
        + \frac{S^5}{3840\varphi^5}
        - \frac{S^6}{46080\varphi^6}
        + \O(\varphi^{-7})
      \right]
      \left(\frac{4\eta\varphi}{\alpha}+1\right).
\end{split}
\end{equation}
From this the desired expansion then follows.
\end{proof}

Recall from \eqref{eq:bs-N-d} the definitions for $d_1$ and $d_2$. Then we have the following lemma.
\begin{lemma}
Under the MMM, the following identity holds:
\begin{equation} \label{eq:d1-squared}
  \frac{1}{2} d_1^2(K,T;v) - (r+\eta)T
    =  \frac{1}{2}d_2^2(K,T;v) + \ln(S/K) + (\hat{r} - (r+\eta))T.
\end{equation}
\end{lemma}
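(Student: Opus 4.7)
The identity is a direct algebraic consequence of the definitions of $d_1$ and $d_2$ in the MMM Black--Scholes formula, so the plan is essentially just to carry out the expansion cleanly using the factorization $d_1^2 - d_2^2 = (d_1-d_2)(d_1+d_2)$.

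First, I would recall from the definitions in the MMM setting (the expressions following \eqref{eq:bs-2}) that
\begin{equation*}
d_1(K,T;v) = \frac{\ln(S/K) + (\hat{r}+v^2/2)T}{v\sqrt{T}}, \qquad d_2(K,T;v) = d_1(K,T;v) - v\sqrt{T}.
\end{equation*}
From these two expressions I immediately read off the two identities that drive the proof: $d_1 - d_2 = v\sqrt{T}$ and $d_1 + d_2 = \frac{2\ln(S/K) + 2\hat{r}T}{v\sqrt{T}}$.

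Second, I would multiply these together to get
\begin{equation*}
d_1^2 - d_2^2 = (d_1-d_2)(d_1+d_2) = v\sqrt{T}\cdot \frac{2\ln(S/K)+2\hat{r}T}{v\sqrt{T}} = 2\ln(S/K) + 2\hat{r}T,
\end{equation*}
so that $\tfrac{1}{2}d_1^2 = \tfrac{1}{2}d_2^2 + \ln(S/K) + \hat{r}T$.

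Finally, I would subtract $(r+\eta)T$ from both sides and rewrite $\hat{r}T - (r+\eta)T$ as $(\hat{r}-(r+\eta))T$ to obtain \eqref{eq:d1-squared}. There is no real obstacle here; the statement is purely algebraic identity bookkeeping, and its role is presumably to replace the Gaussian exponent $-\tfrac{1}{2}d_1^2$ in $\tilde N(d_1)$ by an expression involving $-\tfrac{1}{2}d_2^2$ plus terms that are controlled by the forward-price correction $(\hat{r}-(r+\eta))T$ studied in Lemma \ref{lem:rhat-at-infty}.
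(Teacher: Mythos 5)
Your proof is correct and is essentially the paper's own argument: the paper expands $d_1^2=(d_2+v\sqrt{T})^2$ and uses $v\sqrt{T}\,d_2=\ln(S/K)+(\hat{r}-v^2/2)T$, which is the same elementary computation as your difference-of-squares factorization, just organized slightly differently. Both yield $\tfrac{1}{2}d_1^2=\tfrac{1}{2}d_2^2+\ln(S/K)+\hat{r}T$ and then subtract $(r+\eta)T$.
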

\begin{proof}
We shall suppress the argument in the proof. Since $d_1 = d_2 + v\sqrt{T}$, we have
\begin{equation}
\begin{split}
  d_1^2
    & = (d_2+v\sqrt{T})^2
      = d_2^2 + 2 d_2 v\sqrt{T} + v^2 T
      = d_2^2 + 2\ln(S/K) + 2\hat{r}T,
\end{split}
\end{equation}
where the last equality follows from the identity
\begin{equation}
  v\sqrt{T}d_2
    = v\sqrt{T}
        \frac{\ln(S/K) + (\hat{r} - v^2/2)T}{v\sqrt{T}}
    = \ln(S/K) + (\hat{r} - v^2/2)T.
\end{equation}
This then implies that
\begin{equation}
\begin{split}
  \frac{1}{2}d_1^2 - (r+\eta)T
    & = \frac{1}{2}d_2^2 + \ln(S/K) + (\hat{r} T - (r+\eta))T. \\
\end{split}
\end{equation}
And the proof is complete.
\end{proof}

%%%%%%%%================================================================================
%%%%%%%%================================================================================
%%%%%%%%=================         section
%%%%%%%%================================================================================
%%%%%%%%================================================================================

%\vspace{2cm}

\section{Appendix D: Proof of the lemmas for the large time limit} \label{sec:appendix-d}

In this section we present the proof of the lemmas needed for the large time limit in Section \ref{sec:proof-large-iv}.

\subsection{Proof of Lemma \ref{lem:iv-upper-bound-ratio-1}} \label{sec:proof:lem:iv-upper-bound-ratio-1}

%\begin{proposition}
%Under the MMM, we have
%\begin{equation} \label{eq:proof-chi-I-0}
%  \lim_{T \to \infty}
%    \frac{s \chi^2}{K \e^{-rT} \tilde{Q}} = 0.
%\end{equation}
%\end{proposition}
\begin{proof}[Proof of Lemma \ref{lem:iv-upper-bound-ratio-1}]
It suffices to show that
\begin{equation}
  \lim_{T \to \infty}
        \frac{\chi^2(y;4,x)}{\e^{-rT}\tilde{\chi}^2(y;0,x)} = 0.
\end{equation}
This limit is valid since
\begin{equation}
\begin{split}
  \frac{\chi^2(y;4,x)}{\e^{-rT}\tilde{\chi}^2(y;0,x)}
       & =  \frac{\chi^2(y;4,x)/p(y;4,x)}{\e^{-rT}\tilde{\chi}^2(y;0,x)/p(y;4,x)}
         \xrightarrow{\;T \to \infty\;} \frac{0}{2S/K} \\
\end{split}
\end{equation}
by Lemma \ref{lem:chisq-density-at-infty} (7) and (9).
\end{proof}

\subsection{Proof of Lemma \ref{lem:iv-upper-bound-ratio-2}} \label{sec:proof:lem:iv-upper-bound-ratio-2}

\begin{proof}[Proof of Lemma \ref{lem:iv-upper-bound-ratio-2}]
Our aim is to check if
\begin{equation} %\label{eq:lem:Nd1-I-0}
  \lim_{T \to \infty}
    \frac{\tilde{N}(d_1(K,T;\vhi))}{ \e^{-rT} \tilde{\chi}^2(y;0,x)} = 0.
\end{equation}
We can rewrite the ratio as
\begin{equation}
  \frac{\tilde{N}(d_1(K,T;\vhi))}{ \e^{-rT} \tilde{\chi}^2(y;0,x)}
  = \frac{\tilde{N}(d_1(K,T;\vhi))/p(y;4,x)}{ \e^{-rT} \tilde{\chi}^2(y;0,x)/p(y;4,x)}=\frac{A_2}{B_2}.
\end{equation}
Lemma \ref{lem:chisq-density-at-infty} (7) gives $B_2 \xrightarrow{\;T \to \infty\;} 2S/K$.
By Lemma \ref{lem:d1-d2-at-infty} (1), $d_1(K,T;\vhi) \xrightarrow{\;T \to \infty\;} \infty$. Therefore, as $T \to \infty$,
\begin{equation}
\begin{split}
   A_2 % = \frac{\tilde{N}(d_1(K,T;\vhi))}{p(y;4,x)}
    & \sim  \frac{\dfrac{n(d_1(K,T;\vhi))}{d_1(K,T;\vhi)}
                }{
                    \dfrac{1}{2}\left(\dfrac{y}{x}\right)^{1/2}
                    \exp\left(- \dfrac{x+y}{2}\right)
                    I_1\left(\sqrt{xy}\right)
                } \qquad
                    \mbox{by \eqref{eq:c-2} and \cite[Formulas 7.1.23, 26.2.12, 9.6.6]{abramowitz-stegun-70}} \\
    & = \frac{\dfrac{1}{\sqrt{2\pi}}
                 \exp\left(-\dfrac{d_1^2(K,T;\vhi)}{2}\right)
                }{
                    \dfrac{1}{2}\left(\dfrac{y}{x}\right)^{1/2}
                    \exp\left(- \dfrac{x+y}{2}\right)
                    I_1\left(\sqrt{xy}\right) d_1(K,T;\vhi)
                } \\
    & \sim \frac{\dfrac{1}{\sqrt{2\pi}}
                 \exp\left(-\dfrac{d_1^2(K,T;\vhi)}{2}\right)
                }{
                    \dfrac{1}{2}\left(\dfrac{y}{x}\right)^{1/2}
                    \dfrac{\sqrt{xy}}{2} d_1(K,T;\vhi)
                }
                    \qquad
                    \begin{array}{l}
                      \mbox{as $x,y \to 0$ by Lemma \ref{lem:varphi-x-y-at-infty} (2),(3)}; \\
                      \mbox{and $I_1(\sqrt{xy}) \sim \sqrt{xy}/2$ by \cite[Formula 9.6.10]{abramowitz-stegun-70}}
                    \end{array}
                \\
    & = \frac{\dfrac{4}{\sqrt{2\pi}}
                 \exp\left(-\dfrac{d_1^2(K,T;\vhi)}{2}\right)
                }{
                    y d_1(K,T;\vhi)
                } \\
    & = \frac{\dfrac{4 \varphi \e^{-rT}}{\sqrt{2\pi}}
                 \exp\left(-d_1^2(K,T;\vhi)/2\right)
                }{
                    K d_1(K,T;\vhi)
                }
                \qquad \mbox{by the definition of $y$, see \eqref{eq:c-2}.}   \\
\end{split}
\end{equation}
Recalling from \eqref{eq:c-2} that $\varphi = \frac{\alpha}{4\eta}\left(\e^{\eta T} -1\right)$, we get, as $T \to \infty$,
\begin{equation} \label{eq:lem-proof-nd1-1}
\begin{split}
    A_2
%        & \sim
%             \frac{4\varphi\e^{rT}}{\sqrt{2\pi}K d_1(K,T;\vhi)}
%                \exp\left(-d_1^2(K,T;\vhi)/2\right) \\
        & \sim
             \frac{\alpha}{\sqrt{2\pi}\eta K d_1(K,T;\vhi)}
                \left(\e^{\eta T} -1\right)
                \e^{rT}
                \exp\left(-d_1^2(K,T;\vhi)/2\right) \\
        & =
             \frac{\alpha}{\sqrt{2\pi}\eta K d_1(K,T;\vhi)}
                \left(
                    \frac{\e^{\eta T} -1}{\e^{\eta T}}
                \right)
                \exp\left[(r+\eta)T\right]
                \exp\left(-d_1^2(K,T;\vhi)/2\right) \\
        & =
             \frac{\alpha}{\sqrt{2\pi}\eta K d_1(K,T;\vhi)}
                \left(
                    1 - \e^{-\eta T}
                \right)
                \exp\left[-\left(\frac{d_1^2(K,T;\vhi)}{2} - \hat{r}_*T\right)\right]. \\
%        & \xrightarrow{\;T \to \infty\;} 0,
%        & \le
%            - \frac{\alpha}{\sqrt{2\pi}\eta K d_1(K,T;\vhi)}
%                \left(
%                    1 - \e^{-\eta T}
%                \right)
%                \exp\left(-\epsilon_1 T \right)
%                \quad \epsilon_1 >0, \\
\end{split}
\end{equation}
By Lemma \ref{lem:d1-d2-at-infty} (1) and (3), $d_1(K,T;\vhi) \xrightarrow{\;T \to \infty\;} \infty$ and
\begin{equation}
  \exp\left[-\left(\frac{d_1^2(K,T;\vhi)}{2} - \hat{r}_*T\right)\right]
  \xrightarrow{\;T \to \infty\;} 0.
\end{equation}
Hence $A_2 \xrightarrow{\;T \to \infty\;} 0$. In turn, $A_2/B_2 \xrightarrow{\;T \to \infty\;} 0$, and the proof is complete.
\end{proof}

\subsection{Proof of Lemma \ref{lem:iv-upper-bound-ratio-3}} \label{sec:proof:lem:iv-upper-bound-ratio-3}

\begin{proof}[Proof of Lemma \ref{lem:iv-upper-bound-ratio-3}]
We shall verify that
\begin{equation} %\label{eq:prop:Nd2-I}
  \lim_{T \to \infty}
    \frac{\e^{-\hat{r}T}N\bigl(d_2(K,T;\vhi)\bigr)}{\e^{-rT}\tilde{\chi}^2(y;0,x)} =
    0.
\end{equation}
Rewrite the ratio as
\begin{equation}
  \frac{\e^{-\hat{r}T}N\bigl(d_2(K,T;\vhi)\bigr)}{\e^{-rT}\tilde{\chi}^2(y;0,x)}
  = \frac{\e^{-\hat{r}T}N\bigl(d_2(K,T;\vhi)\bigr)/p(y;4,x)}{\e^{-rT}\tilde{\chi}^2(y;0,x)/p(y;4,x)}
  = \frac{A_3}{B_3}.
\end{equation}
As before, Lemma \ref{lem:chisq-density-at-infty} (7) gives $B_3 \xrightarrow{\;T \to \infty\;} 2S/K$; and Lemma \ref{lem:d1-d2-at-infty} (2) gives $d_2(K,T;\vhi)\xrightarrow{\;T \to \infty\;} - \infty$. Hence $ N(d_2(K,T;\vhi))\xrightarrow{\;T \to \infty\;} 0$. Further, as $T \to \infty$,
\begin{equation}
\begin{split}
    \frac{\e^{-\hat{r}T}}{p(y;4,x)}
        & \sim \frac{\e^{-\hat{r}T}}{y/4}  \qquad \mbox{by Lemma \ref{lem:chisq-density-at-infty} (1)} \\
        & = \frac{4 \e^{-\hat{r}T}}{K\e^{-rT}/\varphi }
            \qquad \mbox{by the definition of $y$, see \eqref{eq:c-2}} \\
        & = \frac{4 \varphi}{K} \e^{-(\hat{r}-r)T}
            \qquad \mbox{by the definition of $\varphi$, see \eqref{eq:c-2}} \\
        & = \frac{\alpha}{K\eta}(\e^{\eta T} -1) \e^{-(\hat{r}-r)T} \\
        & = \frac{\alpha}{K\eta}(1- \e^{-\eta T}) \e^{-[\hat{r}-(r+\eta)]T} \\
        & = \frac{\alpha}{K\eta}(1- \e^{-\eta T}) \e^{-(\hat{r}-\hat{r}_*)T}
            \qquad \mbox{by the definition of $\hat{r}_*$, see \eqref{eq:rhat-star-definition} } \\
        & \xrightarrow{\;T \to \infty\;}
            \frac{\alpha}{K\eta}
            \exp \left(\ln \frac{2\eta S}{\alpha}\right) = \frac{2S}{K},
            \quad \mbox{by Lemma \ref{lem:rhat-at-infty} (5).}
\end{split}
\end{equation}
Hence $A_3 \xrightarrow{\;T \to \infty\;} 0$. As a result, we have $A_3/B_3 \xrightarrow{\;T \to \infty\;} 0$. The proof is thus complete.
\end{proof}

\subsection{Proof of Lemma \ref{lem:cbs-call-R}} \label{subsec:proof:lem:cbs-call-R}

\begin{proof}[Proof of Lemma \ref{lem:cbs-call-R}]
Recall from \eqref{eq:c-1} and \eqref{eq:bs-2} that the generic Black--Scholes call price with volatility $v \in [0,\infty]$ and the MMM call price are respectively given by
\begin{equation}
\left\{
\begin{split}
  \cbs(K,T;v)
    & = S N(d_1(K,T;v)) - K \e^{-\hat{r} T} N(d_2(K,T;v)),  \quad v \in [0,\infty], \\
  \call(K,T)
    & = S \tilde{\chi}^2(y;4,x) - K \e^{-r T} \tilde{\chi}^2(y;0,x).
\end{split}
\right.
\end{equation}
By the definition of the complementary function $\tilde{N}(d) = 1- N(d)$, we can rewrite the Black--Scholes price as
\begin{equation}
\begin{split}
    \cbs(K,T;v)
    & = S [1-\tilde{N}(d_1)]
         - K \e^{-\hat{r} T} [1-\tilde{N}(d_2)] \\
    & = S - S \tilde{N}(d_1)
         - K \e^{-\hat{r} T}  + K \e^{-\hat{r} T} \tilde{N}(d_2). \\
\end{split}
\end{equation}
Similarly, we can rewrite the MMM call price $\call$ as
\begin{equation}
\begin{split}
  \call(K,T)
    & = S [1- \chi^2(y;4,x)] - K \e^{-r T} [1-\chi^2(y;0,x)] \\
    & = S - S \chi^2(y;4,x) - K \e^{-r T} + K \e^{-r T}\chi^2(y;0,x). \\
\end{split}
\end{equation}
Taking into consideration of the definition of $\rcbs$ and $\rcall$, a rearrangement of the terms above then gives the desired expressions for $\cbs$ and $\call$.
\end{proof}

\subsection{Proof of Lemma \ref{lem:chisq-bounds}} \label{subsec:proof:lem:chisq-bounds}

\begin{proof}[Proof of Lemma \ref{lem:chisq-bounds}]
\textbf{Step 1: Proof of \eqref{eq:lem:chisq-bounds-1}.} We shall bound $\chi^2(y;4,x)$ first. To obtain an upper bound for $\chi^2(y;4,x)$ we apply the definition \eqref{eq:c-2} to get
\begin{equation}
\begin{split}
  \chi^2(y;4,x)
%    & = \int_0^y p(z;4,x) \, \d z \\
    & = \int_0^y
            \frac{1}{2} \left(\frac{z}{x}\right)^{1/2}
            \exp\left(-\frac{x+z}{2}\right)I_1\Bigl(\sqrt{xz}\Bigr) \, \d z \\
    & \le \frac{1}{2} \e^{-x/2}
        \int_0^y
             \left(\frac{z}{x}\right)^{1/2}
            I_1\Bigl(\sqrt{xz}\Bigr) \, \d z \\
    & = \frac{1}{2} \e^{-x/2} \left[2 \frac{y}{x} I_2(\sqrt{xy})\right] \\
    & = \e^{-x/2}\frac{y}{x}I_2(\sqrt{xy}),
\end{split}
\end{equation}
where the penultimate equality is valid because by the change of the variables $u = \sqrt{xz}$ and explicit integration:
\begin{equation}
\begin{split}
    \int_0^y
             \left(\frac{z}{x}\right)^{1/2}
            I_1\Bigl(\sqrt{xz}\Bigr) \, \d z
    & = \int_0^{\sqrt{xy}}
        \frac{1}{2} \sqrt{\frac{u^2/x}{x}} I_1(u) \frac{2u}{x} \,\d u
      = \int_0^{\sqrt{xy}}
        \frac{u^2}{x^2} I_1(u)\,\d u \\
    & = \frac{1}{x^2}
        \int_0^{\sqrt{xy}}
        u^2 I_1(u)\,\d u \\
    & = \frac{1}{x^2} xy I_2(\sqrt{xy})
        \qquad  \mbox{by \cite[Formula 9.6.28]{abramowitz-stegun-70}}    \\
    & = \frac{y}{x}I_2(\sqrt{xy}).
\end{split}
\end{equation}
Then a lower bound for $\chi^2(y;4,x)$ can be obtained by noting that for $0\le z \le y$, $\e^{-(x+z)/2} \ge \e^{-(x+y)/2}$ and that
\begin{equation}
\begin{split}
  \chi^2(y;4,x)
%    & = \int_0^y p(z;4,x) \, \d z \\
    & = \int_0^y
            \frac{1}{2} \left(\frac{z}{x}\right)^{1/2}
            \exp\left(-\frac{x+z}{2}\right)I_1\Bigl(\sqrt{xz}\Bigr) \, \d z \\
    & \ge \frac{1}{2} \e^{-(x+y)/2}
        \int_0^y
             \left(\frac{z}{x}\right)^{1/2}
            I_1\Bigl(\sqrt{xz}\Bigr) \, \d z \\
%    & = \frac{1}{2} \e^{-x/2} \times 2 \frac{y}{x} I_2(\sqrt{xy}) \\
    & =  \e^{-(x+y)/2}\frac{y}{x}I_2(\sqrt{xy}).
\end{split}
\end{equation}
Combining the lower and upper bounds for $\chi^2(y;4,x)$ then gives
\begin{equation} \label{eq:ncx2cdf-bound-large-time-1}
  \e^{-(x+y)/2} \frac{y}{x} I_2(\sqrt{xy})
  \le \chi^2(y;4,x)
  \le \e^{-x/2} \frac{y}{x} I_2(\sqrt{xy}) .
\end{equation}
We now bound $\chi^2(y;0,x)$. By \eqref{eq:chisq-zero-degree-1},
\begin{equation}
\begin{split}
  \chi^2(y;0,x)
    & = \e^{-x/2} + \int_0^y p(z;0,x) \, \d z
\end{split}
\end{equation}
By \eqref{eq:c-2} and the Bessel identity $I_{-1}(\cdot) = I_{1}(\cdot)$
\begin{equation}
\begin{split}
\int_0^y p(z;0,x) \, \d z
    & = \int_0^y \frac{1}{2}\left(\frac{z}{x}\right)^{(0-2)/4}
        \exp\left(-\frac{x+z}{2}\right) I_{(0-2)/2}(\sqrt{xz}) \, \d z \\
    & = \int_0^y \frac{1}{2}\left(\frac{z}{x}\right)^{-1/2}
        \exp\left(-\frac{x+z}{2}\right)I_{-1}(\sqrt{xz}) \, \d z \\
    & = \int_0^y \frac{1}{2}\sqrt{\frac{x}{z}}
        \exp\left(-\frac{x+z}{2}\right)I_{1}(\sqrt{xz}) \, \d z. \\
\end{split}
\end{equation}
This implies that
\begin{equation}
  \e^{-(x+y)/2}
  \int_0^y \frac{1}{2} \sqrt{\frac{x}{z}} I_{1}(\sqrt{xz}) \, \d z
  \le \int_0^y p(z;0,x) \, \d z
  \le
  \e^{-x/2}
  \int_0^y \frac{1}{2} \sqrt{\frac{x}{z}} I_{1}(\sqrt{xz}) \, \d z.
\end{equation}
Again, setting $u = \sqrt{xz}$ gives
\begin{equation}
\begin{split}
    \int_0^y \frac{1}{2} \sqrt{\frac{x}{z}} I_{1}(\sqrt{xz}) \, \d z
    & = \int_0^{\sqrt{xy}}
        \frac{1}{2} \sqrt{\frac{x}{u^2/x}} I_1(u) \frac{2u}{x} \,\d u
     = \int_0^{\sqrt{xy}}
         I_1(u)\,\d u \\
    & = I_0(\sqrt{xy}) -1.
\end{split}
\end{equation}
Hence,
\begin{equation}
  \e^{-(x+y)/2}
  [I_0(\sqrt{xy}) -1]
  \le \int_0^y p(z;0,x) \, \d z
  \le
  \e^{-x/2}
  [I_0(\sqrt{xy}) -1].
\end{equation}
Combining this with the definition of $\chi^2(y;0,x)$ gives
\begin{equation}
  \e^{-(x+y)/2}
  [I_0(\sqrt{xy}) -1] + \e^{-x/2}
  \le \chi^2(y;0,x)
  \le
  \e^{-x/2}
  [I_0(\sqrt{xy}) -1]+ \e^{-x/2}.
\end{equation}
Summing up the bounds for $\chi^2(y;4,x)$ and $\chi^2(y;0,x)$ gives, for all $T>0$,
\begin{equation}
\left\{
\begin{split}
& \e^{-(x+y)/2} \frac{y}{x} I_2(\sqrt{xy})
  \le \chi^2(y;4,x)
  \le \e^{-x/2} \frac{y}{x} I_2(\sqrt{xy}), \\
& \e^{-(x+y)/2}
  [I_0(\sqrt{xy}) -1] + \e^{-x/2}
  \le \chi^2(y;0,x)
  \le
  \e^{-x/2}
  [I_0(\sqrt{xy}) -1]+ \e^{-x/2}. \\
\end{split}
\right.
\end{equation}
By \cite[Formula 9.6.10]{abramowitz-stegun-70},
%we have the following series expansion for the modified Bessel function of the first kind with integer index $\nu$ and argument $z$:
%\begin{equation}
%  I_\nu(z) = \left(\frac{z}{2}\right)^\nu
%             \sum_{j = 0}^\infty
%             \frac{(z^2/4)^j}{j!\Gamma(\nu+j+1)}
%           = \left(\frac{z}{2}\right)^\nu
%             \sum_{j = 0}^\infty
%             \frac{(z^2/4)^j}{j!(\nu+j)!}.
%\end{equation}
%This series expansion shows that
\begin{equation}
  I_2(\sqrt{xy}) = \frac{xy}{4} \sum_{j=0}^\infty \frac{(xy/4)^j}{j!(2+j)!}
  \quad \mbox{ and } \quad
  I_0(\sqrt{xy}) = \sum_{j=0}^\infty \frac{(xy/4)^j}{(j!)^2}.
\end{equation}
Incorporating these series representations into the bounds gives the desired inequalities
\begin{equation}
\left\{
\begin{split}
& \e^{-(x+y)/2} \frac{y^2}{4} \sum_{j=0}^\infty \frac{(xy/4)^j}{j!(2+j)!}
  \le \chi^2(y;4,x)
  \le \e^{-x/2} \frac{y^2}{4} \sum_{j=0}^\infty \frac{(xy/4)^j}{j!(2+j)!}, \\
& \e^{-(x+y)/2}
  \sum_{j=1}^\infty \frac{(xy/4)^j}{(j!)^2} + \e^{-x/2}
  \le \chi^2(y;0,x)
  \le
  \e^{-x/2}
  \sum_{j=0}^\infty \frac{(xy/4)^j}{(j!)^2}+ \e^{-x/2}. \\
\end{split}
\right.
\end{equation}

\textbf{Step 2: Proof of \eqref{eq:lem:chisq-bounds-2}.} Under the MMM, we have $x,y>0$ for all $T>0$. So the lower bounds in \eqref{eq:lem:chisq-bounds-2} result from a truncation of the series at the seconder order term. Note that these lower bounds hold for all $T>0$. To obtain the large time upper bounds in \eqref{eq:lem:chisq-bounds-2}, we use the results of Lemma \ref{lem:varphi-x-y-at-infty} (2) and (3), which state respectively that $x,y \xrightarrow{\;T \to \infty\;} 0$. These results imply that
\begin{equation}
\begin{split}
  \sum_{j=2}^\infty \frac{(xy/4)^j}{j!(2+j)!}
    & \le \sum_{j=2}^\infty \frac{(xy/4)^j}{j!}
     = \sum_{l=0}^\infty \frac{(xy/4)^{l+2}}{(l+2)!}
     = \frac{(xy)^2}{4^2} \sum_{l=0}^\infty \frac{(xy/4)^{l+2}}{(l+2)!} \\
    & \le \frac{(xy)^2}{4^2} \sum_{l=0}^\infty \frac{(xy/4)^{l}}{l!}
      = \frac{(xy)^2}{4^2} \e^{xy/4} \\
    & \le (xy)^2 \qquad \mbox{for sufficiently large $T$;}
\end{split}
\end{equation}
and similarly
\begin{equation}
\begin{split}
  \sum_{j=2}^\infty \frac{(xy/4)^j}{(j!)^2}
    & \le \sum_{j=2}^\infty \frac{(xy/4)^j}{j!}
      \le (xy)^2 \qquad \mbox{for sufficiently large $T$.}
\end{split}
\end{equation}
And the proof is complete.
\end{proof}

\subsection{Proof of Lemma \ref{lem:R-asymptotics}} \label{subsec:proof:lem:R-asymptotics}

\begin{proof}[Proof of Lemma \ref{lem:R-asymptotics}]
We will prove \eqref{eq:lem:R-asymptotics-1} first. By the definition of $\rcall$, we have
\begin{equation}
\begin{split}
  \e^{(r+\eta)T}\rcall(K,T)
    & = \frac{K}{S}e^{-rT + (r+\eta)T} + \e^{(r+\eta)T}\chi^2(y;4,x) - \frac{K}{S} \e^{-rT+(r+\eta)T}\chi^2(y;0,x), \\
    & = \frac{K}{S}e^{\eta T} + \e^{(r+\eta)T}\chi^2(y;4,x) - \frac{K}{S} \e^{\eta T}\chi^2(y;0,x).  \\
\end{split}
\end{equation}
By \eqref{eq:lem:chisq-bounds-2}, we have, for large enough $T$,
\begin{equation} \label{eq:lem:R-asymptotics-1-upper-inequality}
\begin{split}
  \e^{(r+\eta)T}\rcall(K,T)
    & = \frac{K}{S}e^{\eta T} + \e^{(r+\eta)T}\chi^2(y;4,x) - \frac{K}{S} \e^{\eta T}\chi^2(y;0,x)  \\
    & \le
        \frac{K}{S}e^{\eta T}
        +  \e^{(r+\eta)T} \e^{-x/2}
            \frac{y^2}{4}
            \left[
                \frac{1}{2}
                + \frac{xy}{24}
                +  (xy)^2
            \right] \\
    & \quad
         -\frac{K}{S} \e^{\eta T}
            \left\{
                \e^{-x/2}
                + \e^{-(x+y)/2}
                    \left[
                        \frac{xy}{4}
                        + \frac{(xy)^2}{64}
                    \right]
            \right\} \\
    & =
        \frac{K}{S}e^{\eta T}
        - \frac{K}{S}e^{\eta T} \e^{-x/2}
        +  \e^{(r+\eta)T} \e^{-x/2}
            \frac{y^2}{4}
            \left[
                \frac{1}{2}
                + \frac{xy}{24}
                +  (xy)^2
            \right] \\
    & \quad
         -\frac{K}{S} \e^{\eta T-(x+y)/2}
            \left[
                \frac{xy}{4}
                + \frac{(xy)^2}{64}
            \right]  \\
    & =
        \frac{K}{S} e^{\eta T} (1-\e^{-x/2})
        +  \e^{(r+\eta)T} \e^{-x/2}
            \frac{y^2}{4}
            \left[
                \frac{1}{2}
                + \frac{xy}{24}
                +  (xy)^2
            \right] \\
    & \quad
         -\frac{K}{S} \e^{\eta T-(x+y)/2}
            \left[
                \frac{xy}{4}
                + \frac{(xy)^2}{64}
            \right] \\
    & =
        \frac{K}{S} e^{-[\hat{r}-(r+\eta)]T}
        +  \rcallu(K,T), \\
\end{split}
\end{equation}
where in the last equality we have used the identity
\begin{equation}
  \e^{-[\hat{r}-(r+\eta)T]} = \e^{\eta T} (1-\e^{-x/2}).
\end{equation}
This proves the second inequality in \eqref{eq:lem:R-asymptotics-1}. To prove the first inequality, we apply \eqref{eq:lem:chisq-bounds-2} again to get, for sufficiently large $T$,
\begin{equation} \label{eq:lem:R-asymptotics-1-lower-inequality}
\begin{split}
  \e^{(r+\eta)T}\rcall(K,T)
    & = \frac{K}{S}e^{\eta T} + \e^{(r+\eta)T}\chi^2(y;4,x) - \frac{K}{S} \e^{\eta T}\chi^2(y;0,x)  \\
    & \ge
        \frac{K}{S}e^{\eta T}
        +  \e^{(r+\eta)T}
        \e^{-(x+y)/2}
            \frac{y^2}{4}
            \left[
                \frac{1}{2}
                + \frac{xy}{24}
                +  \frac{(xy)^2}{768}
            \right] \\
    & \quad
         -\frac{K}{S} \e^{\eta T}
            \left\{
                \e^{-x/2}
                + \e^{-x/2}
                    \left[
                        \frac{xy}{4}
                        + (xy)^2
                    \right]
            \right\} \\
    & \ge
        \frac{K}{S}e^{\eta T} (1-\e^{-x/2})
        + \e^{(r+\eta)T-(x+y)/2}
            \frac{y^2}{4}
            \left[
                \frac{1}{2}
                + \frac{xy}{24}
                +  \frac{(xy)^2}{768}
            \right] \\
    & \quad
         -\frac{K}{S} \e^{\eta T-x/2}
            \left[
                \frac{xy}{4}
                + (xy)^2
            \right] \\
    & =
        \frac{K}{S}e^{-[\hat{r}-(r+\eta)]T}
        + \rcalll(K,T).
\end{split}
\end{equation}
This proves \eqref{eq:lem:R-asymptotics-1}. Next, we will prove \eqref{eq:lem:R-asymptotics-2}. Put
\begin{equation}
  \lambda = \left[\frac{4\eta K}{\alpha(1-\e^{-\eta T})}\right]^2,
  \quad
  \gamma = \frac{16 \eta^2 SK}{\alpha^2(1-\e^{-\eta T})^2}.
\end{equation}
Then
\begin{equation} \label{eq:gamma-limit}
    xy = \gamma \e^{-(r+\eta)T - \eta T},
    \quad
        \gamma \xrightarrow{\;T\to \infty\;} \frac{16 \eta^2 SK}{\alpha^2},
\end{equation}
and
\begin{equation} \label{eq:lambda-limit}
    y^2 = \lambda \e^{-2(r+\eta)T},
    \quad
    \lambda \xrightarrow{\;T\to \infty\;} \frac{16 \eta^2 K^2}{\alpha^2}.
\end{equation}
Substituting the expressions for $x$ and $y$ into $\rcalll$ gives
\begin{equation}
\begin{split}
    \rcalll(K,T)
         & = \e^{(r+\eta)T-(x+y)/2}
            \frac{y^2}{4}
            \left[
                \frac{1}{2}
                + \frac{xy}{24}
                +  \frac{(xy)^2}{768}
            \right] \\
         & \quad
          -\frac{K}{S} \e^{\eta T-x/2}
            \left[
                \frac{xy}{4}
                + (xy)^2
            \right] \\
         & = \e^{(r+\eta)T-(x+y)/2}
            \frac{\lambda \e^{-2(r+\eta)T}}{4}
            \left[
                \frac{1}{2}
                + \frac{\gamma \e^{-(r+\eta)T-\eta T}}{24}
                +  \frac{\gamma^2 \e^{-2(r+\eta)T-2\eta T}}{768}
            \right] \\
         & \quad
            - \frac{K}{S} \e^{\eta T-x/2}
                \left[
                    \frac{\gamma \e^{-(r+\eta)T - \eta T}}{4}
                    + \gamma^2 \e^{-2(r+\eta)T-2\eta T}
                \right] \\
         & = \e^{-(x+y)/2}
            \frac{\lambda \e^{-(r+\eta)T}}{4}
            \left(
                \frac{1}{2}
                + \frac{\gamma \e^{-(r+\eta)T-\eta T}}{24}
                +  \frac{\gamma^2 \e^{-2(r+\eta)T-2\eta T}}{768}
            \right) \\
         & \quad
            - \frac{K}{S} \e^{-x/2}
                \left(
                    \frac{\gamma \e^{-(r+\eta)T}}{4}
                    + \gamma^2 \e^{-2(r+\eta)T - \eta T}
                \right) \\
         & = \e^{-(x+y)/2}
            \left[
                \frac{\lambda}{8}\e^{-(r+\eta)T}
                + \frac{\lambda\gamma}{96} \e^{-2(r+\eta)T - \eta T}
                + \frac{\lambda\gamma^2 \e^{-3(r+\eta)T-2\eta T}}{3072}
            \right] \\
         & \quad
            - \frac{K}{S} \e^{-x/2}
                \left(
                    \frac{\gamma \e^{-(r+\eta)T}}{4}
                    + \gamma^2 \e^{-2(r+\eta)T - \eta T}
                \right) \\
         & = \e^{-(r+\eta)T}
            \left(
                \frac{\lambda}{8}\e^{-(x+y)/2}
                - \frac{\gamma K}{4S} \e^{-x/2}
            \right) \\
         & \quad
            + \e^{-2(r+\eta)T-\eta T}
            \left(
                \frac{\lambda\gamma}{96} \e^{-(x+y)/2}
                - \frac{\gamma^2 K}{S} \e^{-x/2}
            \right)
            + \O(\e^{-3(r+\eta)T-2\eta T}).
\end{split}
\end{equation}
By \eqref{eq:gamma-limit}, \eqref{eq:lambda-limit}, and the limits that $x,y \xrightarrow{\; T \to \infty\;} 0$, we get
\begin{equation} \label{eq:lambda-gamma-terms-limit-1}
\begin{split}
  \frac{\lambda}{8}\e^{-(x+y)/2}
    - \frac{\gamma K}{4S} \e^{-x/2}
    \xrightarrow{\; T \to \infty\;}
    &
    \frac{1}{8}\cdot \frac{16 \eta^2 K^2}{\alpha^2}
    - \frac{K}{4S} \cdot \frac{16 \eta^2 SK}{\alpha^2} \\
    & = \frac{2\eta^2K^2}{\alpha^2} - \frac{4\eta^2 K^2}{\alpha^2} \\
    & = - \frac{2\eta^2 K^2}{\alpha^2}.
\end{split}
\end{equation}
Also,
\begin{equation} \label{eq:lambda-gamma-terms-limit-2}
\begin{split}
  \frac{\lambda\gamma}{96} \e^{-(x+y)/2}
    - \frac{\gamma^2 K}{S} \e^{-x/2}
    \xrightarrow{\; T \to \infty\;}
    &
     \frac{1}{96} \cdot \frac{16 \eta^2 K^2}{\alpha^2} \cdot \frac{16 \eta^2 SK}{\alpha^2}
     - \frac{K}{S}\left(\frac{16 \eta^2 SK}{\alpha^2}\right)^2  \\
    & = -\frac{760 SK^3\eta^4}{3\alpha^4}.
\end{split}
\end{equation}
So as $T \to \infty$,
\begin{equation} \label{eq:rcalll-large-time-order}
  \rcalll(K,T) = -\frac{2\eta^2 K^2}{\alpha^2} \e^{-(r+\eta)T}
    -\frac{760 SK^3\eta^4}{3\alpha^4} \e^{-2(r+\eta)T-\eta T}
    + \O(\e^{-3(r+\eta)T-2\eta T}).
\end{equation}
This proves \eqref{eq:lem:R-asymptotics-2}.

We will prove \eqref{eq:lem:R-asymptotics-3} in a similar fashion as follows:
\begin{equation} \label{eq:rcallu-large-time-order}
\begin{split}
    \rcallu(K,T)
         &  = \e^{(r+\eta)T-x/2}
            \frac{y^2}{4}
            \left[
                \frac{1}{2}
                + \frac{xy}{24}
                +  (xy)^2
            \right]
            -\frac{K}{S} \e^{\eta T-(x+y)/2}
            \left[
                \frac{xy}{4}
                + \frac{(xy)^2}{64}
            \right] \\
         & = \e^{(r+\eta)T-x/2}
            \frac{\lambda \e^{-2(r+\eta)T}}{4}
            \left[
                \frac{1}{2}
                + \frac{\gamma \e^{-(r+\eta)T-\eta T}}{24}
                +  \gamma^2 \e^{-2(r+\eta)T-2\eta T}
            \right] \\
         & \quad
            - \frac{K}{S} \e^{\eta T-(x+y)/2}
                \left[
                    \frac{\gamma \e^{-(r+\eta)T - \eta T}}{4}
                    + \frac{\gamma^2 \e^{-2(r+\eta)T-2\eta T}}{64}
                \right] \\
         & = \e^{-x/2}
            \frac{\lambda \e^{-(r+\eta)T}}{4}
            \left(
                \frac{1}{2}
                + \frac{\gamma \e^{-(r+\eta)T-\eta T}}{24}
                +  \gamma^2 \e^{-2(r+\eta)T-2\eta T}
            \right) \\
         & \quad
            - \frac{K}{S} \e^{-(x+y)/2}
                \left(
                    \frac{\gamma \e^{-(r+\eta)T}}{4}
                    + \frac{\gamma^2 \e^{-2(r+\eta)T-\eta T}}{64}
                \right) \\
         & = \e^{-x/2}
            \left[
                \frac{\lambda}{8}\e^{-(r+\eta)T}
                + \frac{\lambda\gamma}{96} \e^{-2(r+\eta)T - \eta T}
                + \lambda\gamma^2 \e^{-3(r+\eta)T-2\eta T}
            \right] \\
         & \quad
            - \frac{K}{S} \e^{-(x+y)/2}
                \left(
                    \frac{\gamma \e^{-(r+\eta)T}}{4}
                    + \frac{\gamma^2 \e^{-2(r+\eta)T-\eta T}}{64}
                \right) \\
         & = \e^{-(r+\eta)T}
            \left(
                \frac{\lambda}{8}\e^{-x/2}
                - \frac{\lambda K}{4S} \e^{-(x+y)/2}
            \right) \\
         & \quad
            \e^{-2(r+\eta)T-\eta T}
            \left(
                \frac{\lambda\gamma}{96}\e^{-x/2}
                - \frac{K\gamma^2}{64S}\e^{-(x+y)/2}
            \right)
            + \O(\e^{-3(r+\eta)T-2\eta T}),  \\
         & =  -\frac{2\eta^2 K^2}{\alpha^2} \e^{-(r+\eta)T}
            -\frac{4 SK^3\eta^4}{3\alpha^4} \e^{-2(r+\eta)T-\eta T}
            + \O(\e^{-3(r+\eta)T-2\eta T}),
\end{split}
\end{equation}
as $T \to \infty$. In the last equality above we have used the limits in \eqref{eq:lambda-gamma-terms-limit-1} and \eqref{eq:lambda-gamma-terms-limit-2}.

We now prove \eqref{eq:lem:R-asymptotics-4}, the asymptotics of $\e^{(r+\eta)T}\rcbs$. By \eqref{eq:d2-large-time-limits}, for any fixed $K$ and $0< v <\sqrt{2(r+\eta)}$, $d_1(K,T;v), d_2(K,T;v) \xrightarrow{\;T \to \infty\;} \infty$. Hence, for $0< v <\sqrt{2(r+\eta)}$ we have, as $T \to \infty$,
\begin{equation}
\begin{split}
    \e^{(r+\eta)T}\rcbs(K,T;v)
        & = \e^{(r+\eta)T}
            \left[
            \frac{K}{S} \e^{-\hat{r}T}
            + \tilde{N}(d_1)
            - \frac{K}{S}\e^{-\hat{r}T} \tilde{N}(d_2)
            \right] \\
        & = \frac{K}{S} \e^{-[\hat{r}-(r+\eta)]T}
            + \e^{(r+\eta)T}\tilde{N}(d_1)
            + \e^{-[\hat{r}-(r+\eta)]T}\frac{K}{S}\tilde{N}(d_2) \\
        & = \frac{K}{S} \e^{-[\hat{r}-(r+\eta)]T}
            + \e^{(r+\eta)T} \frac{n(d_1)}{d_1}[ 1 + \O(d_1^{-2})] \\
        & \quad
            + \frac{K}{S}\e^{-[\hat{r}-(r+\eta)]T} \frac{n(d_2)}{d_2}[ 1 + \O(d_2^{-2})]. \\
%        & = \frac{K}{S} \e^{-[\hat{r}-(r+\eta)]T}
%            + \frac{K}{S}\e^{-[\hat{r}-(r+\eta)]T} n(d_1)[ 1- \O(d_1^{-2})] \\
%        & \quad
%            + \frac{K}{S}\e^{-[\hat{r}-(r+\eta)]T} n(d_2)[ 1- \O(d_2^{-2})].  \\
%        & = \frac{K}{S} \e^{-[\hat{r}-(r+\eta)]T}
%            \left\{
%                1 + \frac{1}{\sqrt{2\pi} d_1} \e^{-d_2^2/2} [1-d_1^{-2} + \O(d_1^{-6})]
%            \right. \\
%        & \quad
%            \left.
%                - \frac{1}{\sqrt{2\pi} d_2} \e^{-d_2^2/2} [1-d_2^{-2} + \O(d_2^{-6})]
%            \right\} \\
%        & = \frac{K}{S} \e^{-[\hat{r}-(r+\eta)]T}
%            + \frac{K}{S} \e^{-[\hat{r}-(r+\eta)]T}
%              \left[
%                \frac{1}{\sqrt{2\pi} d_1} \e^{-d_2^2/2}
%                - \frac{1}{\sqrt{2\pi} d_2} \e^{-d_2^2/2}
%              \right] \\
%        & \quad
%            + \frac{K}{S} \e^{-[\hat{r}-(r+\eta)]T}
%                \left\{
%                    \frac{1}{\sqrt{2\pi} d_1} \e^{-d_2^2/2} \O(d_1^{-2})
%                    + \frac{1}{\sqrt{2\pi} d_2} \e^{-d_2^2/2}\O(d_2^{-2})
%                \right\} \\
%        & = \frac{K}{S} \e^{-[\hat{r}-(r+\eta)]T}
%            + \frac{K}{S} \e^{-[\hat{r}-(r+\eta)]T} \frac{-v\sqrt{T}}{\sqrt{2\pi} d_2(d_2+v\sqrt{T})} \e^{-d_2^2/2} \\
%        & \quad
%            + \frac{K}{S} \e^{-[\hat{r}-(r+\eta)]T}
%                \left\{
%                    \frac{1}{\sqrt{2\pi} d_1} \e^{-d_2^2/2} \O(d_1^{-2})
%                    + \frac{1}{\sqrt{2\pi} d_2} \e^{-d_2^2/2}\O(d_2^{-2})
%                \right\}
\end{split}
\end{equation}
Applying the identity $Sn(d_1) = K \e^{-\hat{r}T} n(d_2)$ to the last equation gives
\begin{equation}
\begin{split}
    \e^{(r+\eta)T}\rcbs
        & = \frac{K}{S} \e^{-[\hat{r}-(r+\eta)]T}
            + \frac{K}{S}\e^{-[\hat{r}-(r+\eta)]T} \frac{n(d_2)}{d_1}[ 1 + \O(d_1^{-2})] \\
        & \quad
            + \frac{K}{S}\e^{-[\hat{r}-(r+\eta)]T} \frac{n(d_2)}{d_2}[ 1 + \O(d_2^{-2})]  \\
        & = \frac{K}{S} \e^{-[\hat{r}-(r+\eta)]T}
            + \frac{K}{S} \e^{-[\hat{r}-(r+\eta)]T}
            \left(
                \frac{1}{d_1} - \frac{1}{d_2}
            \right)
            n(d_2) \\
        & \quad
            + \frac{K}{S} \e^{-[\hat{r}-(r+\eta)]T} n(d_2) \O(d_1^{-3} + d_2^{-3}).
\end{split}
\end{equation}
Notice that
\begin{equation}
  \frac{1}{d_1} - \frac{1}{d_2}
  = \frac{d_2 - d_1}{d_1 d_2}
  = \frac{-v\sqrt{T}}{(d_2+v\sqrt{T})d_2}.
\end{equation}
Moreover, $d_1= d_2 + v\sqrt{T}$ implies that $\O(d_1^{-3} + d_2^{-3})= \O(d_2^{-3})$. Consequently
\begin{equation}
\begin{split}
    \e^{(r+\eta)T}\rcbs
        & = \frac{K}{S} \e^{-[\hat{r}-(r+\eta)]T}
            - \frac{K}{S} \e^{-[\hat{r}-(r+\eta)]T}
            \frac{v\sqrt{T}}{(d_2+v\sqrt{T})d_2}
            n(d_2) \\
        & \quad
            + \frac{K}{S} \e^{-[\hat{r}-(r+\eta)]T} n(d_2) \O(d_2^{-3}).
\end{split}
\end{equation}
Under the MMM, $\varphi = \O(e^{\eta T})$. Coupling this property with \eqref{eq:exp-short-expansion} gives
\begin{equation}
\begin{split}
    \e^{(r+\eta)T}\rcbs
        & = \frac{K}{S} \e^{-[\hat{r}-(r+\eta)]T} \\
        & \quad
            - \frac{K}{S}
            \left(
                \frac{2\eta S}{\alpha} + \frac{e_1}{\varphi} + \frac{e_2}{\varphi^2}
            + \O(\varphi^{-3})
            \right)
            \frac{v\sqrt{T}}{(d_2+v\sqrt{T})d_2}
            n(d_2) \\
        & \quad
            + \frac{K}{S}
            \left(
                \frac{2\eta S}{\alpha} + \frac{e_1}{\varphi} + \frac{e_2}{\varphi^2}
            + \O(\varphi^{-3})
            \right)
            n(d_2) \O(d_2^{-3}) \\
        & = \frac{K}{S} \e^{-[\hat{r}-(r+\eta)]T}
            - \frac{2K\eta}{\alpha}  \frac{v\sqrt{T}}{(d_2+v\sqrt{T})d_2} n(d_2)
            + \frac{2K\eta}{\alpha} n(d_2) \O(d_2^{-3}) \\
        & = \frac{K}{S} \e^{-[\hat{r}-(r+\eta)]T}
            - \frac{2K\eta}{\alpha}  \frac{v\sqrt{T}}{(d_2+v\sqrt{T})d_2\sqrt{2\pi}} \e^{-d_2^2/2}
            + \frac{2K\eta}{\alpha} n(d_2) \O(d_2^{-3}). \\
\end{split}
\end{equation}
Note that for any fixed $v \in (0,\sqrt{2(r+\eta)})$, $d_2(K,T;v) = \O(\sqrt{T})$, where the order also depends on $K,S, v$. This proves \eqref{eq:lem:R-asymptotics-4}, and the proof of the lemma is complete.
\end{proof}

\subsection{Proof of Lemma \ref{lem:d2-v}} \label{subsec:proof:lem:d2-v}

\begin{proof}[Proof of Lemma \ref{lem:d2-v}]
For ease of notation we set $m=\ln (S/K)$. Then by \eqref{eq:bs-N-d}, for $v>0$,
\begin{equation}
  d_2(K,T;v)
          = \frac{m + (\hat{r} - v^2/2)T}{v\sqrt{T}}
          = \frac{m}{v\sqrt{T}} + \frac{\hat{r} - v^2/2}{v}\sqrt{T}.
\end{equation}
This gives
\begin{equation}
\begin{split}
    d_2^2(K,T;v)
%    & = \left(\frac{m}{v\sqrt{T}} + \beta_2 \sqrt{T}\right)^2 \\
%    & = \frac{m^2}{v^2 T} + \frac{2m}{v}\beta_2 + \beta_2^2 T \\
    & = \frac{m^2}{v^2 T}
        + \frac{2m(\hat{r}-v^2/2)}{v^2}
        +\frac{(\hat{r}-v^2/2)^2}{v^2}T. \\
\end{split}
\end{equation}
Recall that in \eqref{eq:rhat-star-definition} we define $\hat{r}_* = r+\eta$. Then
\begin{equation} \label{eq:proof:lem:d2-v:1}
\begin{split}
    \frac{1}{2}d_2^2(K,T;v) - (r+\eta) T
    & =  \frac{1}{2}d_2^2(K,T;v) - \hat{r}_* T  \\
%    & = \frac{m^2}{2v^2 T}
%        + \frac{m}{v}\cdot \frac{\hat{r}-v^2/2}{v}
%        +\frac{(\hat{r}-v^2/2)^2}{2v^2}T
%        - \hat{r}_* T \\
    & = \frac{m^2}{2v^2 T}
        + \frac{m(\hat{r}-v^2/2)}{v^2}
        +\left[
            \frac{(\hat{r}-v^2/2)^2}{2v^2}
            - \hat{r}_*
         \right] T \\
    & = \frac{m^2}{2v^2 T}
        + \frac{m(\hat{r}-v^2/2)}{v^2}
        +\left[
            \frac{(\hat{r}-v^2/2)^2}{2v^2}
            - \hat{r}_*
         \right] T \\
    & = \frac{m^2}{2v^2 T}
        + \frac{m(\hat{r}-v^2/2)}{v^2}
        +
            \frac{
                Q(K,T;v)
                }{
                    8 v^2
               } T,    \\
\end{split}
\end{equation}
where
\begin{equation}
  Q(K,T;v) = v^4 -(4\hat{r} + 8 \hat{r}_*)v^2 + 4 \hat{r}^2.
\end{equation}
For fixed $K,v>0$,
\begin{equation} \label{eq:proof:lem:d2-v:2}
    \frac{m^2}{2v^2 T} \xrightarrow{\; T \to \infty\;} 0,
    \quad
    \frac{m(\hat{r}-v^2/2)}{v^2}
        \xrightarrow{\; T \to \infty\;} \frac{m(\hat{r}_*-v^2/2)}{v^2},
\end{equation}
since $\hat{r} \xrightarrow{\; T \to \infty\;} \hat{r}_*$ by Lemma \ref{lem:rhat-at-infty} (1). Notice that the second of the two limits above is a constant depending on $K,v, \alpha, \eta$ (and $S$). On the other hand, we also have
\begin{equation}
  Q_\infty(v) \equiv \lim_{T \to \infty} Q(K,T;v) = v^4 - 12 \hat{r}_* v^2 + 4 \hat{r}_*^2;
\end{equation}
and it is obvious that as $T$ tends to infinity the term $Q_\infty T/(8v^2)$ ultimately determines the behavior of $d_2^2/2 - \hat{r}_*T$. For arbitrary $v$, the four roots of the quartic polynomial $Q_\infty$ are
\begin{equation}
  \sqrt{2(3+2\sqrt{2})\hat{r}_*}, \quad
  -\sqrt{2(3+2\sqrt{2})\hat{r}_*}, \quad
  \sqrt{2(3-2\sqrt{2})\hat{r}_*}, \quad
  -\sqrt{2(3-2\sqrt{2})\hat{r}_*}. \quad
\end{equation}
However, for $v \in (0,\sqrt{2\hat{r}_*})$, the only possible root of $Q_\infty$ is $v_* = \sqrt{2(3-2\sqrt{2})\hat{r}_*}$. Now
\begin{equation}
  Q_\infty^\prime(v) \equiv \frac{\d Q_\infty(v)}{\d v}= 4 v^3 - 24 \hat{r}_* v = 4v(v^2 - 6 \hat{r}_*).
\end{equation}
Noting that $v^2 - 6 \hat{r}_*<0$ for all $v \in (0,\sqrt{2\hat{r}_*})$, we have $Q_\infty^\prime(v) <0$
for all $v \in (0,\sqrt{2\hat{r}_*})$. This implies that $Q_\infty$ is strictly decreasing in $(0,\sqrt{2\hat{r}_*})$,
with $Q_\infty(v_*) =0$. In other words, $Q_\infty>0$ in $(0,v_*)$ and $Q_\infty<0$ in $(v_*, \sqrt{2\hat{r}_*})$.
Consequently, for sufficiently large $T$,
\begin{equation} \label{eq:proof:lem:d2-v:3}
\left\{
  \begin{array}{ll}
    \frac{Q(K,T;v)}{8 v^2} T > \zeta_1 T, & \hbox{for any fixed $v \in (0,v_*)$,} \\
        & \\
    \frac{Q(K,T;v)}{8 v^2} T < - \zeta_2 T, & \hbox{for any fixed $v \in (v_*, \sqrt{2(r+\eta)})$,}
  \end{array}
\right.
\end{equation}
where $\zeta_1$ and $\zeta_2$ are some strictly positive constants dependant on $K,S,v,r,\eta$.
By combining \eqref{eq:proof:lem:d2-v:1}, \eqref{eq:proof:lem:d2-v:2}, and \eqref{eq:proof:lem:d2-v:3}, we get, for sufficiently large $T$,
\begin{equation}
\left\{
  \begin{array}{ll}
    \frac{1}{2}d_2^2(K,T;v) - (r+\eta)T > c_1 T, & \hbox{if $v \in (0,v_*)$,} \\
%        & \\
    \frac{1}{2}d_2^2(K,T;v) - (r+\eta)T < - c_2 T, & \hbox{if $v \in (v_*, \sqrt{2(r+\eta)})$,}
  \end{array}
\right.
\end{equation}
where $c_1$ and $c_2$ are some strictly positive constants dependant on $K,S,v,r,\eta$. And the proof is complete.
\end{proof}

\section{Appendix E: Remarks on the results of Gao and Lee \cite{gao-lee-2011}} \label{sec:appendix-e}

%Recall that the forward price of an asset with a continuously compounded dividend yield rate $\delta$ is given by
%\begin{equation}
%  f_t = \frac{S_t \e^{-\delta(T-t)}}{B(t,T)},
%\end{equation}
%where $B(t,T)$ is the time-$t$ price of a zero-coupon bond paying \$1 at maturity $T$; see e.g. Epps \cite[16.7]{epps-09}

\subsection{Notation}

Gao and Lee \cite[(3.1)]{gao-lee-2011} defines $C_{-}:[0,\infty)\times (0,\infty) \to (0,\infty)$ as
\begin{equation}
  C_{-}(k,V)\equiv N(-k/V + V/2) - \e^{k} N(-k/V - V/2).
\end{equation}
This formula can be easily derived from \eqref{eq:bs-2} (or \eqref{eq:bs-1}). Put
\begin{equation}
\left\{
\begin{split}
    f & = S\e^{\hat{r}T} \quad (\mbox{forward price}), \\
    k & =  \ln (K/f)= \ln (K/S)  + \hat{r}T \quad (\mbox{log moneyness}), \\
    V & =v\sqrt{T} \quad (\mbox{time-scaled volatility}).
\end{split}
\right.
\end{equation}
Then \eqref{eq:bs-2} becomes
\begin{equation} \label{eq:bs-gao-lee-1}
  \cbs(k,V;T)
    = S N(d_1) - K \e^{-\hat{r}T} N(d_2),
\end{equation}
where
\begin{equation} \label{eq:bs-gao-lee-2}
\left\{
  \begin{split}
    N(d) & = \int_{-\infty}^d n(\vartheta)\, \d \vartheta,
        \quad n(\vartheta) = \frac{1}{\sqrt{2\pi}} \e^{-\vartheta^2/2}, \\
    d_1
        & = \frac{-k + V^2/2}{V} = -k/V + V/2,\\
    d_2
        & = \frac{k - V^2/2}{V} = k/V + V/2.\\
  \end{split}
\right.
\end{equation}
%$\cbsg$ of \eqref{eq:bs-1}. Put
%\begin{equation}
%\left\{
%\begin{split}
%  \kappa & = 0, \\
%    f & = S/\zg(T) \quad (\mbox{forward price}), \\
%    k & =  \ln (K/f)= \ln (K/S) + \ln \zg(T) \quad (\mbox{log moneyness}), \\
%    V & =v\sqrt{T} \quad (\mbox{time-scaled volatility}).
%\end{split}
%\right.
%\end{equation}
%Then \eqref{eq:bs-1} becomes
%\begin{equation} \label{eq:bs-gao-lee-1}
%  \cbsg(k,V;T)
%    = S N(d_1) - K \zg(T) N(d_2),
%\end{equation}
%where
%\begin{equation} \label{eq:bs-gao-lee-2}
%\left\{
%  \begin{split}
%    N(d) & = \int_{-\infty}^d n(\vartheta)\, \d \vartheta,
%        \quad n(\vartheta) = \frac{1}{\sqrt{2\pi}} \e^{-\vartheta^2/2}, \\
%    d_1
%        & = \frac{-k + V^2/2}{V},\\
%    d_2
%        & = \frac{k - V^2/2}{V}.\\
%  \end{split}
%\right.
%\end{equation}
So in our notation their $C_{-}$ \cite[(3.1)]{gao-lee-2011} is
\begin{equation}
  C_{-}(k,V) = \cbs(k,V;T)/S,
\end{equation}
and their $C_{+}$ \cite[(3.7)]{gao-lee-2011} is
\begin{equation}
  C_{+}(k,V) = 1 - \cbs(k,V;T)/S.
\end{equation}
They \cite[(4.1), (4.2)]{gao-lee-2011} also define
\begin{equation}
  L_+ = -\ln C_{+}, \qquad L_{-} = - \ln C_{-}.
\end{equation}

\subsection{The small time asymptotics}

We will show that the small time result of \cite[Remark 7.4]{gao-lee-2011} do not lead to ours. The formula $V^2 \sim k^2/(2L)\equiv k^2/(2L_{-})$ in \cite[Remark 7.4]{gao-lee-2011} implies that, in our notation,
\begin{equation}
  \iv(K,T)\sqrt{T} \sim \frac{\abs{\ln(K/S) + \hat{r}T}}{\sqrt{-2\ln (\call(K,T)/S)}}
    \qquad (T \to 0).
%              = \frac{\abs{\ln(K/S)}}{\sqrt{-2\ln \call(K,T) - 2\ln S}}
\end{equation}
Lemma \ref{lem:atm-prelim-1} (1) gives $x \xrightarrow{\; T \to 0\;} \infty$; together with \eqref{eq:rhat-1}, this gives
\begin{equation}
  \hat{r}T = r T - \ln \left(1 - \e^{-x/2} \right)
            \xrightarrow{\; T \to 0\;} 0.
\end{equation}
Hence, in the context of the MMM, Remark 7.4 of \cite{gao-lee-2011} suggests that
\begin{equation}
  \iv(K,T)\sqrt{T} \sim \frac{\abs{\ln(K/S)}}{\sqrt{-2\ln \call(K,T) - 2\ln S}}
        \qquad (T \to 0).
\end{equation}
Rearranging this asymptotic formula then gives
\begin{equation}
  \iv(K,T)\sim \frac{\abs{\ln(K/S)}}{\sqrt{-2T \ln \call(K,T)}} \qquad (T \to 0).
\end{equation}
This is different from the Roper--Rutkowski formula \cite[Theorem 5.1]{roper-rutkowski-09} and it does not lead to our extended version \eqref{eq:rr-formula-extended}.

\subsection{The large time asymptotics}

We now explain why the large time results of \cite[Corollary 7.8]{gao-lee-2011} do not apply in our model either.  Their Case $(+)$ requires $k/L_{+}$ to have a limit in $[0,\infty)$. Translating this condition into our setting gives
\begin{equation} \label{eq:limit-k-l-plus-1}
\begin{split}
  \frac{k}{L_{+}}
    & = \frac{\ln(K/S) - \hat{r}T}{-\ln (1-\cbs/S)}
      = \frac{\ln(K/S) - \hat{r}T}{-\ln (1-\call/S)}
      = \frac{\ln(K/S) - \hat{r}T}{-\ln \mathcal{R}},
\end{split}
\end{equation}
where the last equality follows from \eqref{eq:cbs-call-R}. By Lemma \ref{lem:varphi-x-y-at-infty} (2) and (3), $x,y \xrightarrow{\;T\to\infty\;} 0$. Hence it can be seen from \eqref{eq:rcbs-rcall-def} that $\mathcal{R}\xrightarrow{\;T\to\infty\;} 0$. Now Lemma \ref{lem:rhat-at-infty} (1) gives $\hat{r}  \xrightarrow{\;T\to\infty\;} r+\eta$ and $\hat{r}T  \xrightarrow{\;T\to\infty\;} \infty$. So
\begin{equation} \label{eq:limit-k-l-plus-2}
  \frac{\ln(K/S) - \hat{r}T}{-\ln \mathcal{R}}
    \sim \frac{\hat{r}T}{\ln \mathcal{R}}.
\end{equation}
For ease of notation put
\begin{equation}
\left\{
\begin{split}
  A & = \frac{K}{S} \e^{-\hat{r}T} + \e^{-(r+\eta)T} \underline{\mathcal{R}}, \\
  B & = \frac{K}{S} \e^{-\hat{r}T} + \e^{-(r+\eta)T} \overline{\mathcal{R}}. \\
\end{split}
\right.
\end{equation}
By Lemma \ref{lem:R-asymptotics}, for sufficiently large $T$,
\begin{equation}
  0< A \le \mathcal{R} \le B < 1
    \quad \Longrightarrow \quad
  \frac{\hat{r}T}{\ln B} \le \frac{\hat{r}T}{\ln \mathcal{R}} \le \frac{\hat{r}T}{\ln A}.
\end{equation}
Further, by \eqref{eq:rcalll-large-time-order} and \eqref{eq:rcallu-large-time-order},
\begin{equation}
  \ln A \sim \ln B \sim \ln \left(\frac{K}{S} \e^{-\hat{r}T}\right)  \qquad (T \to \infty).
\end{equation}
Therefore, as $T \to \infty$,
\begin{equation*}
  \frac{\hat{r}T}{\ln A} \sim \frac{\hat{r}T}{\ln(K/S) - \hat{r}T} \xrightarrow{\;T \to \infty \;} -1.
\end{equation*}
Similarly, $\hat{r}T/\ln B \xrightarrow{\;T \to \infty \;} -1.$ Consequently,
$\hat{r}T/\ln \mathcal{R} \xrightarrow{\;T \to \infty \;} -1.$ This implies that in the MMM,
$ k/L_{+} \xrightarrow{\;T \to \infty \;} -1 \not \in [0,\infty).$
%
%\begin{equation}
%  \frac{k}{L_{+}} \xrightarrow{\;T \to \infty \;} -1 \not \in [0,\infty).
%\end{equation}
Hence the condition for Case $(+)$ \cite[Collary 7.8]{gao-lee-2011} is not satisfied in the MMM. We now check the condition for Case $(-)$ \cite[Collary 7.8]{gao-lee-2011}, which requires $k/L_{-}$ to have a limit in $(0,\infty)$. In the MMM,
\begin{equation} \label{eq:limit-k-l-minus-1}
\begin{split}
  \frac{k}{L_{-}}
    & = \frac{\ln(K/S) - \hat{r}T}{-\ln (\cbs/S)}
      = \frac{\ln(K/S) - \hat{r}T}{-\ln (\call/S)}
      = \frac{\ln(K/S) - \hat{r}T}{-\ln( 1 - \mathcal{R})}.
\end{split}
\end{equation}
Invoking \eqref{eq:rcalll-large-time-order} and \eqref{eq:rcallu-large-time-order} again, we have $A, \mathcal{R}, B \xrightarrow{\;T \to \infty \;} 0$; and by Now Lemma \ref{lem:rhat-at-infty} (1), $\hat{r}T  \xrightarrow{\;T\to\infty\;} \infty$. These limits imply $\frac{k}{L_{-}} \xrightarrow{\;T \to \infty \;} \infty \not \in [0,\infty).$
%\begin{equation}
%  \frac{k}{L_{-}} \xrightarrow{\;T \to \infty \;} \infty \not \in [0,\infty).
%\end{equation}
Hence the condition for Case $(-)$ \cite[Collary 7.8]{gao-lee-2011} is also not satisfied in the MMM.

%\vspace{1cm}

%\mbox{}
%
%\clearpage

%%%%%%%%%%%%%%====================paper===================end

%%====================== 23/06/2009 ======================================

%% for doi
%\providecommand{\doi}[1]{\discretionary{}{}{}\href{http://dx.doi.org/#1}{doi:#1}}     % from Wiki

\providecommand{\doi}[1]{\discretionary{}{}{}\href{http://dx.doi.org/#1}{DOI:#1}}     % from Wiki, modified: doi -> DOI

%\setcitestyle{numbers}

%\section{Conclusion}

%\bibliographystyle{amsplain}
%\bibliography{uts-bib-2009-v01}

%\bibliographystyle{plain}   % using "=url{}" and "=doi{}" in bib
\bibliographystyle{plainnat}   % using "=url{}" and "=doi{}" in bib
\bibliography{uts-bib-2009-v04}

%\begin{thebibliography}{10}
%
%\bibitem{ph} E. Platen and D. Heath, \emph{A benchmark
%approach to quantitative finance}, Springer, 2006.
%
%\bibitem{haynam-73} Haynam et al. , \emph{Selected tables in mathematical statistics}, vol. 1, 1973.
%
%
%\bibitem{siegel-79} A. F. Siegel, \emph{The noncentral
%chi-squared distribution with zero degrees of freedom and testing
%for uniformity}, Biometrika, vol. 66, no. 2, pp. 381--386, 1979.
%
%
%
%%\bibitem {A} T. Aoki, \textit{Calcul exponentiel des op\'erateurs
%%microdifferentiels d'ordre infini.} I, Ann. Inst. Fourier (Grenoble)
%%\textbf{33} (1983), 227--250.
%%
%%\bibitem {B} R. Brown, \textit{On a conjecture of Dirichlet},
%%Amer. Math. Soc., Providence, RI, 1993.
%%
%%\bibitem {D} R. A. DeVore, \textit{Approximation of functions},
%%Proc. Sympos. Appl. Math., vol. 36,
%%Amer. Math. Soc., Providence, RI, 1986, pp. 34--56.
%
%\end{thebibliography}

\end{document}